\documentclass[UKenglish, pdfa]{ourflowspaper}
\usepackage{tikz}
\usepackage{float}
\usepackage{doi}
\usetikzlibrary{arrows.meta,positioning}
\usetikzlibrary{shapes.multipart, calc}
\usetikzlibrary{decorations.markings}
\usetikzlibrary{arrows}
\usetikzlibrary {decorations.pathmorphing}
\usepackage[linesnumbered, vlined, ruled]{algorithm2e}
\usepackage{adjustbox}
\usepackage{subcaption}
\usepackage{enumitem}
\usetikzlibrary{cd}
\usepackage{amssymb}
\usepackage{xparse}
\usepackage{hyperref}

\definecolor{forestgreen}{HTML}{00A64F}
\definecolor{lavender}{HTML}{EC008C}%
\definecolor{purple}{rgb}{0.6, 0.2, 0.8}

\theoremstyle{plain}

\usepackage{listings}

\usepackage{mathtools}

\makeatletter
\renewcommand{\p@subfigure}{} %
\makeatother

\lstdefinestyle{inline}{%
    basicstyle=\ttfamily%
}

\usepackage{stmaryrd}

\usepackage[most]{tcolorbox}

\lstdefinelanguage{isabelle}{
    morekeywords={record,type_synonym,definition,partial,fun,primrec,function,where,lemma,theorem,unfolding,by,shows,assumes,and,datatype,using,abbreviation
,moreover,have,hence,thus,qed,proof,ultimately,show,next,locale,fixes,tailrec,domintros,defines,
for,context,inductive,begin,end,locale,fixes,record,type_synonym,definition,fun,function,primrec,where,lemma,theorem,unfolding,by,
shows,assumes,and,datatype,using,abbreviation,moreover,have,hence,thus,qed,proof,ultimately,show,next, thm,corollary,\ in\ ,if,case,\ of,let,else,then,interpretation,
global_interpretation}
, sensitive=true
    , showstringspaces=true
    , framerule=0pt
    , xleftmargin=2em
    , numbers=left
    , numberstyle=\ttfamily\small
    , firstnumber=1
    , stepnumber=2
    , basicstyle=\ttfamily\small
    , backgroundcolor = \color{white}
    , keywordstyle = {\color{blue}}
    , breaklines=true
    , showspaces=false
    , morecomment=[s]{(*}{*)}
    , commentstyle=\color{gray}
    , morestring=[b]"
    , columns=fullflexible
    , escapeinside={[*@}{@*]}
    , literate={\\<times>}{{$\times$}}{1} {\\<equiv>}{{$\equiv$}}{1} {≡}{{$\equiv$}}{1} {\\<forall>}{{$\forall$}}{1}
    {)))}{{\upshape )))}}{1} {))}{{\upshape ))}}{1} {\ )}{{\upshape )}}{1} {)}{{\upshape )}}{1} {\}}{{\upshape $\rbrace$}}{1}
    {\{}{{\upshape $\lbrace$}}{1}
    {\\<^sup>+}{{$^+$}}{1}  {\\<^sup>-}{{$^-$}}{1} {\\<Delta>}{{$\Delta$}}{1}  {\\<^sub>F}{{$_F$}}{1} {\\<^sub>A}{{$_A$}}{1} {\\<^sub>u}{{$_u$}}{1} {\\<^sub>f}{{$_f$}}{1} {\ }{\ }{1}
    {\\<^bsub>f\\<^esub>}{{$_\mathtt{f}$}}{1}  {\\<Sum>}{{$\sum$}}{1} {\\<uu>}{{$\mathfrak{u}$}}{1}
      {∀}{{$\forall$}}{1} {\\<exists>}{{$\exists$}}{1} {\\<and>}{{$\land$}}{1} {∧}{{$\land$}}{1}
        {\\<in>}{{$\in$}}{1} {\\<Rightarrow>}{{$\Rightarrow$}}{1} {\\<lambda>}{{$\lambda$}}{1} {::}{{$::$}}{1}
        {\\<subset>}{{$\subset$}}{1}  {\\<supset>}{{$\supset$}}{1} {\\<subseteq>}{{$\subseteq$}}{1} {\\<^sub>m}{{$_m$}}{1} {\\<^sub>C}{{$_C$}}{1} {\\<^sub>B}{{$_B$}}{1} {\\<longleftrightarrow>}{{$\longleftrightarrow$}}{3} {⟷}{{$\longleftrightarrow$}}{3}
        {\\<pi>}{{ $\pi$ }}{1} {\\<delta>}{{$\delta$}}{1} {\\<lbrakk>}{{$\llbracket$}}{1} {\\<rbrakk>}{{$\rrbracket$}}{1}
        {\\<Longrightarrow>}{{$\Longrightarrow$}}{3} {⟶}{{$\longrightarrow$}}{3}
        {\\<not>}{{$\lnot$}}{1} {\\<le>}{\upshape {$\le$}}{1} {\\<rightharpoonup>}{{$\rightharpoonup$}}{2}
        {\\<^sub>\\<V>}{{$_{\mathcal V}$}}{1} {\\<lparr>}{{$\llparenthesis$}}{1} {\\<rparr>}{{$\rrparenthesis$}}{1}
        {\\<leftarrow>}{{$\leftarrow$}}{1} {\\<^sub>\\<O>}{{$_{\mathcal O}$}}{1} {\\<^sub>I}{{$_{\texttt{I}}$}}{1}
        {\\<^sub>G}{{$_{\texttt{G}}$}}{2} {\\<phi>}{{$\varphi$}}{1} {\\<Phi>}{{$\Phi$}}{1} {\\<psi>}{{$\psi$}}{1} {\\<Psi>}{{$\Psi$}}{1}
        {\\<^sub>S}{{$_{\texttt S}$}}{1} {\\<inverse>}{{$^{-1}$}}{1} {\\<^sub>O}{{$_O$}}{1} {\\<^bold>\\<And>}{{$\bm\bigwedge$}}{1}
        {\\<^bold>\\<or>}{{$\bm\lor$}}{1} {\\<^sub>G}{{$_{\texttt G}$}}{1} {\\<Pi>}{{$\Pi$}}{1} {\\<^sub>I}{{$_{\texttt I}$}}{1} {\\<noteq>}{{$\neq$}}{1} {≠}{{$\neq$}}{1}
        {\\<bottom>}{{$\bot$}}{1} {\\<^sub>+}{{$_\texttt +$}}{1} {\\<^bold>\\<and>}{{$\bm\land$}}{1} {\\<^bold>\\<not>}{{$\bm\lnot$}}{1}
        {\\<^sub>1}{{$_1$}}{1} {\\<^sub>2}{{$_2$}}{1} {\\<bar>}{{$|$}}{1} {\\<A>}{{$\mathcal A$}}{1} {\\<Turnstile>}{{$\models$}}{2} {\\<^sub>\\<forall>}{{$_\forall$}}{1}
        {\\<^sub>0}{{$_0$}}{1} {\\<tau>}{{$\tau$}}{1}  {\\<^sub>\\<Omega>}{{$_\Omega$}}{1} {\\<^sub>V}{{$_V$}}{1} {\\<^bold>\\<Or>}{{$\bm\bigvee$}}{1}
        {\\<^sub>P}{{$_\texttt P$}}{1} {\\<^sub>X}{{$_\texttt X$}}{1} {\\<longrightarrow>}{{$\longrightarrow$}}{2} {\\<or>}{{$\lor$}}{1} {∨}{{$\lor$}}{1} {\\<^sub>\\<pi>}{{$_\pi$}}{1}
        {\\<^sub>s}{{$_s$}}{1} {\\<^sub>t}{{$_t$}}{1} {\\<^sub>a}{{$_a$}}{1} {\\<^sub>r}{{$_r$}}{1} {\\<^sub>t}{{$_t$}}{1} {\\<^sub>e}{{$_e$}}{1} {\\<^sub>n}{{$_n$}}{1} {\\<^sub>d}{{$_d$}}{1} {\\<^sub>i}{{$_i$}}{1} {\\<^sub>v}{{$_v$}}{1} {\\<^sub>j}{{$_j$}}{1} {\\<^sub>b}{{$_b$}}{1} {\\<inter>}{{$\cap$}}{1} {\\<union>}{{$\cup$}}{1} {\\<Union>}{{ $\bigcup$ }}{1} {\\<^sup>c\\<TTurnstile>\\<^sub>=}{{${}^c\models_=$}}{1}
        {\\<open>}{{<}}{1} {\\<close>}{{>}}{1} {\\<langle>}{{$\langle$}}{1} {\\<rangle>}{{$\rangle$}}{1} {\\<ge>}{{\upshape $\ge$}}{1} {\\<^sup>-\\<^sup>1\\<^sub>C}{{$^{\texttt{-1}}_\texttt{C}$}}{2} {\\<^sup>+\\<^sub>C}{{$^\texttt{+}_\texttt{C}$}}{1} {\\<circ>\\<^sub>C}{{$\circ^\texttt C$}}{1} {\\<top>\\<^sub>C}{{$\top_\texttt{C}$}}{2} {\\<bottom>\\<^sub>C}{{$\bot_\texttt{C}$}}{2} {\\<not>\\<^sub>C}{{$\neg_\texttt{C}$}}{2} {\\<circ>}{{$\circ$}}{1}
        {\\<squnion>\\<^sub>C}{{$\sqcup_\texttt{C}$}}{2} {\\<sqinter>\\<^sub>C}{{$\sqcap_\texttt{C}$}}{2} {\\<exists>\\<^sub>C}{{$\exists_\texttt{C}$}}{2}  {\\<forall>\\<^sub>C}{{$\forall_\texttt{C}$}}{2} {=\\<^sub>C}{{$=_\texttt{C}$}}{2} {\\<sigma>}{{$\sigma$}}{2} {\\<notin>}{{$\notin$}}{1} {\\<oplus>}{{$\oplus$}}{1} {\\<nexists>}{{$\nexists$}}{1} {\\<setminus>}{{$\setminus$}}{1} {_}{{$\text{-}$}}{1} {"}{{}}{1}
    {\ (}{{\upshape \ (}}{1}    {(}{{\upshape (}}{1} {=}{{\upshape =}}{1} {:=}{{\upshape :=}}{1}  {\\if}{{if}}{1} {\\else}{{else}}{1} {\\then}{{then}}{1} {\\case}{{case}}{1} {\\of}{{of}}{1}
        {\\<V>}{{$\mathcal{V}$}}{1} {\\<F>}{{$\mathcal{F}$}}{1} {\\<X>}{{$\mathcal{X}$}}{1} {\\<FF>}{{$\mathfrak{F}$}}{1} {\\<E>}{{$\mathcal{E}$}}{1} {\\<C>}{{$\mathcal{C}$}}{1} {\\<epsilon>}{{\upshape $\epsilon$}}{1}
        {\\<gamma>}{{$\gamma$}}{1} {'}{{'}}{1} {\\let}{{let}}{2} {\\in}{{in}}{2}
        {\\<b>}{{\upshape b}}{1} {\\<u>}{{\upshape u}}{1} {\\<c>}{{\upshape c}}{1} {\\<And>}{{$\bigwedge$}}{1}
        {\\<k>}{{k}}{1} {\\<lceil>}{{$\lceil$}}{1} {\\<rceil>}{{$\rceil$}}{1} {\\<l>}{{$\ell$}}{1}
        {+}{{\upshape +}}{1} {|}{{\upshape |}}{1} {\\<infinity>}{{$\infty$}}{1} {\\<cc>}{{$\mathfrak{c}$}}{1} {\\<EE>}{{$\mathfrak{E}$}}{1} {\\<CC>}{{$\mathfrak{C}$}}{1}
        {\\<space>}{{\ }}{1}
        {global\_ interpretation}{{\color{blue} global-interpretation}}{1}
        {*of*}{{\color{blue}of}}{1}
        {\\<textand>}{and}{1} {\\<textfor>}{for}{1}
}

\lstnewenvironment{plainisabelle}{
\lstset{language=isabelle, numbers=left, columns=fullflexible, basicstyle=\rmfamily\footnotesize\ttfamily, keywordstyle=\bfseries\upshape    }
}
{}

\newtcblisting{plainisabelle2}{
\lstset{language=isabelle, numbers=left, columns=fullflexible, basicstyle=\rmfamily\footnotesize\ttfamily, keywordstyle=\bfseries\upshape    }

}

\lstdefinestyle{isainline}{
  language=isabelle,
  basicstyle=%
    \ttfamily\small
}

\usepackage{hyperref}
\usepackage{graphicx}

{\catcode`\#=12 \gdef\swhhash{#}}

{\catcode`\_=12
 \gdef\swhorigin{https://zenodo.org/records/22761342/files/mabdula/Isabelle-Graph-Library-flows_paper_26.zip}%
 \gdef\swhsnapshot{53f2dc7838307a2db51890b3335dd39671ec58a1}%
 \gdef\swhdirectory{41954671fbaa685fa5106aecefd125994e8d9e8e}%
 \gdef\swhprefix{mabdula-Isabelle-Graph-Library-8575a12}%
}

{\catcode`\_=12 \catcode`\&=12
 \gdef\swhurl#1#2#3{%
   https://archive.softwareheritage.org/browse/directory/\swhdirectory/%
   ?origin_url=\swhorigin%
   &snapshot=\swhsnapshot%
   &path=/\swhprefix/#1%
   \swhhash L#2\ifnum#2=#3 \else -L#3\fi}%
}

\newcommand{\isalogo}{%
  \raisebox{-0.5ex}{\includegraphics[height=2.4ex]{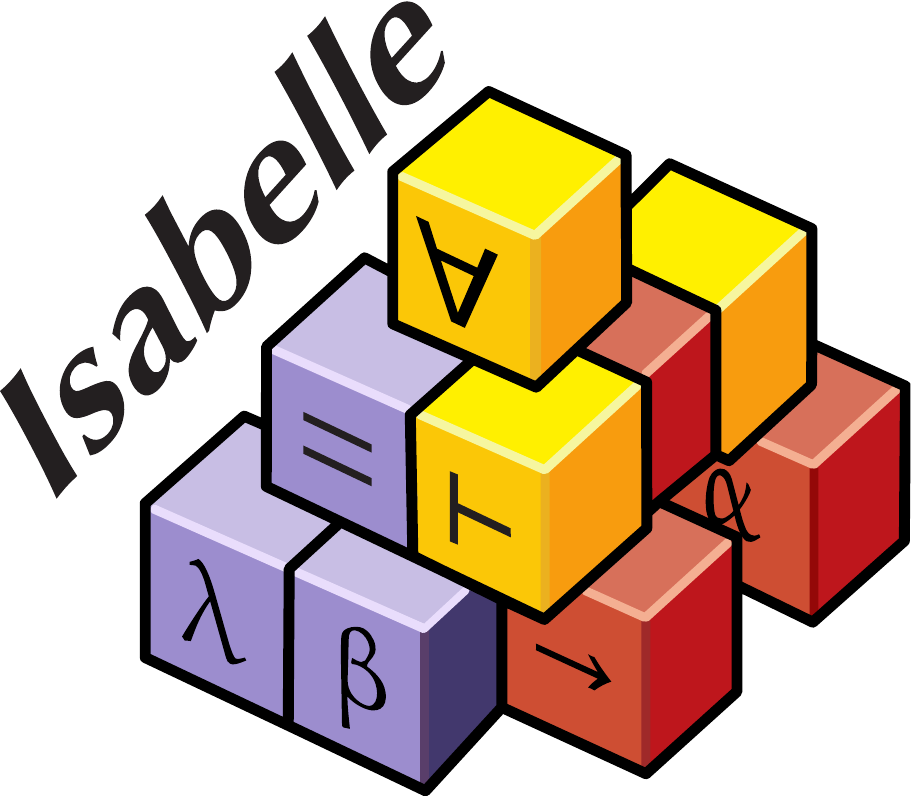}}}

\makeatletter
\let\isa@mark\@empty
\protected\long\def\isa@thmlink#1#2#3{\href{\swhurl{#1}{#2}{#3}}{\isalogo}\enspace}
\newcommand{\isathm}[3]{%
  \xdef\isa@mark{\isa@mark\isa@thmlink{#1}{#2}{#3}}}
\newcommand{\isaitem}[3]{\isa@thmlink{#1}{#2}{#3}}
\newcommand{\isa@usemark}{\isa@mark\global\let\isa@mark\@empty}
\let\isa@head@plain\thmhead@plain
\def\thmhead@plain#1#2#3{\isa@usemark\isa@head@plain{#1}{#2}{#3}}
\let\thmhead\thmhead@plain
\@ifundefined{thmhead@defC}{}{%
  \let\isa@head@defC\thmhead@defC
  \def\thmhead@defC#1#2#3{\isa@usemark\isa@head@defC{#1}{#2}{#3}}}
\@ifundefined{thmhead@thmC}{}{%
  \let\isa@head@thmC\thmhead@thmC
  \def\thmhead@thmC#1#2#3{\isa@usemark\isa@head@thmC{#1}{#2}{#3}}}
\makeatother

\makeatletter
\let\isa@eqmark\@empty
\protected\long\def\isa@eqlink#1#2#3{%
  \href{\swhurl{#1}{#2}{#3}}{\isalogo}\thinspace}
\newcommand{\isaeq}[3]{%
  \xdef\isa@eqmark{\isa@eqmark\isa@eqlink{#1}{#2}{#3}}}
\let\isa@maketag\maketag@@@
\def\maketag@@@#1{%
  \isa@maketag{%
    \ifmeasuring@\else\isa@eqmark\global\let\isa@eqmark\@empty\fi
    #1}}
\makeatother

\newcommand{\isainl}[4]{\href{\swhurl{#1}{#2}{#3}}{\textcolor{black}{\texttt{#4}}}}

\begin{document}
\title{A Formal Analysis of Capacity Scaling Algorithms for Minimum-Cost Flows}

\author[M. Abdulaziz]{Mohammad Abdulaziz\ourflowspaperorcid{0000-0002-8244-518X}}[a]
\author[T. Ammer]{Thomas Ammer\ourflowspaperorcid{0009-0001-5301-4620}}[a]

\address{Department of Informatics, King's College London}{30 Aldwych, Bush House, London, WC2B 4BG, United Kingdom}
\email{mohammad.abdulaziz@kcl.ac.uk, thomas.ammer@kcl.ac.uk}

\thanks{We thank Jens Vygen for providing us with an elegant proof of Lemma~\ref{FBPelim}, leading to a simplification of our own formalisation.}

\begin{abstract}
We present formalisations of the correctness of executable algorithms to solve minimum-cost flow problems in Isabelle/HOL.
Two of the algorithms are based on the technique of scaling, most notably Orlin's algorithm, which has the fastest known running time for solving the problem of minimum-cost flow.
We also include a formalisation of the worst-case running time argument for Orlin's algorithm.
Our verified implementation of this algorithm, which is derived by the technique of stepwise refinement, is fully executable and was integrated into a reusable formal library on graph algorithms.
Because the problems for which Orlin's algorithm works are restricted, we also verified an executable reduction from the general minimum-cost flow problem.
We believe we are the first to formally consider the problem of minimum-cost flows and, more generally, any scaling algorithms.
Our work has also led to a number of mathematical insights and improvements to proofs as well as theorem statements, compared to all existing expositions.
\end{abstract}

\maketitle

\section{Introduction}
\newcommand{\aaugmentedges}{\textit{augment-edges}\xspace}
\newcommand{\sendflowpar}{\textit{send-flow}\xspace}
\newcommand{\sendflow}{\textit{send-flow}\xspace}

Flow networks are some of the most important structures in combinatorial optimisation and computer science.
In addition to many immediate practical applications, flow networks and problems defined on them have numerous connections to other important problems in computer science, most notably, the connection between maximum cardinality (maximum weight) bipartite matching and the problem of maximum (minimum-cost) flow.
Because of this practical and theoretical relevance, network flows have been intensely studied, leading to many important milestone results in computer science, like the Edmonds-Karp algorithm~\cite{EdmondsKarpScalingFlows} for computing the maximum flow between two vertices in a network.
Furthermore, flow algorithms were some of the earliest algorithms to be considered for formal analysis.
The first such effort was in 2005 by Lee in the prover Mizar~\cite{FordFulkersonMizar}, where the Ford-Fulkerson algorithm for maximum flow was verified.
Later on, Lammich and Sefidgar~\cite{LammichFlows} formally analysed the same algorithm and also the Edmonds-Karp algorithm~\cite{EdmondsKarpScalingFlows}, which is one of its polynomial worst-case running time refinements, in Isabelle/HOL.

In this work, we formalise in Isabelle/HOL the correctness of a number of algorithms for the \emph{minimum-cost flow problem}, which is another important computational problem defined on flow networks.
Given a flow network, costs per unit flow associated with every edge, and a desired flow value between a number of sources and a number of targets, a solution to this problem is a flow achieving that value, but for the minimum cost.
This problem can be seen as a generalisation of maximum flow, and thus many problems can be reduced to it, e.g.\ shortest path, maximum flow, plus maximum weight and minimum weight perfect bipartite matching.

More specifically, we formalise (1)\ the problem of minimum-cost flows,
(2)\ the main optimality criterion used to justify most algorithms for minimum-cost flow,
and (3)\ the correctness of three algorithms to compute minimum-cost flows:
(3a)\ successive shortest paths, which has an exponential worst-case running time,
(3b)\ capacity scaling, which has a polynomial worst-case running time,
(3c)\ Orlin's algorithm, which has a strongly polynomial worst-case running time,
and (3d) the core of the running time argument for Orlin's algorithm.
A noteworthy outcome of our work is that we present the first complete combinatorial proof for the correctness of Orlin's algorithm.
For instance, an important property, namely, optimality preservation (Theorem~\ref{thm911}), has a gap in all combinatorial proofs of which we are aware.
We show that gaps existed in most combinatorial proofs (Lemma~\ref{FBPelim}) and that some graphical arguments are complex when treated formally (Theorem~\ref{optcrit}, which needs so-called circulation decomposition (Lemma~\ref{lemma:circdecomp})), proving yet another example of complications uncovered in graphical and geometric arguments when treated in formal settings.
This discrepancy between exact argument and graphical intuition is well known and is widely documented in the literature.
An extreme case in another combinatorial optimisation algorithm formalised in our library is the Ranking algorithm for online bipartite matching, the majority of whose formalisation was spent on an argument originally described by one sketch~\cite{RankingIsabelle}.
Similar discrepancies were documented while formalising geometric arguments too~\cite{hpcpl,poincarebendixson,isabelleGreensThm}.

We use \textit{abstract data types (ADT)} to model mathematical sets and functions in the formalisation of Orlin's algorithm.
This allows us to obtain executable verified code, which still depends on functions selecting paths in the graph with certain properties.
We also give implementations for those functions using the Bellman-Ford algorithm and Depth-First-Search (DFS).
Because Orlin's algorithm only works on specific minimum-cost flow problems, we also formalised an executable reduction of general minimum-cost flow problems to the setting for which the algorithm works.
To our knowledge, Orlin's algorithm is so far the most complex combinatorial optimisation algorithm for which there is verified executable code available.

\subparagraph*{Presentation} We refer to snippets of our formalisation by permalinks.
An Isabelle name occurring in the running text, e.g.\ \isainl{Set_Graphs/Directed_Graphs/Multigraph.thy}{98}{102}{multigraph} , links to the lines containing it.
If a statement in this paper has a formal counterpart, a mark \isalogo\ in front of it links to that counterpart.
Statements without such a mark have no single formal counterpart.
The only code we show are the two abstract data type specifications of Section~\ref{sec:executability_background}.

\subparagraph*{Related Version} This work is an extension of a paper that we submitted to the 15th ITP conference in 2024~\cite{ScalingIsabelle}.
We added sections on the formal running time analysis (Section~\ref{sec:running_time}) and executability of Orlin's algorithm, the verification of an executable implementation of the Bellman-Ford algorithm (Section~\ref{sec:bellman_ford}), as well as the formalisation of a reduction among flow problems (Section~\ref{sec:finite_infinite}).
The whole executability pipeline is new.
We added background on ADTs (Section~\ref{sec:executability_background}) and significantly extended the Sections~\ref{sec:orlins}, \ref{section:loopb} and \ref{section:loopa} (all rooted in the conference paper) with details on how to use ADTs for executability.
Contrary to the conference paper, we now also work with multigraphs instead of simple graphs (see background section for terminology).
We renamed the subprocedure \aaugmentedges in the conference paper to \sendflowpar in the current paper.
The conference paper already closed the gap of Lemma~\ref{FBPelim} for which we now present a simplified version.

There is also an older version of this paper available on arXiv~\cite{abdulaziz2026formalanalysiscapacityscaling} that contains many code listings instead of using permalinks.

\subparagraph*{Availability}
The formalisation as presented here is archived at
\begin{itemize}
\item Zenodo under \texttt{\doi{10.5281/zenodo.22761342}} which is mirrored at
\item Software Heritage under \href{https://archive.softwareheritage.org/browse/directory/41954671fbaa685fa5106aecefd125994e8d9e8e/?origin_url=https://doi.org/10.5281/zenodo.15758197&path=mabdula-Isabelle-Graph-Library-8575a12&release=7&snapshot=fbc2f8cb790b658a32a6fcabfb689348c2dfbb51}{\texttt{swh:1:dir:41954671fbaa685fa5106aecefd125994e8d9e8e}} to which the permalinks point.
\end{itemize}

\tableofcontents

\section{Background and Definitions}
\subsection{Informal Definitions: Multigraphs and Flows}
\label{sec:background_basic}
\newcommand{\mathfst}{\ensuremath{\mathsf{first}}}
\newcommand{\mathsnd}{\ensuremath{\mathsf{second}}}
\newcommand{\E}{\ensuremath{\mathcal{E}}}
\newcommand{\V}{\ensuremath{\mathcal{V}}}
\newcommand{\deltaplus}{\ensuremath{\delta^+}}
\newcommand{\deltaminus}{\ensuremath{\delta^-}}
\newcommand{\Deltaplus}{\ensuremath{\Delta^+}}
\newcommand{\Deltaminus}{\ensuremath{\Delta^-}}
\newcommand{\residualu}{\ensuremath{\mathfrak{u}}}
\newcommand{\residualc}{\ensuremath{\mathfrak{c}}}
We define a \textit{simple directed graph} (henceforth `simple graph') as a set $\E$ of ordered pairs, the so-called \textit{edges}.
Components of a pair $e$ are the \textit{endpoints} of the edge $e$.
For the endpoints $v_1$ and $v_2$, we say that the edge $(v_1, v_2)$ \textit{leads} or \textit{points} from $v_1$ to $v_2$.
The endpoints of all edges in a graph are the set of the graph's vertices $\V$.
In a simple directed graph, edges are unique, i.e.\ there is at most one edge leading from $v_1$ to $v_2$ for all vertices $v_1$ and $v_2$.
For example, $\E = \lbrace (v_1, v_2), (v_2, v_3), (v_1, v_3) \rbrace$ is a simple directed graph over the vertices $\V=\lbrace v_1, v_2, v_3\rbrace$.

In contrast, a \textit{directed multigraph} (henceforth `multigraph') allows for multiple distinct edges pointing from one vertex to another.
It is a set of objects, again called \textit{edges}, together with two functions $\mathfst$ and $\mathsnd$ mapping the edges to their first and second endpoint.
If, for example, $(\cdot, \cdot)$ and $\langle \cdot, \cdot\rangle$ denote two distinct types of ordered pairs, then $\lbrace (v_1,v_2), \langle v_1, v_2\rangle, (v_2,v_3), (v_1,v_3), \langle v_3, v_1\rangle \rbrace$ is a multigraph involving the vertices $v_1$, $v_2$ and $v_3$.
It is a multigraph because $(v_1,v_2)$ and $\langle v_1,v_2 \rangle$ are two distinct edges leading from $v_1$ to $v_2$.
For example, $\mathsf{first}(v_1,v_2) = v_1$, $\mathsf{first}\langle v_1,v_2 \rangle = v_1$ and $\mathsnd(v_2,v_3) = v_3$.
We can interpret simple graphs as multigraphs if we define $\mathsf{first}$ and $\mathsnd$ to give the first and second element of an ordered pair, respectively.

We always assume directed multigraphs (henceforth `directed graph'), i.e.\ edges between vertices are not necessarily unique.
In examples and figures, we often use the setting of a simple graph for the sake of simplicity.

The first vertex of an edge is called the \textit{source} and the second one is the \textit{target} of the edge, respectively.
For a specific vertex $v$, we denote the set of all edges that have $v$ as their first or second endpoint, i.e.\ those that leave or enter the vertex $v$, by $\deltaplus(v)$ or $\deltaminus(v)$, respectively. Analogously to single vertices, we denote the set of leaving and entering edges of a set of vertices $X$ by $\Deltaplus(X)$ and $\Deltaminus(X)$, respectively.

A \textit{path over vertices} from $v_1$ to $v_n$ is a sequence $v_1,...,v_n \subseteq \V$ where there is an edge pointing from $v_i$ to $v_{i+1}$ for all $i < n$. A \textit{path over edges} is a sequence $e_1, ...,e_n \subseteq \E$ where $\mathsnd(e_i) = \mathfst(e_{i+1})$. It leads from $\mathfst(e_1)$ (source) to $\mathsnd(e_n)$ (target). Any vertex-based path implies an edge-based path and the other way round, which is why we often identify the two notions in the informal analysis. A \textit{cycle} is a path where source and target are identical.

A maximal set of vertices $C$, where there is, under omission of edge directions, a path between $x$ and $y$ for all $x, y \in C$, is a \textit{connected component}.
A \textit{representative function} $r : \V \rightarrow \V$ maps all vertices within a component $C$ to the same vertex $ r_C\in C$.
Consequently, we call $r(x)$ the representative of component $C$ for a vertex $x \in C$.

A \textit{flow network} consists of a directed graph over edges $\E$ and vertices $\V$, and a capacity function $u : \E \rightarrow \mathbb{R}^+_0 \cup \lbrace \infty \rbrace$. If $u(e)= \infty$ for all $e \in \E$, the network is \textit{uncapacitated}.
The goal is to find a function,
i.e.\ a \textit{flow} $f : \E \rightarrow \mathbb{R}^+_0$ satisfying $f(e) \leq u(e)$ for any edge $e$.
An edge is \textit{saturated} if its flow $f(e)$ equals its capacity $u(e)$, otherwise the edge is \textit{unsaturated}.
The \textit{excess} of a flow $f$ at the vertex $v$, $\text{ex}_f(v)$,
is the difference between ingoing and outgoing flow $\text{ex}_f(v) \overset{\text{def}}{=} \sum \limits_{e \in \deltaminus(v)} f(e) - \sum \limits_{e \in \deltaplus(v)} f(e)$.

We call an ordered bipartition $(X, \V \setminus X)$ of the graph's vertices a \textit{cut}.
The (reverse) capacity of a cut $X$ is the accumulated edge capacity of all (ingoing) outgoing edges
$\text{cap}(X) \overset{\text{def}}{=} \sum \limits_{e \in \Deltaplus(X)} u(e)$ $(\text{acap}(X) \overset{\text{def}}{=} \sum \limits_{e \in \Deltaminus(X)} u(e))$. Edges going between $X$ and $\V\setminus X$ are edges \textit{crossing} the cut. If all edges going from $X$ to $\V\setminus X$ are saturated for some flow, we say that the cut is saturated.

A \textit{minimum-cost flow problem} consists of two further ingredients, apart from the edges and their capacities.
We introduce \textit{balances} $b : \V \rightarrow \mathbb{R}$ denoting the amount of flow that should be caught or emitted at every vertex.
We call a flow satisfying the balance and capacity constraints \textit{valid} or \textit{feasible}.
In addition, there is a function $c : \E \rightarrow \mathbb{R}$ telling us about the \textit{costs} of sending one unit of flow through an edge. A flow's $f$ \textit{total costs} $c(f)$ are $c(f) = \sum \limits_{e \in \E} f(e) \cdot c(e)$.
The set of \textit{feasible flows} is $\lbrace f \, | \, \forall v \in \V.\, -ex\, _f(v)   = b ( v)  \wedge \forall e \in \E . \,f(e) \leq u(e) \rbrace$.
We aim to find a \textit{minimum-cost flow (min-cost flow)} which is a feasible flow of least total costs.
We call a network without cycles of negative total costs \textit{weight-conservative}.

\begin{exa}\label{example:transport}
An intuitive application of minimum-cost flows would be the transportation of products:
at certain locations (vertices in the graph), e.g.\ production and retail sites, there is a supply of and demand for a certain product $P$.
We need to match demands and supplies to one another.
Therefore, we must ship $P$ along some ways of transport (edges in the graph).
Each edge $e$ comes with a transportation capacity $u(e)$ and costs $c(e)$ for each unit of $P$ that we send along $e$.
We can encode demands and supplies into the balances $b$, where $b(v) > 0$ is a supply and $b(v) < 0$ is a demand at the location $v$.
A valid transportation plan means that (1) we remove all supply from all production facilities,
(2) the demand at each retail site is exactly satisfied, and
(3) the amount of $P$ sent along $e$ remains below $u(e)$ for each transportation link $e$.
When modelling this as a flow problem, a minimum-cost flow $f$ would be a valid transportation plan of minimum costs.
 \end{exa}

Given a flow $f$, we define the \textit{residual network}:
for any edge $e \in \E$ pointing from $x$ to $y$, we have two \textit{residual edges}, namely,
 the \textit{forward} $F \, e$ and \textit{backward} $B \, e$ edge pointing from $x$ to $y$ and from $y$ to $x$, respectively.
These form a pair of \textit{reverse} edges.
We write the reverse of a residual edge $e$ as $\overset{\leftarrow}{e}$.
We define the \textit{residual cost} $\residualc$ of a residual edge as $\residualc(F\,e) = c(e)$ and $\residualc(B\,e) = -c(e)$, respectively.
For a flow $f$, we define the \textit{residual capacities} $\residualu_f$.
On forward edges, this is the difference between the actual capacity and the flow that we currently send through this edge: $\residualu_f(F \,e) = u(e) - f(e)$.
The capacity of a backward edge equals the flow assigned to the original edge:
$\residualu_f(B\, e) = f(e)$.

We define the residual capacity $\residualu_f(p)$ of a path $p$ as $\residualu_f(p) = \min \lbrace \residualu_f(e)\, . \; e \in p \rbrace$.
We obtain the residual costs $\residualc(p)$ by accumulating residual costs for the edges contained in $p$.
We define the reverse $\overset{_{\longleftarrow}}{p}$ of a path $p$ over residual edges as the reversal of each residual edge in $p$ in reversed order.
For example, the reverse of $[F\,e_1, B\, e_2, F\, e_3]$ is $[B \,e_3, F\, e_2, B \,e_1]$. From the definitions it follows that $\residualc(\overset{_{\longleftarrow}}{p}) = - \residualc(p)$.

The residual network is a multigraph itself regardless of whether the actual flow network is a simple graph.
The residual network for a simple graph has two copies of the same edge, e.g.\ $F\, (u, v)$ and $B \, (u, v)$ (see the residual network in Figure~\ref{graphBo}).
Intuitively, a forward edge of the residual network indicates how much more could be added to the flow and a backward edge indicates how much could be removed from the flow.
\renewcommand{\frame}{}
\begin{figure}[]
\frame{
\begin{subfigure}{0.485\textwidth}
{\centering
\frame{
   \trimbox{0mm {0.5\baselineskip} 0mm 0mm}{
\begin{tikzpicture}[
      mycircle/.style={
         circle,
         draw=black,
         fill=gray,
         fill opacity = 0.3,
         text opacity=1,
         inner sep=0pt,
         minimum size=12pt,
         font=\normalsize},
      myarrow/.style={-Stealth},
      node distance=1cm and 4cm
      ]
      \node[ mycircle] (c1) {$u$} ;
      \node[ mycircle,right=of c1] (c2) {$v$};
     \draw [myarrow] (c1) to [out=10,in=170] node[sloped,font=\footnotesize, above]
      {$5$, \color{forestgreen} $8$\color{black}\text{,} \color{red} $2$} (c2);
     \draw [myarrow] (c2) to [out=250,in=290] node[sloped,font=\footnotesize, above]
      {$4$, \color{forestgreen} $6$\color{black}\text{,} \color{red} $3$} (c1);
      \end{tikzpicture}
      }
      }
      \par}
      \caption{\label{graphAo} The original flow}
\end{subfigure}
}
\nolinebreak
\frame{
\begin{subfigure}{0.485\textwidth}
{\centering
   \frame{
      \trimbox{0mm {0.25\baselineskip} 0mm 0mm}{
\begin{tikzpicture}[
      mycircle/.style={
         circle,
         draw=black,
         fill=gray,
         fill opacity = 0.3,
         text opacity=1,
         inner sep=0pt,
         minimum size=12pt,
         font=\normalsize},
      myarrow/.style={-Stealth},
      node distance=1cm and 4cm
      ]
      \node[ mycircle] (c3) {$u$} ;
      \node[ mycircle,right=of c3] (c4) {$v$};
     \draw [myarrow, blue] (c3) to [out=10,in=170] node[sloped,font=\footnotesize, above=-0.75mm]
      {\color{forestgreen} $3 (=8-5)$\color{black}\text{,} \color{red} $2$} (c4);
    \draw [myarrow, purple, dashed] (c4) to [out=190,in=350] node[sloped,font=\footnotesize, below=-0.75mm]
      {\color{forestgreen} $5$\color{black}\text{,} \color{red} $-2$} (c3);
    \draw [myarrow, purple, dashed] (c3) to [out=35,in=145] node[sloped,font=\footnotesize, above=-1mm]
      {\color{forestgreen} $4$\color{black}\text{,} \color{red} $-3$} (c4);
    \draw [myarrow, blue] (c4) to [out=215,in=325] node[sloped,font=\footnotesize, below]
      {\color{forestgreen} $2(=6-4)$\color{black}\text{,} \color{red} $3$} (c3);
      \end{tikzpicture}
      }
      }
      \par}
     \caption{\label{graphBo} Original Residual Graph}
      \end{subfigure}
}

\caption{\label{fig2o} Flows, (residual) capacities and (residual) costs are black, green and red, respectively. Forward edges are blue, backward edges purple.}
\end{figure}
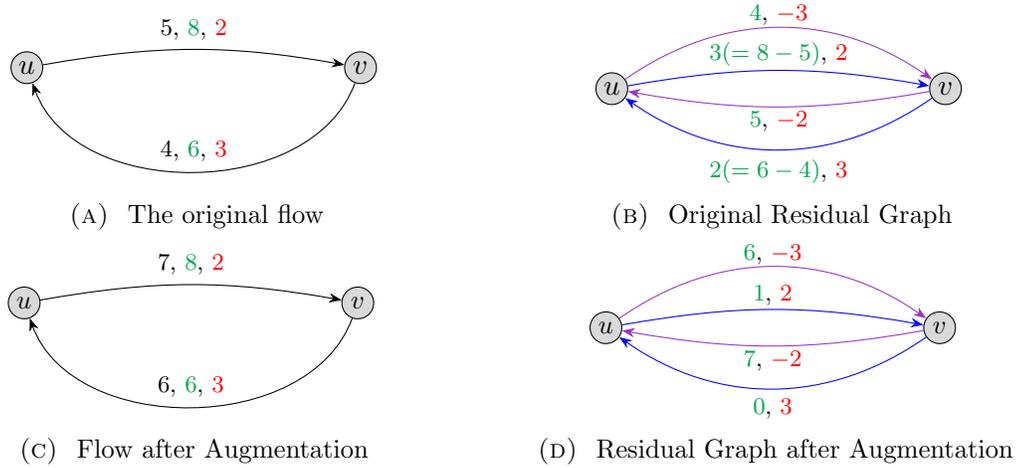

\begin{exa}
Figure~\ref{graphAo} shows a flow network, where every edge is labelled by a flow, capacity, and a cost per unit of flow.
The colouring convention from the caption applies.
Figure~\ref{graphBo} shows the residual network for the network in Figure~\ref{graphAo}.
The residual network has, for every flow network edge $(x,y)$, two edges: one is the forward edge $F (x,y)$,
a copy of the original edge, and the second is the backward edge $B (x, y)$, going in the opposite direction.
Each forward edge of the residual network is labelled by the residual capacity, indicating how much more flow can still go through the edge, without violating the edge's capacity constraint.
Each backward edge is labelled by a capacity: intuitively this capacity represents how much flow we can remove from the edge and, accordingly, its value is the same as the flow going through the original edge.
Each backward edge is labelled by a negative cost indicating that removing from the flow on that edge saves cost.
\end{exa}
\begin{exa}
In Figure~\ref{graphBo}, the sequences of residual edges $[F\;(u, v),\; B\;(u, v)]$, $[F\;(u,v), \\ F\, (v, u), B\,(v,u), B\,(u, v)]$ and $[B\, (u, v), B\, (v,u)]$ are examples of edge-distinct paths in the residual graph.
Since the first and last vertices are identical ($u$, $u$ and $v$, respectively), these even form cycles.
The residual path capacities are $3$, $2$, $4$, and the costs are $0$, $0$ and $-5$, respectively.
\end{exa}

\subsection{Locales, Multigraphs and Minimum-Cost Flows in Isabelle/HOL}
\label{sec:isabelle_intro}
\newcommand{\multigraphspec}{\isainl{Set_Graphs/Directed_Graphs/Multigraph.thy}{5}{9}{multigraph-spec}\xspace}
\newcommand{\multigraph}{\isainl{Set_Graphs/Directed_Graphs/Multigraph.thy}{98}{102}{multigraph}\xspace}
\newcommand{\flownetwork}{\isainl{Flow_Theory/Residual.thy}{35}{38}{flow-network}\xspace}
\newcommand{\flownetworkspec}{\isainl{Flow_Theory/Residual.thy}{31}{33}{flow-network-spec}\xspace}
\newcommand{\costflowspec}{\isainl{Flow_Theory/Residual.thy}{40}{41}{cost-flow-spec}\xspace}
\newcommand{\edgetype}{\texttt{'edge}\xspace}
\newcommand{\fst}{\texttt{fst}\xspace}
\newcommand{\snd}{\texttt{snd}\xspace}
\newcommand{\fstv}{\texttt{fstv}\xspace}
\newcommand{\sndv}{\texttt{sndv}\xspace}
\newcommand{\createedge}{\texttt{create-edge}\xspace}
\newcommand{\isOpt}{\isainl{Flow_Theory/Cost_Optimality.thy}{155}{155}{is-Opt}\xspace}
\newcommand{\tttu}{\texttt{u}\xspace}
\newcommand{\tttf}{\texttt{f}\xspace}
\newcommand{\tttb}{\texttt{b}\xspace}
\newcommand{\sharevertex}{\texttt{share-vertex}\xspace}
\newcommand{\eone}{\texttt{e1}\xspace}
\newcommand{\etwo}{\texttt{e2}\xspace}
\newcommand{\tttdeltaplus}{\isainl{Set_Graphs/Directed_Graphs/Multigraph.thy}{31}{32}{delta-plus}\xspace}
\newcommand{\tttdeltaminus}{\isainl{Set_Graphs/Directed_Graphs/Multigraph.thy}{34}{35}{delta-minus}\xspace}
\newcommand{\tttDeltaplus}{\isainl{Set_Graphs/Directed_Graphs/Multigraph.thy}{39}{40}{Delta-plus}\xspace}
\newcommand{\tttDeltaminus}{\isainl{Set_Graphs/Directed_Graphs/Multigraph.thy}{42}{43}{Delta-minus}\xspace}
\newcommand{\Redge}{\isainl{Flow_Theory/Residual.thy}{341}{341}{Redge}\xspace}
\newcommand{\tttex}{\isainl{Flow_Theory/Residual.thy}{144}{145}{ex}\xspace}
\newcommand{\isuflow}{\isainl{Flow_Theory/Residual.thy}{158}{159}{isuflow}\xspace}
\newcommand{\isbflow}{\isainl{Flow_Theory/Residual.thy}{180}{181}{isbflow}\xspace}
\newcommand{\rcap}{\isainl{Flow_Theory/Residual.thy}{371}{373}{rcap}\xspace}
\newcommand{\residualflow}{\isainl{Flow_Theory/Cost_Optimality.thy}{408}{410}{residual-flow}\xspace}
In Isabelle/HOL, \textit{locales}~\cite{LocalesBallarin} are named contexts that allow us to fix mathematical entities (`locale constants') and to specify/assume their properties.
Inside the locale, we can use the constants and assumptions on them for definitions and proofs, respectively.
We use locales to formalise multigraphs:
the \textit{specification locale} \multigraphspec fixes a set $\E$ of edges of type \edgetype, together with functions \fst and \snd from the edge to the vertex type.
These are intended to return the first and second vertex of an edge, respectively.
\fst and \snd correspond to the informal functions $\mathfst$ and $\mathsnd$ that are associated with every multigraph.
In the case of a simple graph, one would use the functions \fst and \snd that are predefined for ordered pairs in Isabelle/HOL.
Furthermore, for every pair over the vertex type there should also be a possible edge of type \edgetype with the respective vertices as endpoints.
We express this by the function \createedge.
This edge does not have to be in $\E$, it only ensures the possibility of a connection between vertices, similar to the fact that we can construct an ordered pair for arbitrary components $x$ and $y$.

The specification locale only fixes the constants while being ignorant of their behaviour.
We have a second locale, the \textit{proof locale} \multigraph, that extends the specification locale by making the  assumptions that were informally stated in the previous paragraph.
This distinction between a specification locale (fixing constants and types) and a proof locale (assuming the constants' properties) is standard in our formalisation.
The rationale behind this split follows from the mechanisms of \textit{locale instantiation}.

We can later instantiate the constants fixed by a locale, e.g.\ in the case of \multigraph by giving a set $\E$ and implementations for \fst, \snd and \createedge.
We must then prove all assumptions, e.g.\ an instantiation of \multigraph would require us to show finiteness of the $\E$ that is used for the instantiation.
An instantiation turns every definition and theorem inside a locale into analogous definitions and theorems outside the locale.
The latter involve the entities used to instantiate the locale constants.
\begin{exa}
By instantiating, a definition $\sharevertex \; \eone \; \etwo = \lbrace \fst \;\eone, \snd \;\eone \rbrace \cap \lbrace \fst \;\etwo, \snd \;\etwo \rbrace \neq \lbrace \rbrace$ inside \multigraphspec to check whether two edges share a vertex,
becomes a constant $\sharevertex$ outside the locale that again takes two edges, but whose type and implementation of \fst and \snd is fixed according to the concrete instantiation, for instance, to the product type and the predefined \fst and \snd.
\end{exa}

Isabelle/HOL also allows for the generation of executable code from definitions given in a locale, provided that the definitions are executable themselves.
This requires a so-called \textit{global interpretation}, which is a certain type of locale instantiation.
Global interpretations must be outside of any context and can be generic w.r.t.\ all locale constants on which there are no non-trivial assumptions.
For example, a global interpretation of \multigraph only works for specific sets of edges, since some sets are finite and others are not.
In contrast, one can interpret \multigraphspec globally for all possible values of the locale constants because there are no assumptions.

As noted earlier, we distinguish between a specification locale, which fixes constants and types, and a proof locale, which makes assumptions about the constants' properties.
We put every function definition that is meant to be executed into the specification locale.
On the other hand, we place any reasoning regarding the correctness of an algorithm in the proof locale.
By this, a single global interpretation will be sufficient to obtain code that would work for several input graphs.
This code has the locale constants, e.g.\ \fst or \snd as function arguments.
Correctness of the code for a specific input depends on a few simple properties, namely, that the input indeed satisfies the assumptions made by the proof locale.

Within \multigraphspec, we define basic concepts like \tttdeltaplus, \tttdeltaminus, \tttDeltaplus and \tttDeltaminus, i.e.\ the formal counterparts of $\deltaplus$, $\deltaminus$, $\Deltaplus$ and $\Deltaminus$.

We use a functional data type \Redge to model residual edges.
As an extension of \multigraphspec, we have a specification locale \flownetworkspec for flow networks with capacities (edges mapped to reals with infinity) but without costs.
Within that locale, we formalise the concepts of flow excess (\tttex), compliance with balance and capacity constraints (\isbflow and \isuflow), as well as, residual capacities (\rcap).
The proof locale \flownetwork assumes that capacities are non-negative.
The locale \costflowspec extends \flownetworkspec by assuming a cost function from edges to reals.
In that locale, we define residual costs, and the notion of minimum-cost flows.
\isOpt \tttb \tttf would mean that \tttf is a minimum-cost flow for balances \tttb and the edge capacities \tttu fixed by the locale.

There are two reasons why it is beneficial to consider flows in multigraphs as formalised by \multigraphspec instead of simple graphs formalised as a set of ordered pairs.
Firstly, when modelling a real-world problem, e.g.\ transportation of goods (Example~\ref{example:transport}), there can be multiple direct links between two sites $v_1$ and $v_2$, each of which has different per-unit costs.
It is desirable to have algorithms that directly deal with that.
The other reason is that we can easily interpret the residual graph as another flow network.
The latter uses, however, the residual edges, residual capacities and residual costs, e.g. \residualflow.
From time to time, flows in the residual graph are considered in informal proofs.
Reflecting this in the formalisation is straightforward when using \multigraphspec.

\subsection{Obtaining Executable Algorithms}
\label{sec:executability_background}
\newcommand{\tttif}{\texttt{if}}
Functions and sets as they are available in Isabelle/HOL are already sufficient to formalise supporting theory and to describe the algorithms mathematically.
Formalised in that way, algorithms are often not executable and, even if they are, the running time might be inflated.
For example, in a mathematical description, one might perform updates to a function, that maps indices to values to model an array, by an \tttif -expression.
Due to the evaluation strategy, these build up to nested \tttif s resulting  in a running time that linearly depends on the number of updates.
For Orlin's algorithm, which is the most efficient algorithm presented in this paper, we aim for executable code of reasonable efficiency.

\begin{rem}
The running time analysis in the literature is done w.r.t.\ an imperative model of computation.
Thus, we aim to bring the formalisation of Orlin's algorithm closer to the semantics of an imperative language.
There is also some ongoing effort to implement an automated translation of Isabelle/HOL functions to IMP-~\cite{cookLevinFramework}, a deliberately simple but Turing-complete imperative language used to study computability and complexity theory.
In the future, we might attempt this for Orlin's algorithm.
Full executability is a prerequisite, however.
\end{rem}

\paragraph{Definition of Abstract Data Types}
One approach to obtain executable Isabelle/HOL functions with reasonable running time is so-called \textit{abstract data types (ADT)}.
These have been studied first by Wirth~\cite{refinementWirth} and Hoare~\cite{hoareRefinement}, and are an example of Wirth's concept of stepwise refinement where a program instruction is decomposed into more detailed instructions.
For ADTs, the scope of application is much wider than our own and immediately related work, as e.g.\ also Java, C or Haskell libraries use them to specify interfaces for sets or queues.

In the context of ADTs, for example, we can describe the behaviour of $\cup_{impl}$ (implementation of set union {$\cup$}) by $\alpha (A \cup_{impl} B) = \alpha (A) \cup \alpha (B)$ where $\alpha$ is an abstraction function denoting the semantics of a set implementation, i.e.\ which implementations represent which mathematical sets.
Mathematically speaking, $\alpha$ has to be a homomorphism from the implementations to mathematical sets w.r.t.\ the set operations.
A data structure invariant, which is a well-formedness property that the operations have to preserve, is usually a precondition.

\paragraph{Abstract Data Types in Isabelle/HOL}
\newcommand{\Some}{\texttt{Some}\xspace}
\newcommand{\None}{\texttt{None}\xspace}
\newcommand{\tttx}{\texttt{x}\xspace}
\newcommand{\tttm}{\texttt{m}\xspace}
\newcommand{\lookup}{\texttt{lookup}\xspace}
\newcommand{\invar}{\texttt{invar}\xspace}
The built-in standard library of Isabelle/HOL contains specifications of several ADTs by Nipkow~\cite{ADTIsabelle}, e.g.\ maps and sets (see Listings~\ref{isabelle:AD_sets} and~\ref{isabelle:AD_maps}).
For example, a map modelling a (partial) function $f$ would return \Some \tttx if $f(x)$ is defined and \None otherwise.
For a map \tttm, we call the set of those \texttt{x} where \lookup \tttm \tttx $\neq$ \None the \textit{domain} of \tttm.
Like flow networks or multigraphs, we formalise ADTs by means of locales.
Within these and derived locales, we can use the properties of the assumed set or map implementations, provided that the well-formedness invariant \invar holds.
Later, we can instantiate the locales to obtain executable code.

For example, Abdulaziz~\cite{graphAlgosBook} used these locales to model graphs as adjacency maps, which are maps mapping vertices to set implementations of their neighbourhoods (sets of the adjacent vertices), and to verify and obtain executable implementations of simple graph algorithms like Breadth-First-Search (BFS) or Depth-First-Search (DFS).
Notably, he can mostly hide the fact that the algorithms work on assumed implementations instead of mathematical functions and sets by automation that applies the data structure semantics (Listings~\ref{isabelle:AD_sets} and~\ref{isabelle:AD_maps}) in the background.
Another example is the formalisation of algorithms for matroids and greedoids~\cite{MatroidIsabelle}, where ADTs allow for deriving executable implementations of Kruskal's and Prim's algorithms, and maximum cardinality bipartite matching.
We use the ADT approach in our work to obtain executable code for Orlin's algorithm from the formalisation.

\begin{figure}
\begin{lstlisting}[
 language=isabelle,
 caption={Abstract Data Type Specification for Sets~\cite{ADTIsabelle}},
 label={isabelle:AD_sets},
 captionpos=b,
 numbers=none,
 xleftmargin=0cm,
 columns= fullflexible
 ]
locale Set =
  fixes empty :: "'s"            and insert :: "'a \<Rightarrow> 's \<Rightarrow> 's"
  and isin :: "'s \<Rightarrow> 'a \<Rightarrow> bool" and set :: "'s \<Rightarrow> 'a set"       and invar :: "'s \<Rightarrow> bool"
  assumes set_empty: "set empty = {}"
  assumes set_isin: "invar s \<Longrightarrow> isin s x = (x \<in> set s)"
  assumes set_insert: "invar s \<Longrightarrow> set (insert x s) = set s \<union> {x}"
  assumes invar_empty: "invar empty"
  assumes invar_insert: "invar s \<Longrightarrow> invar (insert x s)"
\end{lstlisting}
\end{figure}

\begin{figure}
\begin{lstlisting}[
 language=isabelle,
 caption={Abstract Data Type Specification for Maps~\cite{ADTIsabelle}},
 label={isabelle:AD_maps},
 captionpos=b,
 numbers=none,
 xleftmargin=0cm,
 columns= fullflexible
 ]
locale Map =
  fixes empty :: "'m"                   and update :: "'a \<Rightarrow> 'b \<Rightarrow> 'm \<Rightarrow> 'm"
  and lookup :: "'m \<Rightarrow> 'a \<Rightarrow> 'b option" and invar :: "'m \<Rightarrow> bool"
  assumes map_empty: "lookup empty = (\<lambda>_. None)"
  and map_update: "invar m \<Longrightarrow> lookup (update a b m) = (lookup m)(a := Some b)"
  and invar_empty: "invar empty"
  and invar_update: "invar m \<Longrightarrow> invar (update a b m)"
\end{lstlisting}
\end{figure}

\subsection{Fundamental Results in Flow Theory}
\newcommand{\thmflowvalue}{\isainl{Flow_Theory/Residual.thy}{854}{857}{flow-value}\xspace}
\newcommand{\thmbalancecut}{\isainl{Flow_Theory/Residual.thy}{1211}{1213}{flow-less-cut-capacities}\xspace}
\newcommand{\thmrescutsat}{\isainl{Flow_Theory/Residual.thy}{1216}{1219}{rescut-ingoing-zero-ingoing-saturated}\xspace}
\newcommand{\thmcircdecomp}{\isainl{Flow_Theory/Decomposition.thy}{471}{480}{flowcycle-decomposition}\xspace}
Proofs about flows often require some kind of decomposition of the network or the flow itself.
Cuts allow us to view two parts $X$ and $\V\setminus X$ of the network as single vertices connected by a pair of edges in opposite directions.
For a flow $f$ valid for balances $b$, the sum of $b$ over $X$ is the difference between the flow going from $X$ to $\V\setminus X$ and the flow going from $\V\setminus X$ to $X$.
We make this precise in a version of the \textit{Flow-Value Lemma}~\cite{FordFulkersonFlowsNetworks}, a fundamental result in flow theory:

\isathm{Flow_Theory/Residual.thy}{854}{857}%
\begin{lem}[Flow Value Lemma\label{lemma:flowvalue}]
Let $f$ be a flow that is valid for the balances $b$ and let $(X, \V\setminus X)$ be a cut. Then
\[
\sum \limits_{v \in X} b(v) = \sum \limits_{e \in \Deltaplus(X)} f(e) - \sum \limits_{e \in \Deltaminus(X)} f(e)
\]
\end{lem}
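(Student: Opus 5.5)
The plan is to sum the balance constraints over $X$ and then regroup the resulting double sum edge by edge. Validity for $b$ gives $b(v) = -\text{ex}_f(v) = \sum_{e \in \deltaplus(v)} f(e) - \sum_{e \in \deltaminus(v)} f(e)$ for every $v \in \V$. Summing over $v \in X$ yields
\[
\sum_{v \in X} b(v) = \sum_{v \in X}\sum_{e \in \deltaplus(v)} f(e) - \sum_{v \in X}\sum_{e \in \deltaminus(v)} f(e).
\]
Since each edge has exactly one source and one target, the sets $\deltaplus(v)$ for distinct $v$ are pairwise disjoint, and likewise the sets $\deltaminus(v)$. Each double sum therefore collapses to a single sum: the first over $O = \{e \in \E \mid \mathfst(e) \in X\}$, the second over $I = \{e \in \E \mid \mathsnd(e) \in X\}$.

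Next I split both sets according to the other endpoint. Let $\E_X = \{e \mid \mathfst(e) \in X \wedge \mathsnd(e) \in X\}$ be the set of edges inside $X$. Then $O = \Deltaplus(X) \,\dot\cup\, \E_X$ and $I = \Deltaminus(X) \,\dot\cup\, \E_X$. The contribution $\sum_{e \in \E_X} f(e)$ appears once in each sum and cancels, which leaves exactly $\sum_{e \in \Deltaplus(X)} f(e) - \sum_{e \in \Deltaminus(X)} f(e)$. Self-loops at a vertex in $X$ cause no trouble, since they lie in $\E_X$ and cancel like any other internal edge.

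The mathematics is elementary. I expect the main obstacle to be the bookkeeping in the formal setting. First, every sum must be finite; this follows from finiteness of $\E$, assumed in \multigraph, and from $X \subseteq \V$, with $\V$ finite as the set of endpoints of a finite edge set. Second, the sum exchange $\sum_{v\in X}\sum_{e\in\deltaplus(v)} = \sum_{e \in O}$ must be justified, and I would use a disjoint-union (\emph{sum over a partition}) lemma for it. Third, the partitions of $O$ and $I$ and the cancellation must be stated as explicit set identities, since multigraph edges are abstract objects with endpoints given by \fst and \snd rather than ordered pairs. I would prove these set identities first as separate lemmas and then conclude by rewriting with them.
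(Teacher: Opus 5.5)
Your proposal is correct and takes essentially the same approach as the paper: substitute $b(v) = -\text{ex}_f(v)$, sum over $X$, and observe that every edge with both endpoints in $X$ contributes once with each sign and cancels, so only the edges in $\Deltaplus(X)$ and $\Deltaminus(X)$ remain. Your explicit partition into the edges leaving $X$, the edges entering $X$ and the internal edges $\E_X$ is just a more detailed rendering of the paper's cancellation observation.
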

We conduct the proof by replacing $b(v)$ with $-ex_f(v)$ and the observation that the flow $f(e)$ along edges $e$ not crossing the cut is cancelled since it appears both as $f(e)$ and $-f(e)$ in the sum.

Cuts are an important tool for finding a bound for the amount of flow that we send between different parts of the network.
For example, it follows from the definition of cuts and flow validity that we have $0 \leq \sum \limits_{e \in \Deltaplus(X)} f(e) \leq \text{cap}(X)$ and  $0 \leq \sum \limits_{e \in \Deltaminus(X)} f(e) \leq \text{acap}(X)$ for any valid flow $f$ and cut $X$.
This implies together with the Flow-Value Lemma:
\isathm{Flow_Theory/Residual.thy}{1211}{1213}%
\begin{cor}
\label{corobalance}
\[
- \text{\textnormal{acap}}(X) \leq  \sum \limits_{v \in X} b(v) \leq \text{\textnormal{cap}}(X)
\]
for any flow $f$ valid for balances $b$ and cut $X$.
\end{cor}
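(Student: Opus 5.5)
The plan is to combine Lemma~\ref{lemma:flowvalue} with elementary bounds on the two boundary sums. Fix a flow $f$ that is valid for $b$ and a cut $(X, \V\setminus X)$. By the Flow Value Lemma,
\[
\sum_{v\in X} b(v) = \sum_{e\in\Deltaplus(X)} f(e) - \sum_{e\in\Deltaminus(X)} f(e).
\]
It therefore suffices to show
\[
0 \le \sum_{e\in\Deltaplus(X)} f(e) \le \text{cap}(X)
\quad\text{and}\quad
0 \le \sum_{e\in\Deltaminus(X)} f(e) \le \text{acap}(X).
\]
Given these, the upper bound comes from dropping the subtracted nonnegative sum over $\Deltaminus(X)$ and bounding the sum over $\Deltaplus(X)$ by $\text{cap}(X)$. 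The lower bound comes from dropping the nonnegative sum over $\Deltaplus(X)$ and bounding the sum over $\Deltaminus(X)$ by $\text{acap}(X)$.

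Both pairs of inequalities follow by summing pointwise facts over the finite edge sets $\Deltaplus(X)$ and $\Deltaminus(X)$. Finiteness holds because $\E$ is finite under the \multigraph assumptions. The pointwise facts are $f(e)\ge 0$, since flows map into $\mathbb{R}^+_0$, and $f(e)\le u(e)$, which is part of validity, i.e.\ \isuflow. Monotonicity of finite sums then gives the bounds edge by edge.

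The main obstacle is that capacities live in $\mathbb{R}^+_0\cup\{\infty\}$, so $\text{cap}(X)$ and $\text{acap}(X)$ are extended-real sums and the real-valued flow sums must be compared with them. I would cast every real quantity into the extended reals and use that the cast commutes with finite sums and is monotone. If any crossing edge has infinite capacity, the corresponding bound is trivial. Otherwise all capacities on that side are finite and the comparison is an ordinary real one. Negation also needs a check: $-\text{acap}(X)$ may be $-\infty$, and I would confirm that order reversal under negation holds in the extended reals. With that bookkeeping settled, the two chains of inequalities close the proof.
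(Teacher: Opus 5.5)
Your proposal is correct and follows the paper's route: the paper also obtains the corollary from the Flow Value Lemma combined with the bounds $0 \leq \sum_{e \in \Deltaplus(X)} f(e) \leq \text{cap}(X)$ and $0 \leq \sum_{e \in \Deltaminus(X)} f(e) \leq \text{acap}(X)$, which follow from flow validity. Your extended-real handling of infinite capacities is bookkeeping that the informal proof leaves implicit; it does not change the argument.
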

The intuition behind this statement is that the sum of the balances in a cut is the amount of flow to be sent or removed from the cut. The cut capacities are a natural bound for the movement of flow.

For a vertex $v$ and a valid flow $f$, we call those vertices that are reachable from $v$ by a path over residual edges with positive residual capacity a \textit{residual cut} $Rescut_f(v)$. A simple and well-known fact follows from the definition of the residual cut:
\isathm{Flow_Theory/Residual.thy}{1216}{1219}%
\begin{lem}\label{lemma:residualsat}
For a residual cut $X$ around a vertex $v$, every edge in $\Deltaplus(X)$ is saturated, i.e.\ the total flow leaving $X$ is $\textnormal{cap}(X)$ and every edge in $\Deltaminus(X)$ has zero flow.
\end{lem}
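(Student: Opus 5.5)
The argument is by contradiction, edge by edge, using only the definitions of residual capacities and of $Rescut_f(v)$. Let $X = Rescut_f(v)$. I take $v \in X$ via the empty path, and I read the definition so that $X$ is closed under following residual edges of positive residual capacity. Concretely: if $x \in X$ and there is a residual edge $e'$ from $x$ to $y$ with $\residualu_f(e') > 0$, then $y \in X$. This holds because we can append $e'$ to a path from $v$ to $x$ whose residual edges all have positive capacity, and the resulting path still has only positive-capacity edges. This closure property is the only fact about $X$ the proof needs.

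Let $e \in \Deltaplus(X)$, so $\mathfst(e) \in X$ and $\mathsnd(e) \notin X$. Suppose $e$ were unsaturated. Then $\residualu_f(F\,e) = u(e) - f(e) > 0$; if $u(e) = \infty$, I use the convention that this residual capacity is $\infty > 0$. The forward edge $F\,e$ points from $\mathfst(e)$ to $\mathsnd(e)$, so closure gives $\mathsnd(e) \in X$, a contradiction. Hence $f(e) = u(e)$. (If $u(e) = \infty$, this argument shows that no such $e$ can exist, since $f(e)$ is finite.)

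Now let $e \in \Deltaminus(X)$, so $\mathsnd(e) \in X$ and $\mathfst(e) \notin X$. If $f(e) > 0$, then $\residualu_f(B\,e) = f(e) > 0$. The backward edge $B\,e$ points from $\mathsnd(e)$ to $\mathfst(e)$, so closure gives $\mathfst(e) \in X$, again a contradiction. Since flows are nonnegative, $f(e) = 0$.

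Summing over $\Deltaplus(X)$ gives $\sum_{e \in \Deltaplus(X)} f(e) = \sum_{e \in \Deltaplus(X)} u(e) = \text{cap}(X)$, which completes the statement. I expect the main obstacle to be the formal bookkeeping rather than any mathematics. First, the closure property has to be stated carefully, including the trivial path for $v$ itself and path concatenation over residual edges. Second, the extended-real arithmetic with $\infty$ capacities must be handled, and this is where I would look first if the formal proof gets stuck.
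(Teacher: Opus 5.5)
Your proof is correct and is the same argument as the paper's: an unsaturated edge in $\Delta^+(X)$, or an edge in $\Delta^-(X)$ with positive flow, would give a positive-capacity residual edge leaving $X$, which contradicts the definition of the residual cut. Your additional care (the closure property, the trivial path for $v$, and the observation that no infinite-capacity edge can leave $X$) only spells out what the paper's one-line sketch leaves implicit.
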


If $f(e) < u(e)$ for $e \in \Deltaplus(X)$ or if $f(e) > 0$ for some $e\in\Deltaminus(X)$, there would be a residual edge with positive residual capacity leaving $X$, which contradicts the definition of the residual cut.

Apart from cuts, \textit{flow decomposition}~\cite{GallaiDecomposition,FordFulkersonFlowsNetworks} is another fundamental technique when reasoning about flows.
There are several decomposition results, one of which is the decomposition of so-called circulations, which is also relevant to minimum-cost flows.
We call a flow $g$ that is valid for $b(v) =0$, i.e.\ where $ex_g(v) = 0$ for all $v \in \V$, a \textit{circulation}.

\begin{figure}
\begin{center}
\begin{tikzpicture}[
      mycircle/.style={
         circle,
         draw=black,
         fill=gray,
         fill opacity = 0.3,
         text opacity=1,
         inner sep=0pt,
         minimum size=10pt,
         font=\small},
      myarrow/.style={-Stealth},
      node distance=0.7cm and 1.4cm
      ]
      \node[ mycircle] (c1)  {};
      \node[ mycircle,below right=of c1] (c2) {};
      \node[ mycircle,right=of c2] (c3) {};
      \node[ mycircle,above right=of c1] (c4) {};
      \node[ mycircle,right=of c4] (c5) {};
      \node[ mycircle,below right=of c5] (c6) {};

    \foreach \i/\j/\txt/\p in {%
      c2/c1/$2$/below,
      c1/c4/$2$/above,
      c3/c2/$4$/below,
      c3/c6/$2$/below,
      c5/c4/$3$/above,
      c6/c5/$2$/above,
      c2/c5/$7$/below,
      c5/c3/$6$/below,
      c4/c2/$5$/above}
      \draw [myarrow] (\i) -- node[sloped,font=\scriptsize,\p] {\txt} (\j);

\end{tikzpicture}
\nolinebreak
\hspace*{1cm}
\begin{tikzpicture}[
      mycircle/.style={
         circle,
         draw=black,
         fill=gray,
         fill opacity = 0.3,
         text opacity=1,
         inner sep=0pt,
         minimum size=10pt,
         font=\small},
      myarrow/.style={-Stealth},
      node distance=0.7cm and 1.4cm
      ]
      \node[ mycircle] (c1)  {};
      \node[ mycircle,below right=of c1] (c2) {};
      \node[ mycircle,right=of c2] (c3) {};
      \node[ mycircle,above right=of c1] (c4) {};
      \node[ mycircle,right=of c4] (c5) {};
      \node[ mycircle,below right=of c5] (c6) {};

\draw[myarrow] (c2) -- node[sloped, color = green, font = \scriptsize, below] {2} (c1);
\draw[myarrow] (c1) -- node[sloped, color = green, font = \scriptsize, above] {2} (c4);
\draw[myarrow] (c4.235) -- node[sloped, color = green, font = \scriptsize, below] {2} (c2.125);
\draw[myarrow] (c4.285) -- node[sloped, color = red, font = \scriptsize, above] {3} (c2.75);
\draw[myarrow] (c5) -- node[sloped, color = red, font = \scriptsize, above] {3} (c4);

\draw[myarrow] (c5.255) -- node[sloped, color = blue, font = \scriptsize, below] {4} (c3.105);
\draw[myarrow] (c5.305) -- node[sloped, color = green, font = \scriptsize, above] {2} (c3.55);

\draw[myarrow] (c2.55) -- node[sloped, color = red, font = \scriptsize, above] {3} (c5.200);
\draw[myarrow] (c2.15) -- node[sloped, color = blue, font = \scriptsize, below] {4} (c5.240);

\draw[myarrow] (c3) -- node[sloped, color = blue, font = \scriptsize, below] {4} (c2);

\draw[myarrow] (c3) -- node[sloped, color = green, font = \scriptsize, below] {2} (c6);

\draw[myarrow] (c6) -- node[sloped, color = green, font = \scriptsize, above] {2} (c5);

\end{tikzpicture}

\end{center}

\caption{A Circulation and its Decomposition\label{fig:decomp}}
\end{figure}
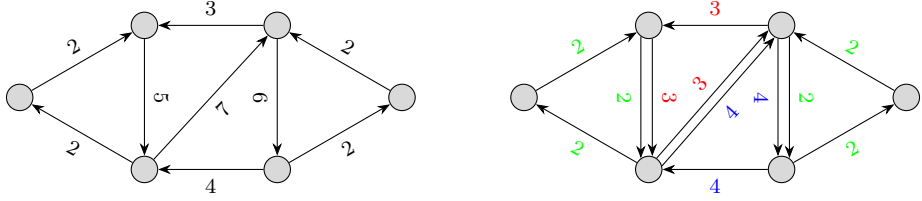

\isathm{Flow_Theory/Decomposition.thy}{471}{480}%
\begin{lem}[Decomposition of Circulations~\cite{FordFulkersonFlowsNetworks}\label{lemma:circdecomp}]
For any circulation $g$, we can find a \textit{decomposition}, i.e.\ there is a set of cycles $\mathcal{C}$ and weights $w : \mathcal{C} \rightarrow \mathbb{R}^+$ where any edge $e$ has $g(e) = \sum \limits_{C \in \mathcal{C} \wedge e \in C} w(C)$. Moreover, if $g(e) \in \mathbb{N}$ for all $e$, then $w(C) \in \mathbb{N}$ for all $C \in \mathcal{C}$.
\end{lem}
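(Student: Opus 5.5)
I would prove the lemma by strong induction on the size of the support of $g$, i.e.\ on the number $|\{e \in \E \mid g(e) > 0\}|$, which is finite because $\E$ is finite (as assumed by \multigraph). If the support is empty, the empty set of cycles with the empty weight function is a decomposition, and the integrality claim holds vacuously. In the inductive step I will extract one cycle $C$ whose edges all carry positive flow. I then subtract $w(C) = \min_{e\in C} g(e)$ along $C$. This yields a new circulation $g'$ whose support is strictly smaller, because at least one edge of $C$ attains the minimum and drops to zero. Applying the induction hypothesis to $g'$ and adding $C$ with weight $w(C)$ gives a decomposition of $g$.

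The key auxiliary step is the existence of a cycle in the support. Pick any edge $e_1$ with $g(e_1)>0$ and set $v = \mathsnd(e_1)$. Since $ex_g(v)=0$, the incoming flow at $v$ is positive, so the outgoing flow $\sum_{e\in\deltaplus(v)} g(e)$ is also positive, and there is an edge $e_2 \in \deltaplus(v)$ with $g(e_2)>0$. Iterating gives an arbitrarily long walk $e_1,e_2,\dots$ in the support. Since $\V$ is finite, some vertex repeats among the first $|\V|+1$ vertices, and the segment between two occurrences is a cycle. I would then shorten it to a cycle without repeated vertices by taking the segment between the first repetition, so each edge occurs at most once in $C$. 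This makes the condition ``$e \in C$'' in the statement unambiguous, and subtracting $w(C)$ once per edge keeps every value of $g'$ nonnegative. Formally, I expect to phrase this as a separate lemma: \emph{every nonzero circulation contains a simple cycle of positive edges}. I would prove it by well-founded recursion on a path that is extended until a vertex repeats.

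Next I check that $g'$ is again a circulation. Every vertex on $C$ loses exactly $w(C)$ on one incoming edge and $w(C)$ on one outgoing edge, since $C$ is vertex-simple. Vertices off $C$ are unaffected. So $ex_{g'}=ex_g=0$, and $g'\ge 0$ by the choice of $w(C)$. Let $\mathcal C'$ and $w'$ be the decomposition of $g'$ given by the induction hypothesis. I take $\mathcal C=\mathcal C'\cup\{C\}$. If $C$ already lies in $\mathcal C'$, I instead add $w(C)$ to $w'(C)$, so that sums indexed by the set $\mathcal C$ remain correct. The identity $g(e)=\sum_{C'\ni e} w(C')$ then follows edgewise. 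For integrality, if $g$ is integral, then $w(C)$ is a minimum of naturals and hence a natural, $g'$ is integral, and the induction hypothesis gives natural weights on $\mathcal C'$.

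I expect the main obstacle to be the cycle-extraction step, together with the bookkeeping of cycles as edge lists versus edge sets in a multigraph. Turning the informal ``keep walking until a vertex repeats'' into an actual simple cycle requires a careful pigeonhole argument on the walk. Showing that the excess of each vertex drops by exactly zero also needs the sums over $\deltaplus(v)$ and $\deltaminus(v)$ restricted to $C$ to contain exactly one edge each, which is where vertex-simplicity is essential. A secondary nuisance is the possible coincidence $C\in\mathcal C'$, which I handle by merging weights as above rather than assuming the cycles are distinct.
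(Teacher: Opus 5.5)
Your proposal is correct and takes the same route as the paper. The paper also extracts a cycle from the support of $g$ using zero excess and pigeonhole, subtracts the minimum flow along it, and inducts on the cardinality of the support; it likewise singles out as the key point that every vertex has either no edges, or exactly one entering and one leaving edge, in the extracted cycle.

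The case $C \in \mathcal{C}'$ that you guard against cannot actually arise when weights are strictly positive, because some edge of $C$ has $g'$-value $0$. Merging the weights anyway is harmless.
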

This is illustrated by Figure~\ref{fig:decomp}.
\begin{proof}[Proof Sketch]
We use a greedy search to find cycles: as long as there is $e$ with $g (e)\neq 0$, there must be a vertex that has a leaving edge $e$ with $g(e) > 0$.
Since $g$ is a circulation with zero excess everywhere, for every $v$ and $e_1$ with $e_1 \in \deltaminus(v)$ and $g(e_1) > 0$ we can find $e_2 \in \deltaplus(v)$ with $g(e_2) > 0$.
If we repeat this search $n= |\V|$ times, a vertex in the constructed path is repeated implying a cycle $C$.
We assign $w(C):=\min\limits_{e \in C} g(e)$, decrease $g(e)$ by $w(C)$ for every $e \in C$ and continue this process until $g(e)= 0$ for all $e$.
\end{proof}

\newcommand{\findcycle}{\isainl{Flow_Theory/Decomposition.thy}{25}{30}{find-cycle}\xspace}
\newcommand{\tttg}{\texttt{g}\xspace}
\newcommand{\ttte}{\texttt{e}\xspace}
\newcommand{\Abs}{\texttt{Abs}\xspace}
\newcommand{\flowcycle}{\texttt{flowcycle}\xspace}

\paragraph*{Formalisation of Circulation Decomposition}
Here, we encounter some difficulties with non-algebraic/graphical/geometric reasoning:
The existence of a structure (a cycle) is intuitively obvious, yet we need to make its construction formally precise which is done by
the function \findcycle.
It searches for a path by modelling the ``steps'' from the path's informal construction.
The ``continue this process'' from the informal proof corresponds to formally proving \thmcircdecomp by induction on the cardinality of the support (set of edges with strictly positive flow).
The hardest part is showing that the flow remains a circulation, i.e.\ a flow with zero excess at every vertex, after decreasing the flow along one of its cycles.
The central property, which one would barely pay attention to informally, is that for every cycle $C$ found by \findcycle and for every vertex $v$ in the graph,
$v$ either has no leaving or entering edge in $C$ or exactly one leaving and one entering edge in $C$.

\section{Towards a simple Algorithm}
\newcommand{\augcycle}{\ensuremath{C}\xspace}
\newcommand{\augpath}{\ensuremath{P}\xspace}
\newcommand{\flow}{\ensuremath{f}\xspace}
\newcommand{\ssp}{\textit{successive-shortest-path}\xspace}
\newcommand{\ssppar}{\textit{successive-shortest-path}\xspace}
\newcommand{\augmentedge}{\isainl{Flow_Theory/Augmentation.thy}{429}{434}{augment-edge}\xspace}
\newcommand{\augmentedges}{\isainl{Flow_Theory/Augmentation.thy}{454}{456}{augment-edges}\xspace}
\newcommand{\augpathvalidness}{\isainl{Flow_Theory/Augmentation.thy}{1049}{1056}{augment-path-validness-b-pres-source-target-distinct}\xspace}
\newcommand{\costchangeaug}{\isainl{Flow_Theory/Augmentation.thy}{1219}{1221}{cost-change-aug}\xspace}
\newcommand{\oedge}{\texttt{oedge}\xspace}
\newcommand{\set}{\texttt{set}\xspace}
\newcommand{\Rcap}{\texttt{Rcap}\xspace}
\newcommand{\hd}{\texttt{hd}\xspace}
\newcommand{\last}{\texttt{last}\xspace}
\newcommand{\state}{\texttt{state}\xspace}
\newcommand{\recupd}{\texttt{:=}\xspace}
\newcommand{\infeasible}{\texttt{infeasible}\xspace}
\newcommand{\tlpar}{\texttt{(}\xspace}
\newcommand{\trpar}{\texttt{)}\xspace}
\newcommand{\return}{\isainl{Mincost_Flow_Algorithms/SSP_Preps.thy}{13}{13}{return}\xspace}
\newcommand{\getsource}{\isainl{Mincost_Flow_Algorithms/Successive_Shortest_Path.thy}{12}{12}{get-source}\xspace}
\newcommand{\getreachabletarget}{\isainl{Mincost_Flow_Algorithms/Successive_Shortest_Path.thy}{13}{13}{get-reachable-target}\xspace}
\newcommand{\getminaugpath}{\isainl{Mincost_Flow_Algorithms/Successive_Shortest_Path.thy}{14}{14}{get-min-augpath}\xspace}
\newcommand{\fdom}{\texttt{f-dom}\xspace}
\newcommand{\tttinput}{\texttt{input}\xspace}
\newcommand{\sspcallfourcond}{\isainl{Mincost_Flow_Algorithms/Successive_Shortest_Path.thy}{146}{161}{SSP-call-4-cond}\xspace}
\newcommand{\True}{\texttt{True}\xspace}
\newcommand{\False}{\texttt{False}\xspace}
\newcommand{\SSP}{\texttt{SSP}\xspace}
\newcommand{\SSPfun}{\isainl{Mincost_Flow_Algorithms/Successive_Shortest_Path.thy}{36}{54}{SSP}\xspace}
\newcommand{\SSPloc}{\isainl{Mincost_Flow_Algorithms/Successive_Shortest_Path.thy}{9}{33}{SSP}\xspace}
\newcommand{\Algostate}{\isainl{Mincost_Flow_Algorithms/SSP_Preps.thy}{11}{13}{Algo-state}\xspace}
\newcommand{\diffisrescirc}{\isainl{Flow_Theory/Cost_Optimality.thy}{320}{322}{diff-is-res-circ}\xspace}
\newcommand{\rcostdifference}{\isainl{Flow_Theory/Cost_Optimality.thy}{414}{414}{rcost-difference}\xspace}
\newcommand{\noaugcyclemincostflow}{\isainl{Flow_Theory/Cost_Optimality.thy}{487}{489}{no-augcycle-min-cost-flow}\xspace}
\newcommand{\isoptiffnoaugcycle}{\isainl{Flow_Theory/Cost_Optimality.thy}{628}{630}{is-opt-iff-no-augcycle}\xspace}
\newcommand{\SSPupdfour}{\isainl{Mincost_Flow_Algorithms/Successive_Shortest_Path.thy}{202}{210}{SSP-upd4}\xspace}
\newcommand{\sspsimps}{\isainl{Mincost_Flow_Algorithms/Successive_Shortest_Path.thy}{226}{231}{SSP-simps}\xspace}
\newcommand{\sspinduct}{\isainl{Mincost_Flow_Algorithms/Successive_Shortest_Path.thy}{252}{256}{SSP-induct}\xspace}
\newcommand{\invaropt}{\isainl{Mincost_Flow_Algorithms/SSP_Preps.thy}{119}{119}{invar-opt}\xspace}
\newcommand{\invaroptholdsfour}{\isainl{Mincost_Flow_Algorithms/Successive_Shortest_Path.thy}{644}{646}{invar-opt-holds-4}\xspace}
\newcommand{\invaroptholds}{\isainl{Mincost_Flow_Algorithms/Successive_Shortest_Path.thy}{725}{727}{invar-opt-holds}\xspace}
\newcommand{\invaroptinitialstate}{\isainl{Mincost_Flow_Algorithms/Successive_Shortest_Path.thy}{770}{771}{invar-opt-initial-state}\xspace}
\newcommand{\totalcorrectness}{\isainl{Mincost_Flow_Algorithms/Successive_Shortest_Path.thy}{820}{824}{total-correctness}\xspace}

\emph{Augmentation} is a principal technique in combinatorial optimisation by which we incrementally improve a candidate solution until an optimal solution is found.
In the context of flows, augmenting (along) a forward edge by a positive real $\gamma$ means to increase the flow assigned to the original edge by $\gamma$.
We augment along a backward edge by decreasing the flow value of the original edge by $\gamma$.
One can augment along a path by augmenting along each edge contained.
We call a path of residual edges along which the residual capacities are strictly positive an \textit{augmenting path}.
Closed augmenting paths \augpath with $\residualc(p) < 0$ are \textit{augmenting cycles}.
\begin{exa}
In Figure~\ref{graphC}, one can see the result of augmenting the flow from Figure~\ref{graphA} along the edges $(u,v)$ and $(v,u)$ by $2$.
Figure~\ref{graphD} shows the resulting new residual network.
Please note the change in capacities in forward and backward edges.
\end{exa}

The formalisation of augmentation consists of two functions:
\augmentedge augments the flow \tttf along a single edge \ttte, which \augmentedges uses to augment along a sequence of residual edges, which is formalised as an ordinary list.
We can then show \augpathvalidness, which is that augmenting a valid $b$-flow \flow by $\gamma \leq \residualu_f(p)$ along an augmenting path \augpath from a vertex $s$ to a vertex $t$ results in a flow $f'$ that is valid for the balances $b'$ where $b'(s) = b(s) + \gamma$, $b'(t) = b(t) - \gamma$ and $b'(v) = b(v)$ for $v \neq s,t$. We also show in \costchangeaug that the costs change by $\gamma \cdot \residualc(p)$ when augmenting along \augpath by $\gamma$.

\renewcommand{\frame}{}
\begin{figure}[]
\frame{
\begin{subfigure}{0.485\textwidth}
{\centering
\frame{
   \trimbox{0mm {0.5\baselineskip} 0mm 0mm}{
\begin{tikzpicture}[
      mycircle/.style={
         circle,
         draw=black,
         fill=gray,
         fill opacity = 0.3,
         text opacity=1,
         inner sep=0pt,
         minimum size=12pt,
         font=\normalsize},
      myarrow/.style={-Stealth},
      node distance=1cm and 4cm
      ]
      \node[ mycircle] (c1) {$u$} ;
      \node[ mycircle,right=of c1] (c2) {$v$};
     \draw [myarrow] (c1) to [out=10,in=170] node[sloped,font=\footnotesize, above]
      {$5$, \color{forestgreen} $8$\color{black}\text{,} \color{red} $2$} (c2);
     \draw [myarrow] (c2) to [out=250,in=290] node[sloped,font=\footnotesize, above]
      {$4$, \color{forestgreen} $6$\color{black}\text{,} \color{red} $3$} (c1);
      \end{tikzpicture}
      }
      }
      \par}
      \caption{\label{graphA} The original flow}
\end{subfigure}
}
\nolinebreak
\frame{
\begin{subfigure}{0.485\textwidth}
{\centering
   \frame{
      \trimbox{0mm {0.25\baselineskip} 0mm 0mm}{
\begin{tikzpicture}[
      mycircle/.style={
         circle,
         draw=black,
         fill=gray,
         fill opacity = 0.3,
         text opacity=1,
         inner sep=0pt,
         minimum size=12pt,
         font=\normalsize},
      myarrow/.style={-Stealth},
      node distance=1cm and 4cm
      ]
      \node[ mycircle] (c3) {$u$} ;
      \node[ mycircle,right=of c3] (c4) {$v$};
     \draw [myarrow, blue] (c3) to [out=10,in=170] node[sloped,font=\footnotesize, above=-0.75mm]
      {\color{forestgreen} $3 (=8-5)$\color{black}\text{,} \color{red} $2$} (c4);
    \draw [myarrow, purple, dashed] (c4) to [out=190,in=350] node[sloped,font=\footnotesize, below=-0.75mm]
      {\color{forestgreen} $5$\color{black}\text{,} \color{red} $-2$} (c3);
    \draw [myarrow, purple, dashed] (c3) to [out=35,in=145] node[sloped,font=\footnotesize, above=-1mm]
      {\color{forestgreen} $4$\color{black}\text{,} \color{red} $-3$} (c4);
    \draw [myarrow, blue] (c4) to [out=215,in=325] node[sloped,font=\footnotesize, below]
      {\color{forestgreen} $2(=6-4)$\color{black}\text{,} \color{red} $3$} (c3);
      \end{tikzpicture}
      }
      }
      \par}
     \caption{\label{graphB} Original Residual Graph}
      \end{subfigure}
} \\

\frame{
   \begin{subfigure}{0.485\textwidth}
   {\centering
      \frame{
      \trimbox{0mm {0.5\baselineskip} 0mm 0mm}{
\begin{tikzpicture}[
      mycircle/.style={
         circle,
         draw=black,
         fill=gray,
         fill opacity = 0.3,
         text opacity=1,
         inner sep=0pt,
         minimum size=12pt,
         font=\small},
      myarrow/.style={-Stealth},
      node distance=1cm and 4cm
      ]
    \node[ mycircle] (d1) {$u$} ;
    \node[ mycircle,right=of d1] (d2) {$v$};
    \draw [myarrow] (d1) to [out=10,in=170] node[sloped,font=\footnotesize, above]
      {$7$, \color{forestgreen} $8$\color{black}\text{,} \color{red} $2$} (d2);
     \draw [myarrow] (d2) to [out=250,in=290] node[sloped,font=\footnotesize, above=]
      {$6$, \color{forestgreen} $6$\color{black}\text{,} \color{red} $3$} (d1);
    \end{tikzpicture}
    }
    }
\par}
\caption{\label{graphC} Flow after Augmentation}
\end{subfigure}
}
\nolinebreak
\frame{
   \begin{subfigure}{0.485\textwidth}
   {\centering
      \frame{
      \trimbox{0mm {0.25\baselineskip} 0mm 0mm}{
\begin{tikzpicture}[
      mycircle/.style={
         circle,
         draw=black,
         fill=gray,
         fill opacity = 0.3,
         text opacity=1,
         inner sep=0pt,
         minimum size=12pt,
         font=\small},
      myarrow/.style={-Stealth},
      node distance=1cm and 4cm
      ]

      \node[ mycircle] (d3) {$u$} ;
      \node[ mycircle,right=of d3] (d4) {$v$};
     \draw [myarrow, blue] (d3) to [out=10,in=170] node[sloped,font=\footnotesize, above=-0.75mm]
      {\color{forestgreen} $1$\color{black}\text{,} \color{red} $2$} (d4);
    \draw [myarrow, purple, dashed] (d4) to [out=190,in=350] node[sloped,font=\footnotesize, below=-0.75mm]
      {\color{forestgreen} $7$\color{black}\text{,} \color{red} $-2$} (d3);
    \draw [myarrow, purple, dashed] (d3) to [out=35,in=145] node[sloped,font=\footnotesize, above=-1mm]
      {\color{forestgreen} $6$\color{black}\text{,} \color{red} $-3$} (d4);
    \draw [myarrow, blue] (d4) to [out=215,in=325] node[sloped,font=\footnotesize, below]
      {\color{forestgreen} $0$\color{black}\text{,} \color{red} $3$} (d3);
    \end{tikzpicture}
    }
    }
\par}
\caption{\label{graphD} Residual Graph after Augmentation}
\end{subfigure}
}

\caption{\label{fig2} Flows, (residual) capacities and (residual) costs are black, green and red, respectively. Forward edges are blue, backward edges purple.
Subfigures~\ref{graphA} and~\ref{graphB} are a repetition of Figure~\ref{fig2o} for the sake of readability.}
\end{figure}

{
\SetAlCapHSkip{0pt}
\SetAlgoHangIndent{0pt}
\SetInd{0.5em}{0.5em}
\SetVlineSkip{0.3mm}
\setlength{\algomargin}{1.2cm}
\begin{algorithm}[t!]
\renewcommand{\theAlgoLine}{SSP\arabic{AlgoLine}}
\caption{$\ssp(\mathcal{E}, \mathcal{V}, u, c, b)$\label{algo:SSP}}
initialise $b' \leftarrow b$ and $f \leftarrow 0$;\\
\While{True}{
\eIf{$\forall \, v \in \mathcal{V}. \; b'(v) = 0$}{
\textbf{return} $f$ as optimum flow;
}
{
take some $s$ with $b'(s) > 0$;\\
\uIf{$\exists \, t$ reachable from $s$ $\wedge$ $b'(t) < 0$}{
take such a $t$ and a min-cost augpath $P$ from $s$ to $t$;\\
$\gamma = \min \lbrace b(s), - b(t), \mathfrak{u}_f (P)\rbrace$ ;\\
augment $\gamma$ along $P$;\\
$b'(s) \leftarrow b'(s) - \gamma$; $b'(t) \leftarrow b'(t) + \gamma$;
}
\textbf{else} \textbf{return} \textit{infeasible};
}
}
\end{algorithm}
}

For this and all subsequent sections, we fix a weight-conservative flow network $\mathcal{G} = (\E, \V, u, c)$.
Unless said otherwise, costs and capacities refer to this network.
We keep the balances $b$ generic, since these are subject to change during the computation.
Algorithm~\ref{algo:SSP} is one of the most basic minimum-cost flow algorithms.
$\ssp(\mathcal{G}, b)$ repeatedly selects a source $s$ with positive balance, a target $t$ with negative balance and a \emph{minimum-cost augmenting path} \augpath connecting $s$ to $t$,
i.e.\ a minimum-cost path in the residual network connecting $s$ to $t$.
Following that, it sends as much flow as possible, i.e.\ as much as the minimum capacity of any residual edge in \augpath or as the balances of $s$ and/or $t$ allow, from $s$ to $t$ along \augpath.
We lower and increase the balances at $s$ and $t$ by the same amount, respectively.
We continue this until all balances reach zero or until we can infer infeasibility from the absence of an augmenting path.

Conceptually, the algorithm fixes the variables $\E$, $\V$, $u$, $c$ and $b$, and works on \textit{program states} consisting of the changing variables \flow (current flow) and $b'$ (remaining balances).
Invariants are predicates defined on states and might use the fixed variables, as well.
If not all variables are relevant to an invariant, we say that only the involved variables satisfy the invariant.

The loop of Algorithm~\ref{algo:SSP} is formalised as the function \SSPfun whose non-trivial termination argument requires the Isabelle function package~\cite{functionPackageIsabelle}.
We use a record \Algostate to model program states.
Most notation is standard functional programming notation.
The main exception is record updates, e.g.\ $\state \tlpar \return \; \recupd \; \infeasible \trpar$ denotes \state, but with the variable \return updated to \infeasible .
Formally, selecting reachable targets and minimum-cost paths corresponds to using functions (\getsource, \getreachabletarget and \getminaugpath, respectively) that compute those items non-deterministically.
These are fixed by a locale called \SSPloc that also makes assumptions about the functions' behaviour.

\paragraph{Correctness of Algorithm~\ref{algo:SSP}.} To prove that the algorithm is correct, we show that the following invariants hold for the states encountered during the main loop of the procedure $\ssp(\mathcal{G}, b)$ (Algorithm~\ref{algo:SSP}):

\begin{enumerate}[label=(ISSP-\theenumi)]
\renewcommand{\theenumi}{ISSP-\arabic{enumi}}
\item The flow \flow is a minimum-cost flow for the balance $b - b'$\footnote{For any $v$, we define this as $(b-b')(v) = b(v) - b'(v)$.}.
\label{invar:optimality}
\item If capacities $u$ and balances $b$ are integral, then $b'(x)$ and $f(e)$ are integral for any vertex $x \in \V$ and $e \in \E$, respectively.
\label{invar:integrality}
\item The sum of $b'$ over all vertices $v$ is zero: $\smash{\sum \limits _{v \in \V} b'(v) = 0}$.
\label{invar:zerobalance}
\end{enumerate}

Invariant~\ref{invar:optimality} is well known in the literature~\cite{KorteVygenOptimisation} and we made the two simple ones explicit by the formalisation.

Proving Invariant~\ref{invar:optimality} was the most demanding and we dedicate most of this section to it.
The other two invariants easily follow from the algorithm's structure.
Correctness of all non-trivial algorithms for minimum-cost flows depends on the following optimality criterion:
\isathm{Flow_Theory/Cost_Optimality.thy}{628}{630}
\begin{thm}[Optimality Criterion~\cite{optCritMinCostFlows}]
\label{optcrit}
A flow \flow valid for balance $b$ is optimum iff there is no augmenting cycle w.r.t.\ \flow.
\end{thm}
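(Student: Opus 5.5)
We prove the two directions separately; the \emph{only if} direction is easy and the \emph{if} direction carries the work, via circulation decomposition (Lemma~\ref{lemma:circdecomp}).

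\emph{Only if.} Suppose $f$ is optimum and there is an augmenting cycle $C$ w.r.t.\ $f$. Then $\residualu_f(C) > 0$ and $\residualc(C) < 0$.
\begin{itemize}
\item Augment along $C$ by $\gamma = \residualu_f(C)$. If $\residualu_f(C)=\infty$, which is possible in uncapacitated parts, pick any finite positive $\gamma$ bounded by the finite residual capacities.
\item Since $C$ is closed, its source and target coincide. By \augpathvalidness, which we apply with $s=t$ (or by reasoning directly edge by edge), the resulting flow $f'$ is still valid for $b$.
\item By \costchangeaug, $c(f') = c(f) + \gamma\cdot\residualc(C) < c(f)$, contradicting optimality.
\end{itemize}
A small technical point is that a cycle may use a residual edge and its reverse, or the same residual edge twice. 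To avoid this I would first reduce to a simple cycle. A closed walk of negative cost decomposes into simple cycles, at least one of which has negative cost, and that cycle still has positive capacity. Alternatively, the augmentation lemma may already permit distinct-edge cycles.

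\emph{If.} Suppose there is no augmenting cycle and let $f^*$ be any flow valid for $b$. We show $c(f)\le c(f^*)$.
\begin{itemize}
\item Define a flow $g$ on the residual network, viewed as a flow network via \residualflow:
\begin{itemize}
\item if $f^*(e) \ge f(e)$, set $g(F\,e) = f^*(e)-f(e)$ and $g(B\,e)=0$;
\item otherwise set $g(B\,e) = f(e)-f^*(e)$ and $g(F\,e)=0$.
\end{itemize}
\item Then $g(F\,e) \le \residualu_f(F\,e)$ and $g(B\,e)\le \residualu_f(B\,e)$. Both $f$ and $f^*$ have the same excesses, so $g$ is a circulation; this is \diffisrescirc.
\item The residual cost of $g$ is $\sum_{e'} g(e')\residualc(e') = c(f^*)-c(f)$, which is \rcostdifference.
\item Apply Lemma~\ref{lemma:circdecomp} to $g$ in the residual multigraph. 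This gives cycles $\mathcal C$ and positive weights $w$ with $g(e')=\sum_{C\ni e'} w(C)$.
\item Each $C\in\mathcal C$ uses only residual edges with $g(e')>0$, hence $\residualu_f(e') > 0$, so $C$ is an augmenting path. By assumption it is not an augmenting cycle, so $\residualc(C)\ge 0$.
\item Exchanging sums gives $c(f^*)-c(f) = \sum_{C} w(C)\,\residualc(C) \ge 0$.
\end{itemize}

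The main obstacle is this backward direction, specifically making the decomposition usable in the residual network. The decomposition lemma must be applied to the residual multigraph with its own capacities, which is why the multigraph setting pays off here. We also need a fact that informal proofs take for granted: the cycles returned have distinct edges, so that $\residualc(C)=\sum_{e'\in C}\residualc(e')$ matches the edge-indexed sum in the exchange step. I would first establish the cost identity for $g$ as a plain finite sum over $\E$, splitting each $e$ into its $F$ and $B$ contributions, and then handle the interchange of summations over $\mathcal C$ and edges as a separate lemma.
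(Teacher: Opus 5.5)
Your proposal is correct and follows the paper's route. The easy direction augments along the cycle to lower the cost. The hard direction forms the difference circulation $g$ in the residual network (your case split is exactly the paper's $\max\{0,\cdot\}$ definition), decomposes it with Lemma~\ref{lemma:circdecomp}, and uses $\residualc(g)=\sum_{C} w(C)\,\residualc(C)$.

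The differences are cosmetic. You argue $c(f)\le c(f^*)$ directly rather than by contradiction. You also spell out technicalities the paper's sketch leaves implicit: the reduction to simple cycles, infinite residual capacities, and edge-distinctness for the sum exchange.
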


\begin{proof}[Proof Sketch]
$(\Rightarrow)$ An augmenting cycle is a possibility to decrease costs while still meeting the balance constraints.

\noindent $(\Leftarrow)$ Assume a valid flow $f'$ with $c(f') < c(f)$.
We define the flow $g$ in the residual graph as $g(F\, e) = \max \lbrace  0, f'(e) - f(e)\rbrace$ and $g(B\,e) = \max \lbrace  0, f(e) - f'(e)\rbrace$.
$g$ is a flow in the residual graph with zero excess for every vertex, i.e.\ a circulation.
Moreover, $\residualc(g) = c(f') - c(f)$ and any residual edge $e$ with $\residualu_f(e) = 0$ has $g(e) = 0$.
We can decompose the circulation $g$ according to Lemma~\ref{lemma:circdecomp}, i.e.\ there is a set of cycles $\mathcal{C}$ and weights $w : \mathcal{C} \rightarrow \mathbb{R}^+$ where any residual edge $e$ has $g(e) = \sum \limits_{C \in \mathcal{C} \wedge e \in C} w(C)$.
Since $0 > c(f') - c(f) = \residualc(g) = \sum \limits _{C \in \mathcal{C}} w(C) \cdot \residualc(C)$, there has to be a cycle $C \in \mathcal{C}$ where $\residualc(C) < 0$. $\residualu_f(C)$ must be positive making this an augmenting cycle w.r.t.\ \flow.
\end{proof}

To show suboptimality in case of an augmenting cycle, we use results like \costchangeaug.
To formalise the second direction of Theorem~\ref{optcrit}, we interpret the residual graph as a flow network \residualflow.
We show in \diffisrescirc that the difference flow is a circulation in the residual graph.
This allows us to apply the formal counterpart of Lemma~\ref{lemma:circdecomp} to the difference flow.
After reasoning about the costs of the difference flow in \rcostdifference, we can show \noaugcyclemincostflow as the second direction of Theorem~\ref{optcrit}.

For proving the mathematical main result of this section, namely, Theorem~\ref{thm911}~\cite{jewellMinCostSSP}, one has to prove a topological property if certain edges are removed from a graph (Lemma~\ref{FBPelim}).
This is a gap in all published proofs of which we are aware and we document our reasoning here.

We consider colourings of the residual edges in red and blue.
We count the same edge coloured once in red and once in blue as two distinct edges.
We saw that for a graph $G$, the residual edges emerging from the edges in $G$ form a multigraph.
For any set $R$ of residual edges (which is a multigraph with $\fstv$ and $\sndv$ in the sense of Section~\ref{sec:background_basic}), the graph consisting of all edges in $R$ both in red and blue is another multigraph, provided that $\fstv$ and $\sndv$ are defined to ignore the colours.
The notions of paths and cycles as defined in Section~\ref{sec:background_basic} apply again.

\begin{exa}\label{example:colours}
Assume a simple graph consisting of three vertices $v_1$, $v_2$ and $v_3$ which are all connected to one another.
When taking colours into account, $p = [{\color{red}F\, (v_1,v_2)}, {\color{red} F\,(v_2,v_3)}, \\{\color{blue} B \, (v_2, v_3)}, {\color{blue}B\, (v_3, v_2)}, {\color{red}F \, (v_3, v_1)},{\color{blue}F\, (v_1,v_2)}]$ is an edge-distinct path visiting $v_1$, $v_2$, $v_3$, $v_2$, $v_3$, $v_1$ and $v_2$.
${\color{red}F\, (v_1,v_2)}$ and ${\color{blue}F\, (v_1,v_2)}$ are two distinct edges which is why the path is edge-distinct despite the repetition.
If we drop the last edge, we obtain a cycle leading from $v_1$ to $v_1$.
We see that for one edge $e$ in the original graph, there would be four coloured residual edges: ${\color{red} F \, e}$, ${\color{red} B \, e}$, ${\color{blue} F \, e}$ and ${\color{blue} B \, e}$.
(\augpath does not use all of them, however).
\end{exa}

We call pairs of residual edges of different colours, where one edge is the reversed one of the other, \textit{forward-backward-pairs (FBP)}, e.g.\ $F\;e$ in blue and $B\;e$ in red.
\textit{FBP}s are involved in lemmas one can use to prove preservation of the optimality invariant of Algorithm~\ref{algo:SSP}.
Subsequently, \textit{disjointness} of paths and cycles means their edge-disjointness.
\begin{figure}
{\centering
\rotatebox{0}{
\trimbox{0.5cm 1.6cm 0.6cm 1.8cm}{
\begin{tikzpicture}[
      mycircle/.style={
         circle,
         draw=black,
         fill=black,
         text opacity=1,
         inner sep=0pt,
         minimum size=3pt,
         font=\small},
      myarrow/.style={-Stealth},
      node distance=0cm and 2cm
      ]
      \node (k) at (0,0) {};
      \node[mycircle,right] (c2) {};
      \node[mycircle,right= 0.5cm of c2] (c3) {};
     \draw [myarrow, blue, thick] (c3) to [in = 40, out=140] (c2);
     \draw [myarrow, red, thick ] (c2) to [out = -40, in=-140] (c3);
     \draw [myarrow, black, dotted, thick, -stealth] (c3) to [out = 60, in=-60, distance=2.6cm] (c3);
      \draw [myarrow, black, dotted, thick, stealth-] (c2) to [out = 120, in=240, distance=2.6cm] (c2);
 \end{tikzpicture}
 }
 }
 \hspace*{2cm}
 \raisebox{-1mm}{
 \rotatebox{90}{
\trimbox{0.9cm 0.4cm 0.9cm 0.4cm}{
\begin{tikzpicture}[
      mycircle/.style={
         circle,
         draw=black,
         fill=black,
         text opacity=1,
         inner sep=0pt,
         minimum size=3pt,
         font=\small},
      myarrow/.style={-Stealth},
      node distance=0cm and 2cm
      ]
      \node[mycircle,right] (c2) {};
      \node[mycircle,right= 0.75cm of c2] (c3) {};
      \draw [myarrow, blue, thick] (c3) to [in = 40, out=140] (c2);
      \draw [myarrow, red, thick ] (c2) to [out = -40, in=-140] (c3);
      \draw [myarrow, black, dotted, thick, stealth- ] (c3) to [out = -60, in=-120, distance=1.8cm] (c2);
      \draw [myarrow, black, dotted, thick, -stealth] (c3) to [out = 60, in=120, distance=2cm] (c2);
\end{tikzpicture}
}
}
}
}
\caption{Eliminating FBPs: \label{FBPcycles} Members of an $FBP$ belong either to the same (left) or to two different cycles (right). When we drop the $FBP$ on the left, we obtain two new cycles. On the right, $FBP$ deletion results in a single new cycle. Disjointness is preserved.}
\end{figure}
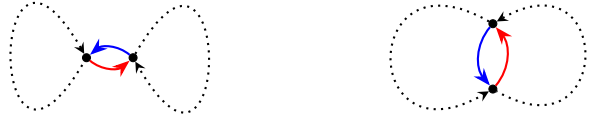
\begin{exa}
In Example~\ref{example:colours}, $\lbrace {\color{red} F\,(v_2,v_3)}, {\color{blue} B \, (v_2, v_3)} \rbrace$ is the only FBP among $p$'s edges.
\end{exa}

\isathm{Mincost_Flow_Algorithms/Path_Aug_Opt.thy}{818}{827}
\begin{lem}\label{FBPcycle}
Deleting all $FBP$s from a set of disjoint cycles yields another set of disjoint cycles.
\end{lem}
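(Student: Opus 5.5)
I plan to prove the statement in a structural form. Let $\mathcal{C}$ be a finite set of pairwise edge-disjoint cycles over coloured residual edges, each cycle itself edge-distinct. Let $E(\mathcal{C})$ be the multiset union of their edges, which is a set by disjointness. Removing every edge that lies in some FBP inside $E(\mathcal{C})$ leaves an edge set $E'$. I want to show that $E'$ is again the edge set of some family $\mathcal{C}'$ of pairwise disjoint cycles. The key reformulation is Euler-type: an edge set decomposes into disjoint cycles exactly when every vertex has equal in- and out-degree within that set. This is the same combinatorial fact that underlies the proof of Lemma~\ref{lemma:circdecomp}. Indeed, I will literally reuse that lemma: regard the indicator function of an edge set as a flow on the coloured multigraph. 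A balanced indicator is an integral circulation, so Lemma~\ref{lemma:circdecomp} gives cycles with natural weights. Each edge has value $1$, so every cycle containing that edge has weight $1$ and no other cycle in the decomposition contains it. This makes the cycles edge-disjoint, and each cycle is edge-distinct.

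The proof then has the following steps.

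First, I show that $E(\mathcal{C})$ is balanced at every vertex. Each cycle contributes to each vertex $v$ equally many entering and leaving edges, by the central property noted in the formalisation paragraph after Lemma~\ref{lemma:circdecomp}, and disjointness lets me sum these contributions.

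Second, I make FBP deletion precise. The FBPs within $E(\mathcal{C})$ must be removed as a matching, so that each edge belongs to at most one deleted pair. An edge $e$ of one colour has a unique reverse of the other colour. Hence each edge lies in at most one FBP, and the FBPs present form a set of disjoint pairs $\{e,\overset{\leftarrow}{e}\}$ with different colours. I expect this to hold given the colour convention, and I will check it explicitly.

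Third, I show that deleting one pair $\{e,\overset{\leftarrow}{e}\}$ preserves balance. If $e$ goes from $x$ to $y$, then $\overset{\leftarrow}{e}$ goes from $y$ to $x$. Removing both lowers the out-degree and the in-degree of $x$ by one each, and likewise for $y$. When $x=y$ there is a self-loop subtlety, but both counts still drop equally. Iterating over the finitely many pairs, by induction on their number, shows that $E'$ is balanced.

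Finally, I apply the indicator argument above to $E'$ to obtain $\mathcal{C}'$.

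I expect the main obstacle to be the Euler-type step, namely turning a balanced edge set back into disjoint cycles in the multigraph of coloured residual edges. Lemma~\ref{lemma:circdecomp} is stated for a fixed flow network. I would first try to instantiate it with the coloured residual multigraph as the edge type, using $\fstv$ and $\sndv$ that ignore colour and unit capacities. Then I would verify that the integral-weight clause yields weight exactly $1$, so that disjointness follows. A secondary care point is the bookkeeping in the second step, where the "all FBPs" being deleted must be well defined and pairwise disjoint.
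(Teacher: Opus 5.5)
Your argument is correct, but it takes a different route from the paper.

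The paper proceeds by induction on the number of FBPs, with local surgery on the cycles. If the two partners of an FBP lie on the same cycle, deleting them splits that cycle into two. If they lie on different cycles, deleting them merges those two cycles into one (Figure~\ref{FBPcycles}). Either way the family remains a set of disjoint cycles after every step.

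You instead weaken the invariant to the degree condition ``in-degree equals out-degree at every vertex''. This is trivially preserved when a pair $e,\overset{\leftarrow}{e}$ is removed. You then reconstruct cycles only once, at the end, by applying Lemma~\ref{lemma:circdecomp} to the $0/1$ indicator. Integrality together with strict positivity of the weights forces every weight to be $1$, which gives edge-disjointness.

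What your route buys:
\begin{itemize}
\item It avoids the case analysis and the degenerate situations that the paper's sketch passes over. For example, the partners may be adjacent on a cycle, so that one of the ``two new cycles'' is empty, or a cycle consisting only of the FBP may vanish.
\item It reuses an already established result.
\end{itemize}

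What it costs:
\begin{itemize}
\item It depends on the heavier decomposition machinery, which must be instantiated for the coloured residual multigraph.
\item It loses any explicit correspondence between old and new cycles. This is harmless, because Lemma~\ref{FBPelim} and Theorem~\ref{thm911} only use the remaining edge set and its cost.
\end{itemize}

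Two small remarks.
\begin{itemize}
\item You attribute the balance of the input cycles to the property stated after Lemma~\ref{lemma:circdecomp}, but that property concerns the vertex-simple cycles produced by the decomposition. Balance holds for any closed edge-distinct walk anyway: pair each entering edge with its successor.
\item Your second step is immediate. The partner map (switch colour, reverse the residual edge) is a fixed-point-free involution, so the deleted edges form disjoint pairs.
\end{itemize}
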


\begin{proof}[Proof Sketch]
Proof by induction on the number of $FBP$s.
For every $FBP$, the partners either belong to the same or to a different cycle, as one can see in Figure~\ref{FBPcycles}.
Deleting an $FBP$ results in two or a single new cycle, respectively.
\end{proof}

\isathm{Mincost_Flow_Algorithms/Path_Aug_Opt.thy}{1736}{1755}
\begin{lem}\label{FBPelim}
Assume an $s$-$t$-path \augpath and some cycles $\mathcal{C}$ where every $FBP$ is between the path and a cycle, all items disjoint with one another.
Deleting all $FBP$s results in an $s$-$t$-path and some cycles, again all disjoint.
\end{lem}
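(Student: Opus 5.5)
The plan is to reduce Lemma~\ref{FBPelim} to Lemma~\ref{FBPcycle} by closing the path into a cycle. Add a fresh edge $e^\ast$ from $t$ to $s$, of a new colour or marked so that it cannot belong to any $FBP$. Then $\augpath$ followed by $e^\ast$ is a cycle $C_0$. The family $\{C_0\}\cup\mathcal{C}$ is a set of disjoint cycles. Its $FBP$s are exactly the $FBP$s of the original configuration, because $e^\ast$ is reversed by no edge present. Applying Lemma~\ref{FBPcycle} gives a set $\mathcal{C}'$ of disjoint cycles whose edges are precisely those of $\augpath$ and $\mathcal{C}$ with all $FBP$ members removed, together with $e^\ast$.

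Since $e^\ast$ is never deleted, exactly one cycle $C^\ast\in\mathcal{C}'$ contains it, and by disjointness it contains it only once. Rotate $C^\ast$ so that $e^\ast$ is its last edge. Removing $e^\ast$ leaves a path $\augpath'$ that starts at $\mathsnd(e^\ast)=s$ and ends at $\mathfst(e^\ast)=t$. The remaining cycles $\mathcal{C}'\setminus\{C^\ast\}$ and $\augpath'$ are pairwise edge-disjoint, and together they use exactly the surviving edges. This gives the claim.

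Two things need checking. First, the hypothesis that every $FBP$ lies between the path and a cycle is only needed for the counting or bookkeeping. Lemma~\ref{FBPcycle} tolerates FBPs inside a single cycle anyway, so the hypothesis is not essential to this argument; I would note this and keep the hypothesis merely because it matches the application. Second, and this is the main obstacle formally, the multigraph of coloured residual edges must be extended by one artificial edge. The cleanest route is to work in an edge type extended by an option or sum constructor, then transport Lemma~\ref{FBPcycle} to it. If that is awkward, I would fall back on a direct induction on the number of $FBP$s. In each step, remove one $FBP$ $\{e,\overset{\leftarrow}{e}\}$ with $e$ on $\augpath=\augpath_1 e \augpath_2$ and $\overset{\leftarrow}{e}$ on a cycle $C$. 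Rotate $C$ to $\overset{\leftarrow}{e}\, C_1$; then $C_1$ runs from $\mathfst(e)$ to $\mathsnd(e)$, and $\augpath_1 C_1 \augpath_2$ is a new disjoint $s$-$t$-path that absorbs $C$. The remaining $FBP$s are still between this path and cycles, which is exactly where the hypothesis of the lemma is used. This is the right-hand case of Figure~\ref{FBPcycles}, and it would be the fallback proof.
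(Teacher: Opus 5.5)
Your main argument is the paper's proof: you close $P$ with a fresh edge $e^\ast$ from $t$ to $s$ that belongs to no $FBP$, apply Lemma~\ref{FBPcycle} to the resulting disjoint cycles, and cut the surviving cycle through $e^\ast$ open again. You are also right that this route never uses the hypothesis that every $FBP$ lies between the path and a cycle.

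Your fallback induction, however, does not work as stated. Suppose $P$ shares a second $FBP$ $\{d,\overset{\leftarrow}{d}\}$ with the absorbed cycle $C$. Then both $d$ and $\overset{\leftarrow}{d}$ lie on the new path $P_1C_1P_2$, so the invariant you rely on is not preserved. You would then have to handle $FBP$s inside the path, which split off cycles and can create $FBP$s between two cycles. These are exactly the case distinctions that the $e^\ast$ trick makes unnecessary.
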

\begin{proof}[Proof Sketch]
We add an edge $e^*$ from $t$ to $s$.
By definition, let it be distinct from any other edge and not be a member of any $FBP$.
The result of adding this additional edge is an edge-disjoint set of edge-distinct cycles $\mathcal{C}$, to which we can apply Lemma~\ref{FBPcycle}.
This yields another set of disjoint cycles $\mathcal{D}$ where exactly the $FBP$s have been removed.
As $e^*$ cannot be member of an $FBP$ in $\mathcal{C}$, one of the cycles from $\mathcal{D}$ must still contain this edge.
Removing $e^*$ from $\mathcal{D}$ results in an $s$-$t$-path and some cycles.
\end{proof}

\noindent The following theorem implies the preservation of Invariant~\ref{invar:optimality}.

\isathm{Mincost_Flow_Algorithms/Path_Aug_Opt.thy}{2429}{2434}
\begin{thm}[Optimality Preservation\label{thm911}~\cite{jewellMinCostSSP}]
Let \flow be a minimum-cost flow for balances $b$.
Take an $s$-$t$-path \augpath of minimum residual costs and $\gamma \leq \residualu_f(P)$.
If we augment \flow by $\gamma$ along \augpath then the result is still optimum for modified balances $b'$ where $b'(s) = b(s) + \gamma$, $b'(t) = b(t) - \gamma$ and $b'(v) = b(v)$ for any other $v$.
\end{thm}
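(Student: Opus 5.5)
We argue by contradiction via the Optimality Criterion (Theorem~\ref{optcrit}). Let $f'$ be the result of augmenting $f$ by $\gamma$ along $P$; by the augmentation validity result it is valid for $b'$. Suppose $f'$ is not optimum. Then Theorem~\ref{optcrit} gives an augmenting cycle $C$ w.r.t.\ $f'$, i.e.\ $\residualc(C)<0$ and $\residualu_{f'}(e)>0$ on every edge of $C$. We may take $C$ edge-distinct; if not, we split it into simple cycles, one of which still has negative cost. The aim is to build from $P$ and $C$ either an augmenting cycle w.r.t.\ $f$, contradicting optimality of $f$, or an $s$-$t$-path that is cheaper than $P$, contradicting the choice of $P$.

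The central observation concerns the edges of $C$ with $\residualu_f(e)=0$. Since $\residualu_{f'}(e)>0$ for such an edge, augmentation must have raised its capacity, so its reverse $\overleftarrow{e}$ lies in $P$. We colour the edges of $P$ red and those of $C$ blue. Every edge $e$ of $C$ with zero $f$-capacity then forms an FBP with an edge of $P$. These are the only FBPs that matter. Any FBP inside $C$ alone would make a two-edge cycle of cost $0$ that we could excise, and $P$ is edge-distinct without loss of generality, since a minimum-cost path in a weight-conservative residual graph can be taken simple. So every FBP lies between the path and the cycle, and $P$ and $C$ are disjoint as coloured multigraphs.

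We now apply Lemma~\ref{FBPelim} to $P$ and $\{C\}$. Deleting all FBPs yields an $s$-$t$-path $P'$ and cycles $\mathcal{D}$, all disjoint. Each FBP has residual cost $\residualc(e)+\residualc(\overleftarrow e)=0$, so
\[
\residualc(P')+\sum_{D\in\mathcal{D}}\residualc(D)=\residualc(P)+\residualc(C)<\residualc(P).
\]
Every remaining edge has positive residual capacity w.r.t.\ $f$. Red edges come from $P$, which is augmenting for $f$. Blue edges with zero $f$-capacity were all removed in FBPs. If some $D\in\mathcal{D}$ has $\residualc(D)<0$, it is an augmenting cycle for $f$, which contradicts Theorem~\ref{optcrit}. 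Otherwise $\residualc(P')<\residualc(P)$, which contradicts minimality of $P$. A path in the multigraph that uses an edge in both colours is still a walk in the residual graph, so we may need to shortcut it to a simple path. Shortcutting removes cycles of nonnegative cost, by weight-conservativity or by optimality of $f$, and so does not increase cost.

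The main obstacle is the bookkeeping in the central observation. We must show that the FBP structure exactly captures the edges of $C$ that are unusable under $f$. There is also the case where $\gamma$ does not saturate anything, and the case where an edge of $C$ coincides with an edge of $P$ in the same direction; the latter is not an FBP but has positive $f$-capacity, so it is harmless. Handling the coloured multigraph carefully, in particular distinguishing red and blue copies of the same residual edge, is what makes Lemma~\ref{FBPelim} applicable. I expect most of the effort to go into that step.
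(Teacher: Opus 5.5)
Your proposal is correct and is essentially the paper's proof: contradiction via Theorem~\ref{optcrit}, colouring $P$ and the augmenting cycle $C$ differently, noting that every edge of $C$ with zero residual capacity under $f$ forms an FBP with its reverse in $P$, applying Lemma~\ref{FBPelim}, and using cost preservation to obtain either a negative cycle w.r.t.\ $f$ or an $s$-$t$-path cheaper than $P$. A few minor points. Your shortcutting of $P'$ covers a detail the paper's sketch leaves implicit. The worry about FBPs inside $C$ is moot, since FBPs pair edges of different colours by definition. Finally, edge-distinctness of $P$ has to come from the hypothesis, not from replacing the given $P$ by another path, because the augmented flow depends on $P$; the paper instead derives simplicity of $P$ from optimality.
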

\begin{proof}[Proof Sketch]
\augpath is vertex-disjoint since any cycle in \augpath would have positive costs (Theorem~\ref{optcrit}) contradicting the optimality of \augpath.
Assume the flow $f'$ after the augmentation is not optimum.
By Theorem~\ref{optcrit}, there exists an augmenting cycle \augcycle.
Wlog.\ \augcycle is vertex-disjoint: otherwise split \augcycle into vertex-disjoint cycles of which one has negative residual costs.

We now construct an augmenting cycle for \flow.
First colour edges in \augpath blue and edges in \augcycle red.
Neither \augpath nor \augcycle can contain any $FBP$s.
We can therefore apply Lemma~\ref{FBPelim} to the union of \augpath and \augcycle yielding another $s$-$t$-path $P'$ and a set of cycles $\mathcal{C}'$.
Their edges have positive residual capacity w.r.t.\ \flow:
for any $e \in P' \cup \bigcup \mathcal{C}'$ we have $\residualu_f(e) > 0$ or $\residualu_{f'}(e) > 0$.
If only the latter holds, then $e \in C$ and $\overset{\leftarrow}{e} \in P$ which is an $FBP$.
However, we would have deleted this.
Since deleting $FBPs$ preserves costs, we have $\residualc(P') + \residualc(\mathcal{C}') = \residualc(P) + \residualc(C)$.
Because $\residualc(P') \geq \residualc(P)$ (optimality of \augpath) and $\residualc(C) < 0$, there must be $D \in \mathcal{C}'$ with $\residualc(D) < 0$.
This is an augmenting cycle w.r.t.\ \flow, which, together with Theorem~\ref{optcrit}, contradicts the assumption that \flow is optimal.
\end{proof}

\isathm{Mincost_Flow_Algorithms/Successive_Shortest_Path.thy}{820}{824}
\begin{thm}[Correctness of Algorithm~\ref{algo:SSP}~\cite{EdmondsKarpScalingFlows}]\label{SSPtotalcorrect}
Assume the sum of balances $b$ over $\V$ is zero.
An execution of $\ssp(\mathcal{G}, b)$ terminates, decides about the existence of a valid flow and returns one in case of existence.
\end{thm}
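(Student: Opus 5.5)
We prove Theorem~\ref{SSPtotalcorrect} by the three invariants \ref{invar:optimality}--\ref{invar:zerobalance} together with a termination argument and a case analysis on how the loop exits. The invariants are shown for the initial state and then preserved by a single iteration; correctness follows by induction over the recursion of the loop.

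\textbf{Invariants.} Initially $f = 0$ and $b' = b$, so $b - b' = 0$. For zero balances, $f = 0$ is the unique valid flow because $f \geq 0$ everywhere. It is optimal by Theorem~\ref{optcrit}: with zero flow every backward residual edge has capacity $0$, so any augmenting cycle would consist of forward edges only and would be a negative cycle in $\mathcal{G}$, contradicting weight-conservativity. Invariant~\ref{invar:zerobalance} holds initially by assumption, and integrality is immediate.

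For one iteration we augment $\gamma \le \residualu_f(P)$ along a minimum-cost $s$-$t$-path $P$. The new flow is valid for $b - b'$ with $\gamma$ added at $s$ and subtracted at $t$ (\augpathvalidness). This matches the update $b'(s) \leftarrow b'(s)-\gamma$, $b'(t) \leftarrow b'(t)+\gamma$, and Theorem~\ref{thm911} gives optimality. Note that a minimum-cost path exists because $f$ is optimal: by Theorem~\ref{optcrit} there are no negative cycles in the residual graph. Invariant~\ref{invar:zerobalance} is preserved because $\gamma$ is added and subtracted once each. Integrality is preserved since $\gamma$ is a minimum of integral quantities.

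\textbf{Exit cases.} If the loop stops with $b' = 0$, then $f$ is an optimum flow for $b$ by \ref{invar:optimality}. If it stops with infeasibility, some $s$ has $b'(s) > 0$ but no reachable $t$ with $b'(t) < 0$. Let $X = Rescut_f(s)$. Every $v \in X$ has $b'(v) \ge 0$ and $b'(s) > 0$, so $\sum_{v\in X} b'(v) > 0$. Suppose some flow $g$ is valid for $b$. By Lemma~\ref{lemma:flowvalue}, $\sum_{X} b \le \text{cap}(X)$, where the bound is attained only by saturating the cut in the same way. By Lemma~\ref{lemma:residualsat}, $f$ saturates $\Deltaplus(X)$ and is zero on $\Deltaminus(X)$, so the flow value lemma applied to $f$ and balances $b-b'$ gives $\sum_X (b - b') = \text{cap}(X)$. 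Hence $\sum_X b = \text{cap}(X) + \sum_X b' > \text{cap}(X)$, which contradicts Corollary~\ref{corobalance}. So no valid flow exists.

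\textbf{Termination.} This is the main obstacle. Every iteration either zeroes $b'(s)$ or $b'(t)$, or saturates an edge of $P$. With integral data, $\sum_v |b'(v)|$ is a natural number that strictly decreases by $2\gamma \ge 2$ each iteration, because $\gamma \ge 1$. For real data, however, termination is not guaranteed in general. I would therefore state and prove the theorem under the integrality assumption on $b$ and $u$, which is what Invariant~\ref{invar:integrality} is set up for, and use $\sum_v |b'(v)|$ as the termination measure for the function package. Positivity of $\gamma$ needs that $\residualu_f(P) > 0$, i.e.\ that $P$ is an augmenting path; this must be stated among the assumptions on \getminaugpath.
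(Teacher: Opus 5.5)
Your proof is correct and follows the paper's argument. It uses the same invariants with optimality preserved via Theorem~\ref{thm911}, the same termination measure $\sum_v |b'(v)|$ under the integrality assumption (which the paper also relies on implicitly), and the same residual-cut chain $\sum_X b \le \text{cap}(X) = \sum_X (b-b') < \sum_X b$ for infeasibility. One small slip: $f=0$ is not the unique valid flow for zero balances, since a positive circulation around any cycle is also valid, but your Theorem~\ref{optcrit} argument for its optimality does not need that claim.
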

\begin{proof}[Proof Sketch]\let\qed\relax
Due to weight conservativity, the zero flow is optimum for the zero balance making the optimality invariant (Invariant~\ref{invar:optimality}) initially true.
The preservation of Invariant~\ref{invar:optimality} follows from Theorem~\ref{thm911}.
Invariants~\ref{invar:zerobalance} and~\ref{invar:integrality} also hold initially.
We can see their preservation from the algorithm.

As the $\gamma$ used for the augmentations will be a natural number,
the sum of the absolute values of balances will decrease yielding a termination measure.

It remains to show that there is no valid flow if the procedure returns \textit{infeasible}, a case for which we need a further auxiliary result.
We note that \flow is a valid flow w.r.t.\ $b-b'$.
Lemma~\ref{lemma:flowvalue} (Flow-Value Lemma) and Lemma~\ref{lemma:residualsat} yield for any $b$ and any flow \flow valid w.r.t.\ $b$:
\isaeq{Flow_Theory/Residual.thy}{1047}{1049}
\begin{equation}
\sum \limits_{x \in Rescut_f(v)} b(x) = \sum \limits_{e \in \Delta^+(Rescut_f(v))} f(e) =  \text{\upshape cap}(Rescut_f(v)) \hfill \label{lemma:corosat}
\end{equation}
\noindent From the algorithm's control flow we can infer that, if the algorithm returns \textit{infeasible} there must be an $s$ with $b'(s) > 0$ without a reachable $t$ where $b'(t) < 0$,
i.e.\ any $x$ in the rescut has a $b' (x)\geq 0$.
We show that there is a contradiction if there exists a flow $f'$ valid w.r.t.\ $b$.
{\allowdisplaybreaks
\begin{align*}
\sum \limits_{x \in Rescut_f(s)} b(x) & \leq
\text{\upshape cap}(Rescut_f(s))  & \text{($f'$ is valid flow and Corollary~\ref{corobalance})} \\
& = \sum \limits_{x \in Rescut_f(s)} (b-b')(x)  & \text{(Equality~\ref{lemma:corosat} for \flow and $b - b'$)} \\
& <
\sum \limits_{x \in Rescut_f(s)} b(x) & \hspace*{7.9cm} \qedsymbol
\end{align*}
}
\end{proof}

\paragraph{Formalising Correctness of Algorithm~\ref{algo:SSP}}

We introduced definitions to specify which execution branch is taken when doing an iteration of the loop body,
e.g.\ \sspcallfourcond indicating the recursive case of \SSPfun.
It has the same structure as the loop body and returns \True for exactly one branch and \False for all others.
Similarly, we model the effect of a single execution branch, e.g.\ \SSPupdfour \state.
We can show a simplification lemma \sspsimps and an induction principle \sspinduct for \SSP.
We prove the preservation of the invariants for single updates, e.g.\ \invaroptholdsfour for the optimality invariant \invaropt,
and by the simplification and induction lemmas, we can lift this to a complete execution of \SSP giving the preservation lemma \invaroptholds for the whole loop.
Proof automation makes this process very smooth and convenient.
We show the invariants for the initial state, e.g. \invaroptinitialstate, and obtain \totalcorrectness.
We follow the same formalisation methodology for all the algorithms we consider below.

\begin{rem}
Lemma~\ref{FBPelim} is a gap in all combinatorial proofs of the algorithm's correctness to which we have access, including the proof by Korte and Vygen~\cite{KorteVygenOptimisation}.
The only other complete proof of which we are aware is a non-combinatorial proof by Orlin~\cite{OrlinScalingFlow, OrlinFlowBook}, in which he uses LP-theory.
The main challenge is being able to show that, if after the augmentation, $f'$ is not optimal, then \flow must have also not been optimal.
To construct an augmenting cycle for \flow, we need to extract one cycle from \augcycle that is disjoint from all other cycles in \augcycle.
Otherwise the proof that one cycle has negative cost would be impossible due to the interference in summations of costs of members of \augcycle.
To do that we had to devise the machinery of colouring the edges, FBPs, and the removal of FBPs.
Other than that machinery, the proof boils down to one trick: adding an extra edge to turn the path \augpath into a cycle makes the set of edges from \augcycle and \augpath a union of disjoint cycles, which is more uniform, and which allows us to easily prove Lemma~\ref{FBPelim}.
\end{rem}

\section{The Capacity Scaling Algorithm}
\newcommand{\getsourcetargetpath}{\isainl{Mincost_Flow_Algorithms/Scaling.thy}{1094}{1119}{get-source-target-path}\xspace}
\newcommand{\notyettermcs}{\isainl{Mincost_Flow_Algorithms/SSP_Preps.thy}{10}{10}{notyetterm}\xspace}
\newcommand{\sspcs}{\isainl{Mincost_Flow_Algorithms/Scaling.thy}{112}{127}{SSP}\xspace}
\newcommand{\sspthr}{\isainl{Mincost_Flow_Algorithms/Scaling.thy}{1139}{1139}{ssp}\xspace}
\newcommand{\scalingloc}{\isainl{Mincost_Flow_Algorithms/Scaling.thy}{1092}{1122}{Scaling}\xspace}
\newcommand{\scalingfun}{\isainl{Mincost_Flow_Algorithms/Scaling.thy}{1150}{1157}{Scaling}\xspace}
\newcommand{\invarintegralholdscs}{\isainl{Mincost_Flow_Algorithms/Scaling.thy}{493}{495}{invar-integral-holds}\xspace}
\newcommand{\integralbalanceterminationcs}{\isainl{Mincost_Flow_Algorithms/Scaling.thy}{525}{527}{integral-balance-termination}\xspace}
\newcommand{\scalingtotal}{\isainl{Mincost_Flow_Algorithms/Scaling.thy}{1405}{1405}{Scaling-total}\xspace}
\newcommand{\invarcomb}{\isainl{Mincost_Flow_Algorithms/Scaling.thy}{1318}{1318}{invar-comb}\xspace}
\newcommand{\totalcorrectnesscs}{\isainl{Mincost_Flow_Algorithms/Scaling.thy}{1534}{1539}{total-correctness}\xspace}
\newcommand{\notyetterm}{\texttt{notyetterm}\xspace}
\newcommand{\tttl}{\texttt{l}\xspace}
\newcommand{\scaling}{\texttt{scaling}\xspace}
\newcommand{\ascaling}{\textit{scaling}\xspace}
\newcommand{\ascalingpar}{\textit{scaling}\xspace}

\begin{figure}[t!]
{
\lineskiplimit=-\maxdimen
\captionsetup{labelformat=empty}
\SetAlCapHSkip{0pt}
\SetAlgoHangIndent{0pt}
\SetInd{0.5em}{0.5em}
\SetVlineSkip{0.3mm}
\setlength{\algomargin}{1cm}
\begin{algorithm}[H]
\renewcommand{\theAlgoLine}{CS\arabic{AlgoLine}}
\caption{\textit{capacity-scaling}$(\mathcal{E}, \mathcal{V}, u, c, b)$}
\label{algo:scling}
initialise $b' \leftarrow b$, $f \leftarrow 0$ and $\gamma = 2^{\lfloor \log_2 B\rfloor}$ where \smash{$B = \max \lbrace 1, \frac{1}{2} \sum \limits_{v \in \mathcal{V}} |b(v)| \rbrace$}; \\
\While{True}{
\While{True}{
\textbf{if} $\forall \, v \in \mathcal{V}. \; b'(v) = 0$ \textbf{then} \textbf{return} $f$;\\
\uElseIf{$\exists \, s \, t \, P$. $P$ is $s$-$t$-path, $\mathfrak{u}_f(P)\geq\gamma$, $b'(s) \geq \gamma$ and $b'(t) \leq - \gamma$}{
take such $s$, $t$ and $P$; augment $\gamma$ along $P$; \\
$b'(s) \leftarrow b'(s) - \gamma$; $b'(t) \leftarrow b'(t) + \gamma$;
}
\textbf{else} \textbf{break};
}

\textbf{if} $\gamma = 1$ \textbf{then} \textbf{return} $\text{\textit{infeasible}}$;\\
\textbf{else} $\gamma \leftarrow \frac{1}{2} \cdot \gamma$;
}
\end{algorithm}
}

\end{figure}

The naive successive shortest path algorithm (Algorithm~\ref{algo:SSP}) arbitrarily selects sources, targets and min-cost paths for the augmentations.
We refine this to \textit{Capacity Scaling (CS)} by selecting those triples where the residual capacities and balances are above a certain threshold that we halve from one scaling phase to another (Algorithm~\ref{algo:scling}).
It was proposed by Edmonds and Karp~\cite{EdmondsKarpScalingFlows}.
As SSP, CS fixes $\E$, $\V$, $u$, $c$ and $b$, and works on a state consisting of the changing variables $b'$ and $f$.
The algorithm uses two nested loops:
the outer one is responsible for monitoring the scaling and determining problem infeasibility.
The inner one's purpose is to process every suitable path until none are remaining.
It is also responsible for terminating the execution when a solution has been found.
As this refines SSP, the proofs for correctness and termination do not differ significantly.
We can reapply the same three invariants.
Note that capacities and balances must still be integral to ensure termination.

Intuitively CS behaves like SSP, but only greedily chooses large steps towards the optimal solution,
thus hastening progress leading to a polynomial rather than exponential worst-case running time.
Each of these steps is a minimum-cost augmenting path $p$ from a source $s$ to a target $t$ with a `step size' of $\min \lbrace \mathfrak{u}_f(p), b(s), -b(t) \rbrace$.
When no paths with the right step size remain, it halves the thresholds for treatment and continues with a more fine-grained analysis.

\paragraph{Partial Correctness.}
A modified \sspcs realises the inner loop, which we define in a locale fixing the threshold $\gamma$ and the selection function.
The latter returns paths that we assume to have capacity above $\gamma$.
The top level loop's locale \scalingloc assumes a generic selection function such that \getsourcetargetpath $\gamma$ returns a source, a target and a minimum-cost path with capacity above $\gamma$.
For a fixed $\gamma$, \getsourcetargetpath $\gamma$ serves as the inner loop's selection function.
The outer loop \scalingfun is a function on two arguments, namely, the logarithm \tttl of the threshold and the program state.
To halve $\gamma$, it decreases \tttl in every iteration.
In contrast to the \SSPfun formalising Algorithm~\ref{algo:SSP},
\sspcs uses the flag \notyettermcs to indicate that no more suitable paths exist, and the decision on infeasibility is left to the outer loop.

We inherit most of the claims and proofs from the formalisation of SSP and therefore they are conditioned on termination for the respective input state like the preservation lemma \invarintegralholdscs.
Termination of the inner loop still follows from Invariant~\ref{invar:integrality}.
The lemma \scalingtotal proves the outer loop's termination by the decrease in $\gamma$.
Together with invariants of the outer loop (\invarcomb), we prove a total correctness statement \totalcorrectnesscs for capacity scaling.

\paragraph{Termination.}
For SSP, we had the sum of the balances' absolute values as a termination measure decreasing in any iteration by at least 1.
This is linear in the balances and therefore exponential w.r.t.\ input length.
In contrast, CS halves the measure after a polynomial number of iterations resulting in fast progress.
The number of scaling phases is logarithmic w.r.t.\ the greatest balance and thus, linear in terms of input length.
The time for finding a minimum-cost path is polynomial and an augmentation is $\mathcal{O}(n)$.
Provided infinite capacities, the number of augmentations per phase is at most $4n$~\cite{KorteVygenOptimisation}.
\begin{rem}
For an efficient path computation, CS even runs in $\mathcal{O}(n(n^2 + m) \log B)$~\cite{EdmondsKarpScalingFlows, KorteVygenOptimisation},
where $B$ is the greatest absolute value of a balance.
This is polynomial w.r.t.\ input length including the representation of balances.
However, since its running time is not polynomial only in the number of vertices and edges, it is not a strongly polynomial algorithm.
\end{rem}

\section{Orlin's Algorithm}
\label{sec:orlins}
\newcommand{\abalance}{\texttt{a-balance}\xspace}
\newcommand{\acurrentflow}{\texttt{a-current-flow}\xspace}
\newcommand{\balance}{\texttt{balance}\xspace}
\newcommand{\currentflow}{\isainl{Mincost_Flow_Algorithms/Orlins_Preparation.thy}{23}{23}{current-flow}\xspace}
\newcommand{\abstractbalmap}{\texttt{abstract-bal-map}\xspace}
\newcommand{\abstractflowmap}{\isainl{Mincost_Flow_Algorithms/Orlins_Preparation.thy}{413}{413}{abstract-flow-map}\xspace}
\newcommand{\torlins}{\texttt{orlins}\xspace}
\newcommand{\torlinsfun}{\isainl{Mincost_Flow_Algorithms/Orlins.thy}{36}{44}{orlins}\xspace}
\newcommand{\torlinsloc}{\isainl{Mincost_Flow_Algorithms/Orlins.thy}{277}{290}{orlins}\xspace}
\newcommand{\newgammafun}{\isainl{Mincost_Flow_Algorithms/Orlins.thy}{28}{34}{new-\ensuremath{\gamma}}\xspace}
\newcommand{\initialstateorlins}{\isainl{Mincost_Flow_Algorithms/Orlins.thy}{2807}{2814}{initial-state-orlins-dom-and-results}\xspace}
\newcommand{\invarintegralorl}{\isainl{Mincost_Flow_Algorithms/Orlins_Preparation.thy}{1107}{1109}{invar-integral}\xspace}
\newcommand{\invarisoptflow}{\isainl{Mincost_Flow_Algorithms/Orlins_Preparation.thy}{1129}{1130}{invar-isOptflow}\xspace}
\newcommand{\tsendflow}{\isainl{Mincost_Flow_Algorithms/Send_Flow.thy}{26}{46}{send-flow}\xspace}
\newcommand{\tmaintainforest}{\isainl{Mincost_Flow_Algorithms/Maintain_Forest.thy}{46}{89}{maintain-forest}\xspace}
\newcommand{\initialstate}{\texttt{initial-state}\xspace}
\newcommand{\initial}{\isainl{Mincost_Flow_Algorithms/Orlins.thy}{57}{65}{initial}\xspace}
\newcommand{\areall}{\isainl{Mincost_Flow_Algorithms/Orlins_Preparation.thy}{36}{36}{are-all}\xspace}
\newcommand{\F}{\ensuremath{\mathcal{F}}}
\newcommand{\implementationinvar}{\isainl{Mincost_Flow_Algorithms/Orlins_Preparation.thy}{488}{499}{implementation-invar}\xspace}
\newcommand{\underlyinginvars}{\isainl{Mincost_Flow_Algorithms/Orlins_Preparation.thy}{924}{932}{underlying-invars}\xspace}
\newcommand{\orlins}{\textit{orlins}\xspace}
\newcommand{\maintainforest}{\textit{maintain-forest}\xspace}
\newcommand{\orlinspar}{\textit{orlins}\xspace}
\newcommand{\maintainforestpar}{\textit{maintain-forest}\xspace}
\newcommand{\flag}{\textit{flag}\xspace}
\newcommand{\ainfeasible}{\textit{infeasible}\xspace}
\newcommand{\asuccess}{\textit{success}\xspace}
\newcommand{\aactives}{\textit{actives}\xspace}

Orlin's algorithm (Algorithm~\ref{algo:orlins}) allows for a strongly polynomial worst-case running time of $\mathcal{O}(n\log n (m + n \log n))$~\cite{OrlinScalingFlow,OrlinFlowBook}.
It requires infinite capacities, and, like the other two algorithms, conservative weights (absence of negative cycles).
Like CS, it reuses the idea of scaling to make progress in larger chunks compared to SSP, but it additionally performs a controlled reduction of the number of non-zero vertices, resulting in a simpler problem.
Only non-zero vertices need further treatment (by augmentations) to receive or emit more flow.

Similar to Algorithm~\ref{algo:scling}, we have an outer loop monitoring the threshold $\gamma$ and an inner loop treating all paths with a capacity above that (\sendflowpar).
After each threshold decrease, the algorithm updates a continuously growing spanning forest $\F$ behind which the intuition is to reduce the number of non-zero vertices.
In the return value of \sendflowpar, the $flag$ indicates whether the algorithm found an optimum flow or detected infeasibility.
Otherwise, the computation continues.

We fix an uncapacitated flow problem ($u=\infty$) consisting of edges $\E$, vertices $\V$, costs $c$ and balances $b$.
The top loop works on a program state consisting of the changing variables $b'$, $f$, $\F$, $\aactives$, $\gamma$ and $r$.
Each subprocedure has access to the fixed variables and expects the changing variables as arguments.
The only exception is that $r$ is not an argument for \sendflowpar because we do not need $r$ for that subprocedure.
For the sake of brevity, we still sometimes write $\sendflow(state)$ although only \maintainforestpar takes the full state.
The subprocedures return the changed variables.

Here, we can augment the flow on only \textit{active edges} and forest edges, which is the purpose of \sendflowpar.
All edges are active initially.
Edges deleted from this set are \textit{deactivated}.
The subprocedures use a small positive constant $\epsilon$.
Its value influences the timespans between increases to the spanning forest.
Positivity ensures termination of \sendflowpar.

For Algorithms~\ref{algo:SSP}~and~\ref{algo:scling}, the running times depended on $B$.
In contrast, Orlin's algorithm avoids that using the continuously growing \textit{spanning forest} $\F$ of edges.
The crucial observation is that this forest is used s.t.\ we only consider one vertex (henceforth, the representative) per forest connected component (henceforth, $\F$-\textit{component}) as a source or as a target in searching for augmenting paths.
The algorithm achieves this reduction in augmentation effort by maintaining the forest, which is possible in time polynomial in $n$ and $m$.

{
\setlength{\algomargin}{0.6cm}
\SetAlCapHSkip{0pt}
\begin{algorithm}[t]
\renewcommand{\theAlgoLine}{O\arabic{AlgoLine}}
\caption{$\orlins(\E, \V, c, b)$}
\label{algo:orlins}
initialise $b' \leftarrow b$; $f \leftarrow 0$; $r(v) \leftarrow v$ for any $v$;
 $\F \leftarrow \emptyset$;
 $\aactives = \E$;
 \smash{$\gamma \leftarrow \max \limits_{v \in \V} |b'(v)|$};\\
\While{True \label{line_cond}}{
$(b', f, \flag) \leftarrow$ $\sendflow(b', f, \F, \aactives, \gamma)$; \label{call_augment_edges}\\
\textbf{if} $\flag = success$ \textbf{then} \textbf{return} $f$;\\
\textbf{if} $\flag = \ainfeasible$ \textbf{then} \textbf{return} $\ainfeasible$;\\
 \textbf{if} $\forall \, e \, \in \, \aactives. \; f(e) = 0$\label{line24}\\
 \textbf{then}  {$\gamma$ $\leftarrow$ $\min \lbrace \frac{\gamma}{2}, \, \smash{\max \limits_{v \in \V}|b'(v)|}\rbrace$};\label{line25}\\
 \textbf{else}   {$\gamma$ $\leftarrow$ $\smash{\frac{\gamma}{2}}$};\label{line27}\\
 $(b', f, \F, r, \aactives) \leftarrow$ $\maintainforest(b', f, \F, \aactives, \gamma, r)$;\label{call_maintain_forest}
}
\end{algorithm}
}

We show partial correctness of the algorithm from invariants on program states and properties of the subprocedures.
Line specifications refer to the state \textit{after} executing the respective line.
We also write e.g.\ $state^i$ or $var^i$ to refer to the state or variable $var$ after executing Line $i$.
Consider the following invariants about the execution of $\orlins(\E, \V, c, b)$:
\newcommand{\refline}[1]{{\color{blue}^{\ref{#1}}}}

\begin{enumerate}[label=(IO-\theenumi)]
\renewcommand{\theenumi}{IO-\arabic{enumi}}
\item $\gamma$ is strictly positive, except when $b = 0$.\label{invargamma}
\item Any active edge $e$ outside $\F$ has a flow $f(e)$ that is a non-negative integer multiple of $\gamma$.\label{invarintegral}
\item Endpoints of a deactivated edge belong to the same $\F$-component.\label{invarsamecomp}
\item Only representatives can have a non-zero balance.\label{invar7}
\item For states in Line~\ref{call_augment_edges},
any edge $e \in \F\refline{call_augment_edges}$ has a flow above $4 n \gamma$,
i.e.\ $f\refline{call_augment_edges}(e) > 4  n \gamma\refline{call_augment_edges}$.\label{invarforest}
\item $f$ is optimum for the balance $b - b'$.\label{invaropt}
\end{enumerate}

With varying degree of explicitness, the invariants are contained in the discussion by Korte and Vygen~\cite{KorteVygenOptimisation}.

The \textit{balance potential} $\Phi$~\cite{KorteVygenOptimisation} is important for the number of augmentations during the subprocedures and their termination.
We define it for $b'$ and $\gamma$ as:
\begin{align*}
\Phi (b', \gamma)= \sum \limits_{v \in \V} \Bigl \lceil \frac{|b'(v)|}{\gamma} -(1 -\epsilon) \Bigr \rceil
\end{align*}
When writing $\Phi(state)$, we refer to the respective $b'$ and $\gamma$.
We will later see in detail how the subprocedures work.
For now, assume the subprocedures have the following properties:

\begin{enumerate}[label=(PO-\theenumi)]
\renewcommand{\theenumi}{PO-\arabic{enumi}}
\item Let $(b'', f', \flag) = \sendflow(b', f, \F, \aactives, \gamma)$.
For any $x \in \V$, we have $|b''(x)| \leq (1 - \epsilon) \cdot \gamma$.
If $flag \neq success$ there is an $x$ with $b''(x) > 0$.
\sendflowpar sets $\flag$ to $\asuccess$ when it reaches $b'' = 0$.\label{P2}
\item Let $(b'', f', \flag) = \sendflow(b', f, \F, \aactives, \gamma)$. For any $e \in \E$, $f'(e) - f(e)$ is an integer multiple of $\gamma$.\label{P2a}
\item If $state$ satisfies Invariants~\ref{invarsamecomp} and~\ref{invar7}, then $\maintainforest(state)$ satisfies them, as well.
Similarly, $\sendflow(state)$ satisfies Invariant~\ref{invar7}, if $state$ satisfies Invariant~\ref{invar7}.\label{P5a}
\item Let $(b'', f', \F', r', \aactives') = \maintainforest(b', f, \F, \aactives, \gamma, r)$.
For any edge $e \in \F'$, the flow reduction $f(e) - f'(e)$ is at most $n\beta$ if $\beta$ is some bound for the balances, i.e.\ $\forall \,v.\; |b'(v)| \leq \beta$.
For edges $e \not \in\F'$, $f'(e) = f(e)$.\label{P4}
\item $\Phi(\maintainforest(state)) \leq \Phi(state) + n$.\label{P51}
\item Provided the other invariants hold, Invariant~\ref{invaropt} (optimality) is preserved by either subprocedure. (For \maintainforestpar and \sendflowpar, we have to replace $4n\gamma$ in Invariant~\ref{invarforest} with $8n\gamma$ and $6n\gamma$, respectively.) \label{P6}
\item Let $(b'', f', \flag) = \sendflow(state)$ where $state = (b', f, \F, \aactives, \gamma)$.
The number of path augmentations during $\sendflow(state)$ is at most $\Phi(state)$.
For all $e \in \E$, $f(e) - f'(e) \leq \Phi (state)\cdot \gamma$, if $f(e) > 6n\gamma$ for all $e \in \F$ in $state$.\label{P52}
\item Assume all invariants hold on $state = (b', f, \F, \aactives, \gamma)$ and let $(b'', f', \flag) = \sendflow(state)$.
If $\flag = \asuccess$, $f'$ is optimum.
If $\flag = \ainfeasible$, the flow problem is indeed infeasible.\label{P1}
\end{enumerate}

In contrast to the invariants, which were taken from the literature, stating the subprocedures' properties explicitly (and more generally, the separation between main loop and subprocedures) is a product of the formalisation.
We examine how the invariants and properties yield partial correctness.

\isathm{Mincost_Flow_Algorithms/Orlins.thy}{2807}{2814}
\begin{thm}[Partial Correctness\label{lemma:orlinspartialcorrect}]
Assume the algorithm terminates on an uncapacitated instance with conservative weights.
If it finds a flow, it is a min-cost flow, otherwise the problem is infeasible.
\end{thm}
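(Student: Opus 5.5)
The argument is by induction over the loop iterations. Since only Lines~\ref{call_augment_edges}--\ref{call_maintain_forest} can change the state, we show that all invariants \ref{invargamma}--\ref{invaropt} hold before every call of \sendflowpar in Line~\ref{call_augment_edges}. Partial correctness then follows from \ref{P1}: the algorithm only returns inside the loop, right after a call of \sendflowpar on a state satisfying all invariants. If $\flag = \asuccess$, \ref{P1} says the returned $f$ is optimum, i.e.\ a min-cost flow for $b$, and if $\flag = \ainfeasible$, \ref{P1} says the problem is infeasible. Termination is assumed, so we never need to argue that the loop ends.

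\emph{Initial state.} We have $f=0$, $\F=\emptyset$, $\aactives=\E$, $r=\mathrm{id}$ and $\gamma=\max_v|b(v)|$. Invariant~\ref{invargamma} holds because $\gamma>0$ unless $b=0$. Invariant~\ref{invarintegral} holds since the zero flow is $0\cdot\gamma$. Invariant~\ref{invarsamecomp} is vacuous because there are no deactivated edges. Invariant~\ref{invar7} holds because every vertex is its own representative. Invariant~\ref{invarforest} is vacuous since $\F=\emptyset$. For Invariant~\ref{invaropt}, $b-b'=0$, and by conservative weights and Theorem~\ref{optcrit} the zero flow has no augmenting cycle, so it is optimum for the zero balance.

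\emph{Preservation through one iteration.} Suppose the loop continues, so $\flag$ is neither success nor infeasible. The invariants are handled as follows.
\begin{itemize}
\item Invariant~\ref{invar7} after \sendflowpar, and Invariants~\ref{invarsamecomp} and~\ref{invar7} after \maintainforestpar, follow from \ref{P5a}. Invariant~\ref{invarsamecomp} is unaffected by \sendflowpar because that call changes neither $\F$ nor $\aactives$.
\item Invariant~\ref{invarintegral} is preserved by \sendflowpar by \ref{P2a}. Across the $\gamma$ update, halving keeps multiples of $\gamma$ as multiples of $\gamma/2$. In the branch of Line~\ref{line25}, every active flow is $0$, so any new $\gamma$ works. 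For \maintainforestpar, \ref{P4} gives that flows outside $\F'$ are unchanged.
\item Invariant~\ref{invargamma}: since $\flag\neq\asuccess$, \ref{P2} gives a vertex with $b''(x)>0$, so $\max_v|b'(v)|>0$ and the new $\gamma$ stays positive.
\item Invariant~\ref{invaropt} is preserved by \ref{P6}, provided the strengthened forest bounds $8n\gamma$ and $6n\gamma$ are available at the respective calls.
\end{itemize}

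\emph{Main obstacle: Invariant~\ref{invarforest} and the forest bounds.} Before \sendflowpar we have $f(e)>4n\gamma$ on forest edges by \ref{invarforest}. The hard part is getting $>8n\gamma_{\mathrm{new}}$ before \maintainforestpar and $>4n\gamma$ again at the next Line~\ref{call_augment_edges}. My plan is to track flow along the old forest $\F$ through one full iteration, and I expect to need a stronger intermediate invariant (e.g.\ $>6n\gamma$ before \sendflowpar).
\begin{itemize}
\item By \ref{P52}, the flow loss during \sendflowpar is at most $\Phi\cdot\gamma$. Here I expect $\Phi\le n$, using \ref{P2} from the previous round together with \ref{P51}.
\item Across $\gamma\leftarrow\gamma/2$, every threshold in multiples of $n\gamma$ is doubled relative to the new $\gamma$. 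In the branch of Line~\ref{line25}, $\gamma$ may shrink further, which only helps these bounds.
\item By \ref{P4}, the loss during \maintainforestpar is at most $n\beta$, where $\beta\le(1-\epsilon)\gamma_{\mathrm{old}}$ by \ref{P2}. This also covers edges newly added to the forest, which \maintainforestpar only adds when their flow is large; I take that from its specification.
\end{itemize}
The arithmetic $4n\gamma\to 8n\gamma'\to 8n\gamma'-n\beta\ge 6n\gamma'$ and so on must be checked carefully, and I expect it to be the most delicate part of the proof.
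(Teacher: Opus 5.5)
Your overall plan is the paper's: establish the invariants for every state handed to \sendflowpar, check them at initialisation, propagate them with \ref{P2}--\ref{P52}, and conclude with \ref{P1}. Your handling of Invariants~\ref{invargamma}--\ref{invar7} and~\ref{invaropt} is fine, and your case split for Invariant~\ref{invarintegral} across the $\gamma$-update is more explicit than the paper's.

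The gap is the step you defer, Invariant~\ref{invarforest}, which is the only non-routine part of the proof. You leave the arithmetic unchecked. You also place Invariant~\ref{invarforest} before Line~\ref{call_augment_edges}, although it is a statement about the state \emph{after} that line. More seriously, the one concrete estimate you commit to is false. \ref{P2} and the $\gamma$-update give $\Phi\le n$ only \emph{before} \maintainforestpar. \ref{P51} then lets \maintainforestpar raise $\Phi$ by up to $n$. So before \sendflowpar you only know $\Phi\le 2n$, and \ref{P52} allows a loss of up to $2n\gamma$ on a forest edge, not $n\gamma$. Your guess that a $6n\gamma$ bound is needed before \sendflowpar is right.

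The accounting that closes is the paper's, and it has no slack.
\begin{itemize}
\item Start from $f(e)>4n\gamma$ after Line~\ref{call_augment_edges}, and from $|b'(x)|\le(1-\epsilon)\gamma$, which holds by \ref{P2}.
\item After Lines~\ref{line24}--\ref{line27} you have $f(e)>8n\gamma'$ and $|b'(x)|<2\gamma'$. Since $|b'(x)|/\gamma'$ is at most $2(1-\epsilon)$ or at most $1$, every summand of $\Phi(b',\gamma')$ is at most $1$, so $\Phi\le n$.
\item Your remark that the extra shrinking in Line~\ref{line25} ``only helps'' is true for the flow threshold but false for $\Phi$ and for $|b'(x)|/\gamma'$, which grow as $\gamma'$ shrinks. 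These bounds hold there only because $\gamma'$ is then either $\gamma/2$ or exactly $\max_v|b'(v)|$.
\item By \ref{P4}, \maintainforestpar costs at most $n\beta<2n\gamma'$ and raises $\Phi$ to at most $2n$. This leaves $f(e)>6n\gamma'$, which is precisely the hypothesis that \ref{P6} and \ref{P52} need for \sendflowpar.
\item \sendflowpar then costs at most $\Phi\gamma'\le 2n\gamma'$. This leaves $f(e)>4n\gamma'$ after the next Line~\ref{call_augment_edges}, which is Invariant~\ref{invarforest} again.
\end{itemize}
The budget $8n\gamma'\to 6n\gamma'\to 4n\gamma'$ is used up exactly. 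With your bound $\Phi\le n$, you would be relying on $n\gamma'$ of slack that does not exist.
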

\begin{proof}[Proof Sketch]
$b = 0$ yields immediate termination with $f\refline{call_augment_edges} = 0$ as correct result (from~\ref{P2}).

If $b \neq 0$, we show the invariants for the last state that we give as argument to \sendflowpar.
All invariants hold for the initialisation in the algorithm.
The pseudocode and~\ref{P2} imply preservation of Invariant~\ref{invargamma}.
Only \sendflowpar changes the flow along edges outside $\F$ in Line~\ref{call_augment_edges} (from~\ref{P4}) and the change is an integral multiple of $\gamma$ (from~\ref{P2a}) implying preservation of Invariant~\ref{invarintegral}.
The arguments for Invariants~\ref{invarsamecomp} and~\ref{invar7} are simple (from~\ref{P5a}).

Preservation of Invariant~\ref{invarforest} is more difficult.
Assume it holds in Line~\ref{call_augment_edges}.
We know $|b'\refline{call_augment_edges}(x)| \leq (1 - \epsilon) \cdot \gamma\refline{call_augment_edges}$ for any $x$ (from~\ref{P2}).
After modifying $\gamma$ (Lines~\ref{line24} - \ref{line27}), the flow along forest edges is above $8n\gamma\refline{line27}$, $\Phi(b'\refline{line27}, \gamma\refline{line27}) \leq n$ and $\forall x. \, |b'\refline{line27}(x)| < 2 \cdot \gamma\refline{line27}$.
Executing \maintainforestpar can cause a decrease of $2n\gamma\refline{call_maintain_forest}$ for forest edges (from~\ref{P4}) and an increase in $\Phi$ by at most $n$ (\ref{P51}).
At this point, we have $f\refline{call_maintain_forest}(e) > 6n\gamma\refline{call_maintain_forest}$ for all $e \in \F\refline{call_maintain_forest}$ and $\Phi(state\refline{call_maintain_forest})\leq 2n$.
\sendflowpar in Line~\ref{call_augment_edges} is responsible for a further flow decrease of at most $2n\gamma\refline{call_maintain_forest}$ (from~\ref{P52}).
The overall decrease along forest edges was at most $4n\gamma\refline{call_maintain_forest}$, hence again $f\refline{call_augment_edges}(e)>4n\gamma\refline{call_augment_edges}$ for $e\in \F\refline{call_augment_edges}$ in the next iteration.
This implies preservation of Invariant~\ref{invarforest}.

By~\ref{P6} we obtain preservation of Invariant~\ref{invaropt} as well.
The invariants hold for the last $state\refline{call_maintain_forest}$ before termination (state before the last usage of \sendflowpar).
\ref{P1} gives the claim.
\end{proof}

This proof refines proofs in the literature~\cite{KorteVygenOptimisation} and it was reconstructed from the formalisation.

\begin{rem}
We restrict ourselves to infinite edge capacities, which is insignificant because we can reduce problems to the uncapacitated setting with a linear increase in input length, as we will see later.
Our theorems here require weight-conservativity which we inherit from Algorithm~\ref{algo:SSP} and Algorithm~\ref{algo:scling} as a constraint.
However, we drop the restriction to integral capacities and balances.
\end{rem}

\paragraph{Formalisation} We formalise Algorithm~\ref{algo:orlins} as the function \torlinsfun in a locale \torlinsloc of the same name.
Since the properties of the subprocedures \sendflowpar and \maintainforestpar are quite complex (unlike the selection of paths), and since implementations of the subprocedures are not useful on their own, we do not specify them at the beginning of the locale \torlinsloc to be instantiated later.
Instead, we set up two locales for the subprocedures (details later) and define \torlinsloc, the locale for the top level loop, as their conjunction.

\begin{rem}
We present Orlin's algorithm in a top-down manner to make it easier to understand, but the formalisation was developed bottom-up, i.e.\ we first formalised \sendflowpar and \maintainforestpar.
The top-down approach with specifying subprocedures in a locale and instantiating them later is only practicable when the subprocedures are meaningful on their own and their properties are comparably simple.
\end{rem}

As in the case of \scaling and \SSP, we use records to model the program state.
This time, however, the variables of the program state are abstract data types (ADTs), e.g.\ maps from vertices or edges to reals to model the flow or balances.
To specify these ADTs, we reuse the locales from Listings~\ref{isabelle:AD_sets} and~\ref{isabelle:AD_maps} multiple times as a part of the locales for Orlin's algorithm, subject to some extensions.
An example of an ADT operation used by \torlinsfun is \areall, which we use to check all elements of the data structure that models the active edges for a certain predicate.
Operations like \abstractflowmap turn a mapping data structure \tttm into a mathematical function $f_{\tttm}$.
When \tttm is undefined for a certain input, $f_{\tttm}$ returns a reasonable default value, e.g.\ $0$ for reals.

In addition to those variables mentioned above, the state contains some auxiliary ones that ease the implementation.
We will give details in Section~\ref{section:loopa}, which deals with \maintainforestpar where these variables are relevant.
We also encapsulate the data structure invariants and some conditions on the domains, e.g.\ that the domain of the flow map \currentflow \state is $\E$, into an invariant \implementationinvar that the loops preserve.

Invariants are again boolean predicates on states.
We formalise Invariants~\ref{invarintegral} and~\ref{invaropt} as \invarintegralorl and \invarisoptflow, respectively.
There is a number of conceptually less interesting or even trivial properties, e.g.\ $\F \subseteq \E$, or that $r$ is a valid representative function, i.e.\ $r(u) = r(v)$ iff $u$ and $v$ are reachable from one another in $\F$.
In the formalisation, we encapsulate these into the auxiliary invariant \underlyinginvars.

The Isabelle/HOL function \torlinsfun first decreases $\gamma$, followed by \tmaintainforest and \tsendflow, contrary to the pseudocode from Algorithm~\ref{algo:orlins}.
We enter the loop body with the decrease of $\gamma$ while preserving the relative order of the operations
so that the invariants refer exclusively to the beginning/end of iterations.\footnote{
We formalise a semantically equivalent program: suppose, we have a loop whose body performs the operations $c_1$, $c_2$ and $c_3$, then we can transform it into $c_1$ followed by a loop with the body $c_2$, $c_3$ and $c_1$.
For the semantic equivalence, some conditions, e.g.\ loop condition true after $c_1$, need to be satisfied.
Here, this is trivially the case because the loop condition is $True$.
}
It is then much easier to formally state preservation lemmas about Invariant~\ref{invarforest}, which talks about the state after finishing Line~\ref{call_augment_edges}, in the formalisation.
Because of this, we write valid executions of Orlin's algorithm as $\torlinsfun \;\tlpar\tsendflow \;\initial\trpar$ in the formalisation.
Proving the formal counterpart of Theorem~\ref{lemma:orlinspartialcorrect} requires work that is discussed in the next subsections.

We now state the subprocedures and show that they satisfy the aforementioned properties, including both formal and informal proofs.

\subsection{Augmenting the Flow}
\label{section:loopb}
\newcommand{\pleasecontinue}{\textit{please continue}}
\newcommand{\sendflowspec}{\isainl{Mincost_Flow_Algorithms/Send_Flow.thy}{9}{23}{send-flow-spec}\xspace}
\newcommand{\sendflowproof}{\isainl{Mincost_Flow_Algorithms/Send_Flow.thy}{661}{691}{send-flow}\xspace}
\newcommand{\getsourcetargetpatha}{\isainl{Mincost_Flow_Algorithms/Send_Flow.thy}{17}{18}{get-source-target-path-a}\xspace}
\newcommand{\getsourcetargetpathacond}{\isainl{Mincost_Flow_Algorithms/Send_Flow.thy}{509}{514}{get-source-target-path-a-cond}\xspace}
\newcommand{\getsourcetargetpathaxioms}{\isainl{Mincost_Flow_Algorithms/Send_Flow.thy}{664}{668}{get-source-target-path-a-axioms}\xspace}
\newcommand{\sendflowcallonecondPhidecr}{\isainl{Mincost_Flow_Algorithms/Send_Flow.thy}{1333}{1337}{send-flow-call1-cond-Phi-decr}\xspace}
\newcommand{\sendflowcalltwocondPhidecr}{\isainl{Mincost_Flow_Algorithms/Send_Flow.thy}{1441}{1445}{send-flow-call2-cond-Phi-decr}\xspace}
\newcommand{\sendflowsimps}{\isainl{Mincost_Flow_Algorithms/Send_Flow.thy}{370}{377}{send-flow-simps}\xspace}
\newcommand{\sendflowinduct}{\isainl{Mincost_Flow_Algorithms/Send_Flow.thy}{415}{420}{send-flow-induct}\xspace}
\newcommand{\sendflowPhi}{\isainl{Mincost_Flow_Algorithms/Send_Flow.thy}{2567}{2574}{send-flow-Phi}\xspace}
\newcommand{\invarabovesixNgamma}{\isainl{Mincost_Flow_Algorithms/Orlins_Preparation.thy}{1254}{1256}{invar-above-6Ngamma state}\xspace}
\newcommand{\orlinsentry}{\isainl{Mincost_Flow_Algorithms/Orlins_Preparation.thy}{1180}{1181}{orlins-entry}\xspace}
\newcommand{\orlinsentryaftersendflow}{\isainl{Mincost_Flow_Algorithms/Send_Flow.thy}{2496}{2504}{orlins-entry-after-send-flow}\xspace}
\newcommand{\sendflowinvarisOptpres}{\isainl{Mincost_Flow_Algorithms/Send_Flow.thy}{2621}{2627}{send-flow-invar-isOpt-pres}\xspace}
\newcommand{\sendflowcompleteness}{\isainl{Mincost_Flow_Algorithms/Send_Flow.thy}{2824}{2832}{send-flow-completeness}\xspace}

{\renewcommand{\theAlgoLine}{SF\arabic{AlgoLine}}
\setlength{\algomargin}{0.85cm}
\SetAlCapHSkip{0pt}
\SetAlgoHangIndent{0pt}
\SetInd{0.5em}{0.5em}
\SetVlineSkip{0.3mm}
\begin{algorithm}[t]
\caption{$\sendflow(b', f, \F, actives, \gamma)$}
\label{algo:loopB}
 \While{True}{
 \textbf{if} $\forall v \in \V. \; b'(v) = 0$ \textbf{then} \textbf{return} $(b', f, \asuccess)$; \label{line6}\\
  \uElseIf{$\exists \, s \,. \; b'(s) > (1 - \epsilon) \cdot \gamma$\label{line8}}{
    \uIf{$\exists \, t \, . \; b'(t) < - \epsilon \cdot \gamma \wedge t \text{ is reachable from } s$\label{line9}}
        {
take such $s$, $t$, and a connecting path $P$ with original edges from $actives \cup \F$;\label{line10}\\
augment $f$ along $P$ from $s$ to $t$ by $\gamma$;\label{line11}\\
$b'(s) \leftarrow b'(s) - \gamma$; $b'(t) \leftarrow b'(t) + \gamma$\label{line12};
        }
        \textbf{else} \textbf{return} $(b', f, \ainfeasible)$;\label{line14}
   }
  \uElseIf{$\exists \, t\,. \; b'(t) < - (1 - \epsilon) \cdot \gamma$\label{line15}}{
    \uIf{$\exists \, s \,. \;b'(s) > \epsilon \cdot \gamma \wedge t \text{ is reachable from } s$\label{line16}}
        {
take such $s$, $t$, and a connecting path $P$ with original edges from $actives \cup \F$;\label{line17}\\
augment $f$ along $P$ from $s$ to $t$ by $\gamma$;\label{line18}\\
$b'(s) \leftarrow b'(s) - \gamma$; $b'(t) \leftarrow b'(t) + \gamma$;\label{line19}
        }
        \textbf{else} \textbf{return} $(b', f, \ainfeasible)$;\label{line21}
   }\label{line22}
  \textbf{else} \textbf{return} $(b', f, \pleasecontinue)$;\label{line23}
  }
\end{algorithm}
}
We now argue why \sendflowpar (Algorithm~\ref{algo:loopB}) satisfies the properties stated in the previous section.
\ref{P2}, \ref{P2a}, \ref{P5a}, \ref{P6}, \ref{P52} and~\ref{P1} assert something about \sendflowpar.
We can infer~\ref{P2} from the subprocedure's structure.
The only possible amount to augment is $\gamma$ which yields~\ref{P2a}.
\ref{P5a} (Preservation of Invariant~\ref{invar7}) also follows from the structure.

\begin{proof}[Proof Sketch for~\ref{P6}]
We assume the invariants to hold for $state = (b', f, \F, actives, \gamma)$.
Then $f\refline{line12}$ and $f\refline{line19}$ are minimum-cost flows for the balances $b - b'\refline{line12}$ and $b - b'\refline{line19}$, respectively, because of Theorem~\ref{thm911}
and the fact that  $f\refline{line6}$ is a minimum-cost flow for the balance $b - b'\refline{line6}$ (Invariant~\ref{invaropt}).
The restriction to active and forest edges (Lines~\ref{line10} and~\ref{line17}) is unproblematic:
simulate deactivated edges with forest paths which are of minimum-costs (see Lemma~\ref{lemma:optforestpath} in the next section where we discuss the forest).
\end{proof}

\begin{lem}\label{lem:phiB}
Assume $f(e) > 6n\gamma$ for all $e \in \F$.
Any iteration of \sendflowpar decreases $\Phi$ at least by $1$:
\begin{itemize}
\item \isaitem{Mincost_Flow_Algorithms/Send_Flow.thy}{1333}{1337}$\Phi(b'\refline{line11}, \gamma) - \Phi(b'\refline{line12}, \gamma)\geq 1$
\item \isaitem{Mincost_Flow_Algorithms/Send_Flow.thy}{1441}{1445}$\Phi(b'\refline{line18}, \gamma) - \Phi(b'\refline{line19}, \gamma)\geq 1$
\end{itemize}
\end{lem}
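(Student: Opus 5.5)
The plan is to show that only the two summands of $\Phi$ at $s$ and $t$ change, and that their combined value drops by at least $1$. An augmentation in Lines~\ref{line11}/\ref{line12} (resp.\ \ref{line18}/\ref{line19}) changes $b'$ only at $s$ and $t$, by $-\gamma$ and $+\gamma$ respectively, and leaves $\gamma$ fixed. Hence
\[
\Phi(b'\refline{line11},\gamma)-\Phi(b'\refline{line12},\gamma) = \Delta_s + \Delta_t,
\]
where $\Delta_x = \lceil |b'(x)|/\gamma - (1-\epsilon)\rceil - \lceil |b''(x)|/\gamma - (1-\epsilon)\rceil$ and $b''$ is the updated balance. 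Before splitting into cases, we note that $s\neq t$, since $b'(s)>0>b'(t)$. The assumption $f(e)>6n\gamma$ on forest edges is needed only so that the augmentation in Line~\ref{line11} is well defined, i.e.\ the path $P$ has residual capacity at least $\gamma$. Active edges are uncapacitated forward, and backward forest edges carry flow above $\gamma$. This does not affect the counting itself.

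For the case of Lines~\ref{line11}/\ref{line12}, write $x=b'(s)/\gamma>1-\epsilon$ and $y=-b'(t)/\gamma>\epsilon$.

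At $s$, the new value is $x-1$. If $x\geq 1$, then $|x-1|=x-1$ and the ceiling drops by exactly $1$, so $\Delta_s=1$. If $1-\epsilon<x<1$, then $\lceil x-(1-\epsilon)\rceil=1$ because $0<x-(1-\epsilon)<\epsilon\le 1$ (assuming $0<\epsilon<1$, or even $\epsilon\le 1/2$ as usual). The new value is $\lceil (1-x)-(1-\epsilon)\rceil=\lceil \epsilon-x\rceil\le 0$, since $\epsilon - x<\epsilon-(1-\epsilon)=2\epsilon-1\le 0$. So $\Delta_s\ge 1$ here as well.

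At $t$, the new absolute value is $|y-1|$. If $y\ge1$, then $\Delta_t=1\ge0$. If $\epsilon<y<1$, then the old term is $\lceil y-1+\epsilon\rceil$, and since $-1+2\epsilon<y-1+\epsilon<\epsilon$ this equals $0$ or $1$, in any case at least $0$. The new term is $\lceil 1-y-1+\epsilon\rceil=\lceil \epsilon-y\rceil\le 0$ because $y>\epsilon$. So $\Delta_t\ge0$.

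Summing gives $\Delta_s+\Delta_t\ge1$. The second bullet, for Lines~\ref{line18}/\ref{line19}, is symmetric with the roles exchanged: $t$ has $|b'(t)|>(1-\epsilon)\gamma$ and yields the drop of $1$, while $s$ has $b'(s)>\epsilon\gamma$ and yields a nonnegative change.

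The main obstacle is the boundary case where the balance crosses zero, i.e.\ $x<1$ or $y<1$. This is exactly where the lower thresholds $\epsilon\gamma$ and $(1-\epsilon)\gamma$ of Lines~\ref{line8}/\ref{line9} are needed, together with $\epsilon\le 1/2$, to ensure that the ceiling at the crossing endpoint does not increase. In the formalisation I would isolate a single arithmetic lemma about $\lceil |a|/\gamma-(1-\epsilon)\rceil-\lceil |a\mp\gamma|/\gamma-(1-\epsilon)\rceil$ under these threshold hypotheses, and then sum over $\V$ using that all other summands are unchanged.
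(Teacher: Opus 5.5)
Your argument is correct and is the natural one: the paper states this lemma without an informal proof (only its formal counterpart), but your approach matches the paper's proof of Lemma~\ref{phiA}, namely splitting $\Phi$ into the unchanged summands plus the two summands at $s\neq t$ and then doing a ceiling case analysis. Two small remarks: your use of $\epsilon\le\frac{1}{2}$ is justified by the standing assumption $\epsilon\le\frac{1}{n}$, since $s\neq t$ forces $n\ge 2$; and in your aside on well-definedness, backward residual edges of active non-forest edges also need capacity at least $\gamma$, which comes from Invariant~\ref{invarintegral} rather than from the edges being uncapacitated (though, as you say, none of this affects the counting).
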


\begin{proof}[Proof Sketch for~\ref{P52}]
Assume $f(e) > 6n\gamma$ for all $e \in \F$.
We have the decrease in $\Phi$ according to Lemma~\ref{lem:phiB}.
Because of Lines~\ref{line8} and~\ref{line15}, $\Phi$ cannot become negative.
Thus, \sendflow$(state)$ performs at most $\Phi(state)$ iterations
and the flow decrease for an edge is at most $\Phi (state) \cdot \gamma$.
Note that during the whole computation, $f(e) > 0$ for all $e \in \F$ meaning that we can always select a path using active and forest edges, as required in Lines~\ref{line10} and~\ref{line17}.
The proof of a strict decrease in $\Phi$ only works for $\epsilon > 0$.
\end{proof}

\begin{proof}[Proof Sketch for~\ref{P1}]
If the algorithm found a flow, then $\forall v \in \V \, . \; b'\refline{line6}(v) = 0$ (Line~\ref{line6}).
Together with preservation of the optimality invariant, it gives the first subclaim.

If the algorithm asserts infeasibility, one can exploit information about $b'$ from Lines~\ref{line8}, \ref{line9}, \ref{line15} and~\ref{line16}.
By employing residual cuts and the analogous definition where every direction is reversed, one can show infeasibility for both cases.
The proof is similar to that of Theorem~\ref{SSPtotalcorrect}.
The argument only works for $\epsilon \leq \frac{1}{n}$.
\end{proof}

By this, we have shown that \sendflowpar satisfies all asserted properties.

\begin{rem}
We also saw the restriction $0 < \epsilon \leq \frac{1}{n}$.
One might think that $\epsilon$ could be assigned to $0$.
As we shall see later on, that would make it impossible for us to derive the worst-case running time bound.
In essence, we need to allow vertices to be processed if they are 'slightly' below the threshold,
otherwise the algorithm might take exponentially many iterations, and its running time will depend on $B$.
Interested readers should consult Section 4.2 of Orlin's original paper~\cite{OrlinScalingFlow}.
\end{rem}

\paragraph*{Formalisation}
We formalise the loop in a locale \sendflowspec that assumes selection functions, e.g.\ \getsourcetargetpatha to find the path required in Line~\ref{line10} of \sendflowpar.
We set up the customised \sendflowsimps and \sendflowinduct for simplification and induction.
We also introduce definitions to encapsulate the preconditions ensuring that the selection functions work properly, e.g.\ \getsourcetargetpathacond for \getsourcetargetpatha.
Among others, this condition requires edges in $\F$ to have positive flow.
This is necessary to ensure the existence of an augmenting path that uses active and forest edges only, i.e.\ no deactivated ones, as required in Line~\ref{line10} of the algorithm.
We use those conditions to specify the behaviour of the selection functions in the assumptions of the locale \sendflowproof, e.g.\ \getsourcetargetpathaxioms.
Since the preconditions like \getsourcetargetpathacond are of considerable complexity, it is necessary to encapsulate them in definitions, without which the locale \sendflowproof would barely be readable.
Whenever we use a function, its application precondition is provable allowing us to retrieve information about the result.

In the locale \sendflowproof, we formalise the proofs that we informally discussed in this section.
\sendflowcallonecondPhidecr formalises that taking the first recursive step in \tsendflow (Lines~\ref{line10}-\ref{line12} in Algorithm~\ref{algo:loopB}) results in $\Phi$ being decreased by at least $1$; \sendflowcalltwocondPhidecr is its analogue for the second recursive step.
\sendflowPhi is the formal counterpart to~\ref{P52} where \invarabovesixNgamma asserts $f(e) > 6n\gamma$ for all $e \in \F$ in \state.  Preservation of this invariant implies positive flow along forest edges (necessary for Lines~\ref{line10} and~\ref{line17} and precondition for \sendflowcallonecondPhidecr).
\orlinsentryaftersendflow proves a part of~\ref{P2} where \orlinsentry \state asserts that $|b'(v)| \leq (1-\epsilon)\gamma$ in \state.
\ref{P6} is proven by \sendflowinvarisOptpres, and \sendflowcompleteness belongs to the formal counterpart of~\ref{P1}, where \tttb refers to the balances of the actual flow problem.

\subsection{Maintaining the Forest}
\label{section:loopa}
\newcommand{\maintainforestspec}{\isainl{Mincost_Flow_Algorithms/Maintain_Forest.thy}{8}{16}{maintain-forest-spec}\xspace}
\newcommand{\getpath}{\isainl{Mincost_Flow_Algorithms/Maintain_Forest.thy}{16}{16}{get-path}\xspace}
\newcommand{\getpathaxioms}{\isainl{Mincost_Flow_Algorithms/Maintain_Forest.thy}{162}{164}{get-path-axioms}\xspace}
\newcommand{\maintainforestgetpathcond}{\isainl{Mincost_Flow_Algorithms/Maintain_Forest.thy}{19}{23}{maintain-forest-get-path-cond}\xspace}
\newcommand{\maintainforesttt}{\isainl{Mincost_Flow_Algorithms/Maintain_Forest.thy}{156}{164}{maintain-forest}\xspace}
\newcommand{\convtoredge}{\isainl{Mincost_Flow_Algorithms/Orlins_Preparation.thy}{26}{26}{conv-to-rdg}\xspace}
\newcommand{\tttv}{\texttt{v}\xspace}
\newcommand{\notblocked}{\isainl{Mincost_Flow_Algorithms/Orlins_Preparation.thy}{31}{31}{not-blocked}\xspace}
\newcommand{\notblockedupdall}{\isainl{Mincost_Flow_Algorithms/Orlins_Preparation.thy}{342}{343}{not-blocked-upd-all}\xspace}
\newcommand{\actives}{\isainl{Mincost_Flow_Algorithms/Orlins_Preparation.thy}{27}{27}{actives}\xspace}
\newcommand{\filter}{\isainl{Mincost_Flow_Algorithms/Orlins_Preparation.thy}{35}{35}{filter}\xspace}
\newcommand{\repcompcard}{\isainl{Mincost_Flow_Algorithms/Orlins_Preparation.thy}{30}{30}{rep-comp-card}\xspace}
\newcommand{\repcompupdall}{\isainl{Mincost_Flow_Algorithms/Orlins_Preparation.thy}{336}{337}{rep-comp-upd-all}\xspace}
\newcommand{\invarFone}{\isainl{Mincost_Flow_Algorithms/Orlins_Preparation.thy}{1193}{1195}{invar-F1}\xspace}
\newcommand{\invarFtwo}{\isainl{Mincost_Flow_Algorithms/Orlins_Preparation.thy}{1215}{1218}{invar-F2}\xspace}
\newcommand{\invarFtwopres}{\isainl{Mincost_Flow_Algorithms/Maintain_Forest.thy}{2779}{2787}{invar-F2-pres}\xspace}
\newcommand{\outsideFnoflowchange}{\isainl{Mincost_Flow_Algorithms/Maintain_Forest.thy}{2944}{2947}{outside-F-no-flow-change}\xspace}
\newcommand{\monoonestepphi}{\isainl{Mincost_Flow_Algorithms/Maintain_Forest.thy}{2685}{2688}{mono-one-step-phi}\xspace}
\newcommand{\Phiincrease}{\isainl{Mincost_Flow_Algorithms/Maintain_Forest.thy}{2850}{2853}{Phi-increase}\xspace}
\newcommand{\PhiincreasebelowN}{\isainl{Mincost_Flow_Algorithms/Maintain_Forest.thy}{2879}{2881}{Phi-increase-below-N}\xspace}
\newcommand{\component}{\texttt{component}\xspace}
\newcommand{\comps}{\texttt{comps}\xspace}
\newcommand{\tttE}{\texttt{E}\xspace}
\newcommand{\tttV}{\texttt{V}\xspace}
\newcommand{\FF}{\ensuremath{\mathfrak{F}}}

\begin{figure}
\centering
\trimbox{0mm 0mm -2mm 0mm}{
\begin{tikzpicture}[
      mycircle/.style={
         circle,
         draw=black,
         fill=gray,
         fill opacity = 0.3,
         text opacity=1,
         inner sep=0pt,
         minimum size=12pt,
         font=\normalsize},
      myarrow/.style={-Stealth},
      node distance=0.7cm and 1.3cm
      ]
      \node[label=below:{\scriptsize \color{blue} $b_x$\color{black}, \color{purple} $x$}, mycircle] (x) {$x$} ;
      \node[label=below:{\scriptsize \color{blue} $b_y$\color{black}, \color{purple} $y$}, mycircle,above right=of x] (y) {$y$};
      \node[label=below:{\scriptsize \color{blue} $b_z$\color{black}, \color{purple} $z$}, mycircle,below right=of y] (z) {$z$};

\draw[myarrow] (x) -- node[sloped, font = \scriptsize, below] {$9n\gamma$} (y);
\draw[myarrow] (y) -- node[sloped, font = \scriptsize, below] {$9n\gamma$} (z);
    \end{tikzpicture}
    }
    \trimbox{0mm 0mm -2mm 0mm}{
    \begin{tikzpicture}[
      mycircle/.style={
         circle,
         draw=black,
         fill=gray,
         fill opacity = 0.3,
         text opacity=1,
         inner sep=0pt,
         minimum size=12pt,
         font=\normalsize},
      myarrow/.style={-Stealth},
      node distance=0.7cm and 1.3cm
      ]
      \node[label=below:{\scriptsize \color{blue} $b_x$\color{black}, \color{purple} $x$}, mycircle] (x) {$x$} ;
      \node[label=below:{\scriptsize \color{blue} $0$\color{black}, \color{purple} $z$}, mycircle,above right=of x] (y) {$y$};
      \node[label=below:{\scriptsize \color{blue} $b_y+b_z$\color{black}, \color{purple} $z$}, mycircle,below right=of y] (z) {$z$};

\draw[-{Stealth}, red, line width=0.5mm] (y) -- node[sloped,  font = \scriptsize, above, color=black] {$9n \gamma + b_y$} (z);
\draw[myarrow] (x) -- node[sloped, font = \scriptsize, below] {$9n\gamma$} (y);
    \end{tikzpicture}
    }
    \trimbox{0mm 0mm 0mm 0mm}{
     \begin{tikzpicture}[
      mycircle/.style={
         circle,
         draw=black,
         fill=gray,
         fill opacity = 0.3,
         text opacity=1,
         inner sep=0pt,
         minimum size=12pt,
         font=\normalsize},
      myarrow/.style={-Stealth},
      node distance=0.7cm and 1.3cm
      ]
      \node[label=below:{\scriptsize \color{blue} $0$\color{black}, \color{purple} $z$}, mycircle] (x) {$x$} ;
      \node[label=below:{\scriptsize \color{blue} $0$\color{black}, \color{purple} $z$}, mycircle,above right=of x] (y) {$y$};
      \node[label=below:{\scriptsize \color{blue} $b_x+b_y+b_z$\color{black}, \color{purple} $z$}, mycircle,below right=of y] (z) {$z$};

\draw[-{Stealth}, red, line width=0.5mm] (y) -- node[sloped,  font = \scriptsize, above, color=black] {$9n\gamma + b_x+b_y$} (z);
\draw[-{Stealth}, red, line width=0.5mm] (x) -- node[sloped,  font = \scriptsize, above, color=black] {$9n\gamma + b_x$} (y);
    \end{tikzpicture}
    }
   \vspace*{-2mm}
\caption{
Merging Forest Components:
remaining balances $b'$, forest edges, flow values and representatives are blue, red, black and purple, respectively.
Assume the balances are non-negative.
Initially there are three components: $\{x\}$, $\{y\}$, and $\{z\}$.
Then we merge $\{y\}$ and $\{z\}$ into one component, where $z$ is the representative, inheriting the balance of both $y$ and $z$.
Then we add $x$ to the component, and $z$ is the representative of the final component.
To make sure that the new balances are consistent, we move flow along the red edges to/from the representative vertex.
The reduction in flow in each edge is bounded.}
\label{fig:merges}
\end{figure}
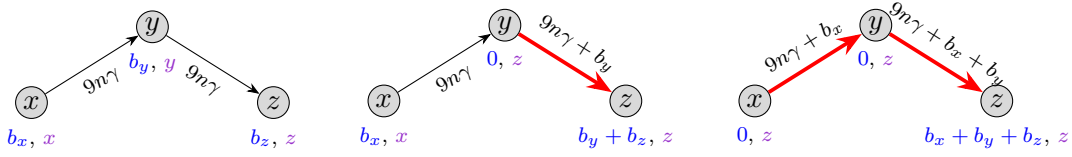
{
\setlength{\algomargin}{1.1cm}
\renewcommand{\theAlgoLine}{MF\arabic{AlgoLine}}
\begin{algorithm}[t]
\SetAlCapHSkip{0pt}
\SetAlgoHangIndent{0pt}
\SetInd{0.5em}{0.5em}
\SetVlineSkip{0.3mm}
\setlength{\algomargin}{10pt}
\caption{$\maintainforest(b', f, \F, actives, \gamma, r)$\label{algo:loopA}}
 \While{$\exists \, e = (x, y).\; e \in \aactives \wedge e \not \in \F \wedge f(e) > 8 n \gamma$ \label{line28}}
 {
  $\F \leftarrow \F \cup \lbrace e\rbrace$;\label{line28a}\\
  \textbf{if} $|\F\text{-component of } y| < |\F\text{-component of } x|$ \textbf{then} exchange $x$ and $y$;\label{line30} \\
  let $x' = r(x)$ and $y' = r(y)$ be the respective representatives;\label{line29}\\
  take residual path $Q \subseteq \FF$ connecting $x'$ and $y'$;\label{line31}\\
  \textbf{if} $b'(x') > 0$
  \textbf{then} augment $f$ along $Q$ by $b'(x')$ from $x'$ to $y'$;\label{line33}\\
  \textbf{else} augment $f$ along $\overset{_{\longleftarrow}}{Q}$ by $- b'(x)$ from $y'$ to $x'$;\label{line35}\\
   $b'(y') \leftarrow b'(y') + b'(x')$; $b'(x') \leftarrow 0$;\label{line36}\\
   \textbf{foreach} \textit{$v$ reachable from $y'$ in $\FF$}
   \textbf{do} $r(v) \leftarrow y'$ \label{line40};\\
   \textbf{foreach} $d = (u, v).  \; d \in \aactives \wedge \lbrace r(u), \, r(v) \rbrace = \lbrace y' \rbrace$
   \textbf{do} \aactives $\leftarrow$ \aactives$\setminus \lbrace d \rbrace$;\label{line37}\\
 }
 \textbf{return} $(b', f, \F, actives, r)$;
\end{algorithm}
}
We now discuss the last major part of the algorithm, namely, maintaining the forest.
Algorithm~\ref{algo:loopA} shows the definition of \maintainforestpar.
We add active edges with flow above $8n\gamma$ to $\F$,
which inevitably changes the connected components of the forest -- we create a new component for every added edge by merging the components
to which the two endpoints of the edge belong.
Since non-zero balance is only allowed for representatives,
we must re-concentrate balances at one of the two representatives after merging two components.\footnote{Cardinalities of forest components in the pseudocode refer to the number of vertices in the component.}
Moreover, we deactivate all edges between them and update the representatives.
For a forest $\F$, $\FF$ is the corresponding residual network consisting only of forest edges.
We perform the re-concentration by augmenting along paths in $\FF$.
Figure~\ref{fig:merges} displays an example of how balances, flow and forest change.

\begin{rem}
The high-level reason for merging components and concentrating the balances at unique representatives is the controlled reduction of the number of non-zero vertices, resulting in a simpler problem.
Only non-zero vertices need further treatment (by augmentations) to receive or emit more flow.
The contraction of vertices is the foundation of several strongly polynomial time algorithms for the minimum-cost flow problem.
The version by Korte and Vygen~\cite{KorteVygenOptimisation} that we follow simulates the contractions by maintaining the forest $\F$ and component representatives.
\end{rem}

Only~\ref{P5a} - \ref{P6} make assertions about \maintainforestpar and we argue why they are indeed satisfied.
\ref{P5a} (preservation of Invariants~\ref{invarsamecomp} and~\ref{invar7}) is easy to see since it is precondition for deactivation to have the same representatives (Line~\ref{line37}).
\begin{proof}[Proof Sketch for~\ref{P4}]
Invariants~\ref{invara1} and~\ref{invara2} bound the forest edges' flow decrease:
\begin{enumerate}[label=(IMF-\theenumi)]
\renewcommand{\theenumi}{IMF-\arabic{enumi}}
\item For any $x$, the product of its $\F\refline{line28}$-component's cardinality and $\beta$ is a bound for $|b'\refline{line28}(x)|$. \label{invara1}
\item For any $e \in \F\refline{line28}$, we have $f\refline{line28}(e) > \alpha - \beta \cdot |X|$ where $X$ is the $\F\refline{line28}$-component of $e$.\label{invara2}
\end{enumerate}
\renewcommand{\theenumi}{\arabic{enumi}}
where $\alpha$ and $\beta$ are initial bounds for the flow and the balances, respectively.
The conjunction of~\ref{invara1} and~\ref{invara2} is preserved by \maintainforestpar.
\end{proof}
\begin{rem}
It is important to always concentrate the balances at the larger component's representative as it is done in the algorithm. Otherwise, the preservation proof for invariant~\ref{invara2}, which is omitted for the sake of brevity, would not work.
\end{rem}

Any iteration merges two $\F$-components making $n - 1$ an upper bound for the number of iterations.
\ref{P51} asserts $\Phi(\maintainforest(state)) \leq \Phi(state) + n$.
This holds because the potential cannot increase by more than $1$ per iteration, as one can see in the following lemma.
\isathm{Mincost_Flow_Algorithms/Maintain_Forest.thy}{2685}{2688}
\begin{lem}\label{phiA}
$\Phi$'s increase during a single iteration of \maintainforestpar is at most $1$.
\end{lem}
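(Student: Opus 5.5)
One iteration of \maintainforestpar changes $b'$ only at the two representatives $x'$ and $y'$ (Line~\ref{line36}): afterwards $b'(x')=0$ and $b'(y')$ becomes $b'(x')+b'(y')$. The threshold $\gamma$ is fixed, and all other vertices keep their balances. So the difference in $\Phi$ reduces to four summands. Abbreviate $\phi(z) = \lceil |z|/\gamma - (1-\epsilon)\rceil$. The claim then becomes
\[
\phi(b'(x')+b'(y')) + \phi(0) \leq \phi(b'(x')) + \phi(b'(y')) + 1 .
\]
Note that $\phi(0) = \lceil -(1-\epsilon)\rceil = 0$, since $0<\epsilon\leq 1$.

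The plan is to prove the scalar inequality $\phi(a+c) \leq \phi(a)+\phi(c)+1$ for all reals $a,c$. First, by the triangle inequality, $|a+c|/\gamma \leq |a|/\gamma + |c|/\gamma$, and $\phi$ is monotone in $|z|$. Writing $p=|a|/\gamma-(1-\epsilon)$ and $q=|c|/\gamma-(1-\epsilon)$, it suffices to show
\[
\lceil p+q+(1-\epsilon)\rceil \leq \lceil p\rceil+\lceil q\rceil+1 .
\]
This follows from $\lceil p+q+(1-\epsilon)\rceil \leq \lceil p\rceil + \lceil q\rceil + \lceil 1-\epsilon\rceil$, using subadditivity of the ceiling, together with $\lceil 1-\epsilon\rceil \leq 1$ because $\epsilon>0$ and $1-\epsilon<1$. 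Where $1-\epsilon\le 0$ the bound is only easier. Combining with $\phi(0)=0$ gives the lemma.

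One degenerate case must be excluded: $x'=y'$. It cannot occur, because Line~\ref{line28} only picks an edge $e$ not in $\F$ that is active. By Invariant~\ref{invarsamecomp}, or more directly by the deactivation in Line~\ref{line37}, an active edge outside $\F$ joins distinct components, so the representatives differ. I also need that the augmentation along $Q$ alters only $f$, and that the update of $r$ and of the active set does not touch $b'$; this can be read off the pseudocode.

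The main obstacle I expect is formal rather than mathematical. In the formalisation, $b'$ is an ADT map, so I must show that the abstract balance function after the update agrees with the old one outside $\{x',y'\}$. Then I must split the finite sum over $\V$ into the part over $\V\setminus\{x',y'\}$ plus the two explicit terms, using that $x',y'\in\V$ and $x'\neq y'$. The ceiling arithmetic itself should be routine for Isabelle's linear-arithmetic automation once it is isolated as a separate lemma about reals.
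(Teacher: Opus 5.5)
Your proof is correct and is essentially the paper's. Both isolate the summands of $x'$ and $y'$, use that the new $x'$-term is $\lceil -(1-\epsilon)\rceil = 0$, bound $|b'(x')+b'(y')|$ by $|b'(x')|+|b'(y')|$ (the paper writes this step as an equality, but only your inequality holds in general), and absorb the extra $1-\epsilon$ by subadditivity of the ceiling, which the paper splits into $\lceil A+B-(1-\epsilon)\rceil \le \lceil A\rceil + \lceil B-(1-\epsilon)\rceil$ and $\lceil A\rceil \le \lceil A-(1-\epsilon)\rceil + 1$.

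One small fix in your side remark on $x'\neq y'$: Invariant~\ref{invarsamecomp} says deactivated edges have both endpoints in one component, whereas you need the converse direction (an active non-forest edge joins two distinct components). So rely on the deactivation in Line~\ref{line37}, maintained inductively, instead; the paper simply takes $x'\neq y'$ for granted when splitting the sum.
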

\begin{proof}\let\qed\relax
Only Line~\ref{line36} is relevant for the potential since it is the only place where $b'$ is changed.
We look at $state\refline{line35}$ (state before changing $b'$) and $state\refline{line36}$ (state after changing $b'$).
It follows:
{\allowdisplaybreaks
\begin{align*}
\Phi(state\refline{line36}) & = \sum \limits_{v \in \V}
   \Bigl \lceil \frac{|b'\refline{line36}(v)|}{\gamma} -(1 -\epsilon) \Bigr \rceil
\\
& = \sum \limits_{v \in \V \setminus \lbrace x', y'\rbrace}
   \Bigl \lceil \frac{|b'\refline{line35}(v)|}{\gamma} -(1 -\epsilon) \Bigr \rceil +
   \Bigl \lceil \frac{0}{\gamma} -(1 -\epsilon) \Bigr \rceil \\ & +
   \Bigl \lceil \frac{|b'\refline{line35}(y') + b'\refline{line35}(x')|}{\gamma} -(1 -\epsilon) \Bigr \rceil
\\
 & =
 \sum \limits_{v \in \V \setminus \lbrace x', y'\rbrace}
 \Bigl \lceil \frac{|b'\refline{line35}(v)|}{\gamma} -(1 -\epsilon) \Bigr \rceil +
   \Bigl \lceil \frac{|b'\refline{line35}(y')| + |b'\refline{line35}(x')|}{\gamma} -(1 -\epsilon) \Bigr \rceil
\\
   & \leq
 \sum \limits_{v \in \V \setminus \lbrace x', y'\rbrace}
 \Bigl \lceil \frac{|b'\refline{line35}(v)|}{\gamma} -(1 -\epsilon) \Bigr \rceil +
   \Bigl \lceil \frac{|b'\refline{line35}(x')|}{\gamma}\Bigr \rceil+
    \Bigl \lceil \frac{|b'\refline{line35}(y')|}{\gamma} -(1 -\epsilon) \Bigr \rceil
\\
    & \leq
 \sum \limits_{v \in \V \setminus \lbrace x', y'\rbrace}
 \Bigl \lceil \frac{|b'\refline{line35}(v)|}{\gamma} -(1 -\epsilon) \Bigr \rceil +
   \Bigl \lceil \frac{|b'\refline{line35}(x')|}{\gamma} - (1 - \epsilon)\Bigr \rceil \\ &+ 1 +
    \Bigl \lceil \frac{|b'\refline{line35}(y')|}{\gamma} -(1 -\epsilon) \Bigr \rceil \\
    & = \Phi(state\refline{line35}) + 1 \hspace*{9.5cm} \qedsymbol
\end{align*}
}
\end{proof}

\begin{lem}\label{lemma:optforestpath}
If all invariants hold in Line~\ref{line28a}, both
\begin{itemize}
\item \isaitem{Mincost_Flow_Algorithms/Maintain_Forest.thy}{2116}{2116}  the path $Q$
\item \isaitem{Mincost_Flow_Algorithms/Maintain_Forest.thy}{2028}{2028}  and its reversal $\overset{_{\longleftarrow}}{Q}$
\end{itemize}
are minimum-cost paths.
\end{lem}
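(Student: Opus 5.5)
The plan is a contradiction argument that reduces everything to the Optimality Criterion (Theorem~\ref{optcrit}) applied to the current flow $f\refline{line28a}$. That flow is optimum for $b - b'\refline{line28a}$ by Invariant~\ref{invaropt}.

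\textbf{Step 1: every forest edge is usable in both directions.} For every $e \in \F\refline{line28a}$, both $F\,e$ and $B\,e$ should have strictly positive residual capacity. The forward direction is immediate, since the instance is uncapacitated. For the backward direction we need $f(e) > 0$. For the freshly added edge this follows from the loop condition in Line~\ref{line28}, which gives $f(e) > 8n\gamma > 0$. For the older forest edges it follows from Invariant~\ref{invara2} together with the choice of $\alpha$ and $\beta$: since $|X| \le n$, we have $\alpha - \beta|X| \ge 0$. Hence every edge of $\FF$ has positive residual capacity, so $Q$ and $\overset{_{\longleftarrow}}{Q}$ are both augmenting paths w.r.t.\ $f$.

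\textbf{Step 2: a cheaper path gives a negative closed walk.} Suppose some augmenting path $P$ from $x'$ to $y'$ satisfies $\residualc(P) < \residualc(Q)$. Concatenate $P$ with $\overset{_{\longleftarrow}}{Q}$. This is a closed walk from $x'$ to $x'$ in which every residual edge has positive residual capacity, the edges of $\overset{_{\longleftarrow}}{Q}$ by Step~1. Its cost is
\[
\residualc(P) + \residualc(\overset{_{\longleftarrow}}{Q}) = \residualc(P) - \residualc(Q) < 0 .
\]
The case of $\overset{_{\longleftarrow}}{Q}$ is completely symmetric. There we take a cheaper $y'$-$x'$ path $P$ and close it with $Q$, whose edges again have positive capacity by Step~1.

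\textbf{Step 3: from the closed walk to an augmenting cycle.} The closed walk need not be edge-distinct or vertex-distinct, because $P$ and $\overset{_{\longleftarrow}}{Q}$ may share residual edges. I would decompose it into simple cycles by repeatedly cutting at the first repeated vertex. This is a straightforward induction on the length of the walk, and it splits the walk's cost as the sum of the cycles' costs. Since the total cost is negative, one of these cycles has negative cost. That cycle uses only edges of positive residual capacity, so it is an augmenting cycle w.r.t.\ $f$. Theorem~\ref{optcrit} then contradicts Invariant~\ref{invaropt}. (Alternatively, one could view the walk as a circulation and reuse Lemma~\ref{lemma:circdecomp}, as in the proof of Theorem~\ref{optcrit}.)

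The main obstacle I expect is Step~3 in its formal guise. The walk-to-cycles decomposition has to keep track of costs and of edge-distinctness, which is needed for the notion of augmenting cycle. The positivity of backward forest capacities in Step~1 also needs care, because it relies on the auxiliary invariants of \maintainforestpar rather than on Invariant~\ref{invarforest} directly.
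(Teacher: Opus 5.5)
Your proposal is correct and follows the paper's proof. Like the paper, you assume a cheaper $P$ and close it with $\overset{_{\longleftarrow}}{Q}$ (resp.\ $Q$) to obtain a negative-cost closed walk of positive residual capacity, which contradicts Theorem~\ref{optcrit} together with Invariant~\ref{invaropt}; your Steps~1 and~3 only spell out what the paper takes for granted, namely the positive backward capacity of forest edges and the extraction of a simple negative cycle from the closed walk.
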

\begin{proof}
Suppose this were not true.
There is $P$ with $\mathfrak{c}(P) < \mathfrak{c}(Q)$ connecting the same vertices.
Since $\mathfrak{c}(\overset{_{\longleftarrow}}{Q}) = - \mathfrak{c}(Q)$, $P\overset{_{\longleftarrow}}{Q}$ is a cycle with $\mathfrak{c}(P\overset{_{\longleftarrow}}{Q}) =
\mathfrak{c}(P) +\mathfrak{c}(\overset{_{\longleftarrow}}{Q}) < 0$
and $\mathfrak{u}_{f}(P\overset{_{\longleftarrow}}{Q})> 0$ contradicting Theorem~\ref{optcrit} and Invariant~\ref{invaropt}.
The case for $\overset{_{\longleftarrow}}{Q}$ is analogous.
\end{proof}

\begin{proof}[Proof of~\ref{P6}]
To apply Theorem~\ref{thm911} for Lines~\ref{line33} and~\ref{line35}, we need optimality of the forest paths $Q$ and $\overset{_{\longleftarrow}}{Q}$ that are used for augmentations.
This follows from Lemma~\ref{lemma:optforestpath}.
\end{proof}

This concludes our proofs that \maintainforestpar satisfies properties~\ref{P5a} - \ref{P6}.

\paragraph*{Formalisation}
In the loop function \tmaintainforest, there is only one recursive case.
Also, the specification locale \maintainforestspec fixes only one path selection function \getpath.
We encapsulate the preconditions for specifying its behaviour into a definition \maintainforestgetpathcond, similar to \getsourcetargetpathacond for \sendflowpar, and use them to state the assumptions \getpathaxioms of the locale \maintainforesttt.
An important part of those preconditions is the auxiliary and implementation invariants.

The forest is specified as an ADT for an adjacency map~\cite{graphAlgosBook}.
We add an undirected edge to the forest by adding two edges in opposite directions to the representing adjacency map, making the latter symmetric.
For connected components of $\F$, we reuse Abdulaziz and Mehlhorn's~\cite{BlossomIsabelleJAR} work.
We assume the function \getpath takes $\F$ as an adjacency map and returns a path as a sequence of vertices.
We must transform these into paths over residual edges to perform an augmentation.
We have a program variable \convtoredge, again specified as an ADT for maps, that takes a pair of vertices $(u, v)$ and returns a positive-capacity residual edge in $\FF$ pointing from $u$ to $v$.
\convtoredge \state is always updated in a way such that residual edges returned by $\convtoredge \;\state \;\tlpar\tttu, \tttv\trpar$ and $\convtoredge \;\state \;\tlpar\tttv, \tttu\trpar$ originate from the same graph edge.
Due to this, $Q$ and $\overset{_{\longleftarrow}}{Q}$ have opposite costs as needed to show Lemma~\ref{lemma:optforestpath}.

We deactivate edges in Line~\ref{line37} of \maintainforestpar if they are active and if their endpoints' representatives are within the same newly created forest component.
The other loop \sendflowpar requires paths that use active and forest edges only.
To facilitate the computation of those paths later, we have the program variable \notblocked which is a map (specified as ADT, of course) from graph edges to booleans.
\notblocked \state \ttte being true means that the edge \ttte is not deactivated, i.e.\ it is either a forest or an active edge.
In the formalisation, we assume the function \notblockedupdall in the respective ADT specification.
We use it to iterate over the whole domain of \notblocked \state (which is $\E$):
if for an edge $e \in \E$ with endpoints $x_1$ and $x_2$, $\lbrace r(x_1), r(x_2) \rbrace = \lbrace x', y'\rbrace$ holds, we set \notblocked \state \ttte to \False.
We also remove these edges from \actives \state with a function \filter, which comes from the ADT for \actives.
Like the list function of the same name, it iterates over all elements and keeps only those that satisfy a predicate.

To make the comparison in Line~\ref{line30} executable in constant time, we maintain cardinalities of each vertex's forest component.
Since we have to update these in the same way as the representatives, we combine both into a program variable \repcompcard which maps vertices to a pair of their representatives and $\F$-component cardinalities.
Updates to this variable, which are required by Line~\ref{line40} of \maintainforestpar, are realised as follows:
We only have to change representatives and component cardinalities for vertices whose $\F$-component grew by adding $e$.
For a vertex $v$, this is the case iff its old representative was $x'$ or $y'$.
We use an iteration function \repcompupdall (specified by the respective ADT, similar to \notblockedupdall) to iterate over the domain (set of vertices $\V$), to perform the required updates. The new representative is $y'$ and the cardinality value is $|C_x| + |C_y|$ where $C_x$ and $C_y$ are the $\F$-components of $x$ and $y$ respectively.

Invariants~\ref{invara1} and~\ref{invara2} are represented  by \invarFone and \invarFtwo, respectively, whose preservation (\invarFtwopres) is needed for~\ref{P4}.
Its second part, i.e.\ that the flow outside the forest does not change, is \outsideFnoflowchange.
Since the program modifies all variables, the preservation proofs for the invariants, even for the auxiliary invariant \underlyinginvars, which talks about simple relationships among the program variables, are more verbose than in case of \sendflowpar.

\subsection{Termination}

\label{sec:term}
\newcommand{\orlinsonestep}{\isainl{Mincost_Flow_Algorithms/Orlins.thy}{79}{85}{orlins-one-step}\xspace}
\newcommand{\orlinsonestepcheck}{\isainl{Mincost_Flow_Algorithms/Orlins.thy}{87}{91}{orlins-one-step-check}\xspace}
\newcommand{\tflag}{\texttt{flag}\xspace}
\newcommand{\ttteq}{\texttt{=}\xspace}
\newcommand{\success}{\texttt{success}\xspace}
\newcommand{\importantormergeortermination}{\isainl{Mincost_Flow_Algorithms/Orlins.thy}{1365}{1380}{important-or-merge-or-termination}\xspace}
\newcommand{\tttk}{\texttt{k}\xspace}
\newcommand{\ifimportantthencompincreaseortermination}{\isainl{Mincost_Flow_Algorithms/Orlins.thy}{1632}{1645}{if-important-then-comp-increase-or-termination}\xspace}
\newcommand{\tttz}{\texttt{z}\xspace}
\newcommand{\succfailtermsamedom}{\isainl{Mincost_Flow_Algorithms/Orlins.thy}{134}{137}{succ-fail-term-same-dom}\xspace}
\newcommand{\succfailtermsame}{\isainl{Mincost_Flow_Algorithms/Orlins.thy}{207}{210}{succ-fail-term-same}\xspace}
\newcommand{\finallytermination}{\isainl{Mincost_Flow_Algorithms/Orlins.thy}{2105}{2117}{finally-termination}\xspace}
\newcommand{\compow}{\texttt{compow}\xspace}
\newcommand{\ttti}{\texttt{i}\xspace}

For the inner loops, we can build a termination measure out of the number of forest components and the value of $\Phi$.
Now, call a vertex $v$ \textit{important} iff $|b'(v)| > (1 - \epsilon) \cdot \gamma$~\cite{KorteVygenOptimisation}, i.e.\ its contribution to $\Phi$ is positive.
We repeatedly wait for the occurrence of an important vertex and ensure a merge of two forest components some iterations later.
We define $\ell = \lceil \log (4 \cdot m \cdot n + (1 - \epsilon)) - \log \epsilon\rceil +1$ and $k = \lceil \log n\rceil + 3$.
\isathm{Mincost_Flow_Algorithms/Orlins.thy}{1632}{1645}
\begin{lem}[\cite{PlotkinTardosDualNetworkSimplex,KorteVygenOptimisation}\label{lemma:if_imporant_then}]
If there is an important vertex in an iteration, $\ell + 1$ additional iterations of the outer loop enforce its component being increased (or the algorithm terminates).
\end{lem}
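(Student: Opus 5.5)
Fix an iteration (outer-loop state) in which a vertex $v$ is important, i.e.\ $|b'(v)| > (1-\epsilon)\gamma$. By Invariant~\ref{invar7}, $v$ is a representative; let $X$ be its $\F$-component. The plan is a contradiction argument. Assume that during the next $\ell+1$ iterations the algorithm neither terminates nor enlarges $X$. Since $X$ is never merged, $v$ stays the representative of the unchanged component $X$. The main tool is the Flow Value Lemma (Lemma~\ref{lemma:flowvalue}) applied to the cut $X$. Because capacities are infinite and $f$ is valid for $b-b'$, we get
\[
\sum_{x\in X} (b-b')(x) = \sum_{e\in\Deltaplus(X)} f(e) - \sum_{e\in\Deltaminus(X)} f(e).
\]
As long as $X$ is unchanged, $\sum_{x\in X} b(x)$ is a fixed constant. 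Moreover, by Invariant~\ref{invar7}, $\sum_{x\in X} b'(x) = b'(v)$.

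First I would bound $b'(v)$ at the later time. Each outer iteration halves $\gamma$ or shrinks it further. After $\ell$ iterations, therefore, $\gamma_{\text{later}} \le 2^{-\ell}\gamma$, and \ref{P2} gives $|b'_{\text{later}}(v)| \le (1-\epsilon)\gamma_{\text{later}}$. It follows that the crossing flow $\sum_{\Deltaplus(X)} f - \sum_{\Deltaminus(X)} f$ has changed by at least $(1-\epsilon)\gamma - 2^{-\ell}\gamma$ in absolute value.

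Second, I would bound how much flow can already sit on crossing edges, and how it can change. Crossing edges are not in $\F$ (otherwise $X$ would contain both endpoints). By Invariant~\ref{invarsamecomp}, they are still active, so by Invariant~\ref{invarintegral} their flow is an integer multiple of the current $\gamma$. Each crossing edge also has flow at most $8n\gamma$, since otherwise \maintainforestpar would add it to $\F$ and merge $X$. Consequently the total crossing flow is at most $8mn\gamma$ at any time. With the precise constant $\ell = \lceil \log(4mn + (1-\epsilon)) - \log\epsilon\rceil + 1$, I expect the argument to work as follows. At the start of the window, the net crossing flow is at most roughly $4mn\gamma_0$ in magnitude, using the $4n\gamma$ bound at line~\ref{call_augment_edges} combined with the ``all active flows zero'' branch (Line~\ref{line25}). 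After $\ell$ halvings, $\gamma_\ell \le \epsilon\gamma_0/(4mn + 1 - \epsilon)$. The crossing flow must still account for the discrepancy $\sum_X b$ minus a small $b'(v)$, which forces some crossing edge to carry flow exceeding $8n\gamma_\ell$. In the iteration after that (the extra ``$+1$''), \maintainforestpar must then add such an edge to $\F$, enlarging $X$.

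I expect the main obstacle to be making the bookkeeping of the $\gamma$ updates precise. In particular, the branch at Line~\ref{line25}, where $\gamma$ can drop to $\max|b'|$ rather than being halved, must be handled: there all active flows are zero, so the crossing flow is exactly zero, and $\sum_X b = b'(v)$ is pinned down. I would split into two cases according to whether $\sum_{x\in X} b(x)$ is small (at most about $\epsilon\gamma$-scale) or large relative to the current $\gamma$. In the large case, an edge with flow above $8n\gamma$ eventually appears, by pigeonhole over at most $m$ crossing edges. In the small case, importance of $v$ together with the Flow Value identity forces a crossing flow of at least $(1-\epsilon-\epsilon)\gamma$. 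Since that flow is a multiple of $\gamma$, it remains at least $\gamma_0/2$ forever, after the edge-wise bound is accounted for, because flow on non-forest edges changes only in \sendflowpar by multiples of $\gamma$ (\ref{P2a}, \ref{P4}). After $\ell$ halvings, this exceeds $8n m\gamma_\ell$, which yields the merge. I would first try to set up these two cases as a single inequality chain and verify that the definition of $\ell$ fits it exactly.
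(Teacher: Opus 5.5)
\textbf{There is a genuine gap: the proposal never establishes a lower bound on $|b(X)|$ at the moment $v$ becomes important, and the proof cannot close without it.}

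Your scaffolding is sound. You apply the flow value lemma to $X$; $b(X):=\sum_{x\in X}b(x)$ is fixed while $X$ is unchanged; $\sum_{x\in X}b'(x)=b'(v)$ by Invariant~\ref{invar7}; crossing edges are active and outside $\F$ by Invariant~\ref{invarsamecomp}; and a pigeonhole over at most $m$ crossing edges applies at the moment \textit{maintain-forest} tests $f(e)>8n\gamma$. What drives the lemma, though, is the bound $|b(X)|>\epsilon\gamma$, where $\gamma$ is the threshold at which $v$ is important.

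Your ``small case'' does not work. The claim that the crossing flow ``remains at least $\gamma_0/2$ forever'' does not follow from flows changing by multiples of $\gamma$. \textit{send-flow} can decrease flow on crossing edges, via backward residual edges, all the way to $0$. In fact, if $|b(X)|$ were small, the flow value identity would force the net crossing flow $b(X)-b'(v)$ to become small as soon as $|b'(v)|$ is. So no merge would be forced at all.

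Nor can you exclude this case modulo the \emph{current} $\gamma$. Take $b'(v)=\gamma$, a net inflow of $\gamma$, and $b(X)=0$: this is consistent with Invariant~\ref{invarintegral} at scale $\gamma$. The ``$4mn\gamma_0$'' estimate is also misattributed. Invariant~\ref{invarforest} is a lower bound on forest edges, not an upper bound on crossing edges, and an upper bound on crossing flow cannot give a lower bound on $|b(X)|$ anyway.

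\textbf{The missing idea: show the small case cannot occur by looking one scale back.} Importance is tested right after the $\gamma$-update. At that point $f$, $b'$ and $\F$ are still those returned by \textit{send-flow} at the previous threshold.
\begin{itemize}
\item In the halving branch that previous threshold is $2\gamma$. So the crossing flows are multiples of $2\gamma$, and $b(X)-b'(v)\in 2\gamma\mathbb{Z}$. Property~\ref{P2} gives $|b'(v)|\le 2(1-\epsilon)\gamma$, and importance gives $|b'(v)|>(1-\epsilon)\gamma$. Hence $|b(X)|\ge\min\{|b'(v)|,\,2\gamma-|b'(v)|\}>\epsilon\gamma$, using $\epsilon\le\frac12$.
\item In the branch of Line~\ref{line25} the crossing flow is $0$, so $|b(X)|=|b'(v)|>(1-\epsilon)\gamma$.
\item At initialisation, $b(X)=b'(v)$.
\end{itemize}

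With this bound in hand, your ``large case'' finishes the proof, with one correction. Take the bound on $b'(v)$ at the time \textit{maintain-forest} tests edges in the later iteration. There the threshold has just been halved after \ref{P2} applied, so the bound is $|b'(v)|\le 2(1-\epsilon)\gamma_{\mathrm{later}}$, not $(1-\epsilon)\gamma_{\mathrm{later}}$. A crossing edge with flow above $8n\gamma_{\mathrm{later}}$ is then forced once $\epsilon\gamma\ge(8mn+2(1-\epsilon))\gamma_{\mathrm{later}}$, i.e.\ once $2^{\ell}\ge 2(4mn+(1-\epsilon))/\epsilon$. This is exactly the given $\ell$, so the $4mn$ in $\ell$ is $8mn/2$, not a trace of the $4n\gamma$ invariant.

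The paper gives no informal proof of this lemma; it defers to the cited sources and the formalisation. Its choice of $\ell$ matches precisely this argument.
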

\isathm{Mincost_Flow_Algorithms/Orlins.thy}{1365}{1380}
\begin{lem}[\cite{KorteVygenOptimisation}\label{lemma:wait_for_important}]
If we wait for $k + 1$ iterations, a vertex has become important (or the algorithm terminates).
\end{lem}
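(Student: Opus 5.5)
The plan is to argue by contradiction. Assume that for $k+1$ consecutive iterations of the outer loop no vertex is important and the algorithm does not terminate, and derive a violation of one of the invariants or of the control flow of Algorithm~\ref{algo:orlins}.

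\emph{Step 1: no important vertex freezes the flow off the forest.} If no vertex is important in $state$, then every summand of $\Phi(state)$ is $\leq 0$, so $\Phi(state)=0$. By~\ref{P52}, \sendflowpar performs no augmentation, hence it changes neither $f$ nor $b'$. By~\ref{P4}, \maintainforestpar changes $f$ only on edges of the new forest. So throughout the window, the flow on every active edge outside $\F$ is constant. Let $\gamma_0$ be the threshold at the start of the window. By Invariant~\ref{invarintegral}, each such edge $e$ has $f(e)\in\gamma_0\mathbb{N}$, so $f(e)>0$ implies $f(e)\geq\gamma_0$.

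\emph{Step 2: case split on Line~\ref{line24}.} Suppose that in some iteration of the window all active edges carry zero flow. Then $\gamma\leftarrow\min\{\gamma/2,\max_v|b'(v)|\}$. If the minimum is $\max_v|b'(v)|$, some $v$ has $|b'(v)|=\gamma>(1-\epsilon)\gamma$ and is important, since $\epsilon>0$ and $b'\neq 0$ (otherwise \sendflowpar returns \asuccess by~\ref{P2}). One must then check that \maintainforestpar keeps a vertex important: merging can cancel balances of opposite signs, so I would use Invariant~\ref{invar7} and the fact that the maximum representative balance is what is compared. If instead the minimum is $\gamma/2$, then $\max_v|b'(v)|\geq\gamma/2$ for the new $\gamma$, and I expect this to be handled the same way one halving later.

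\emph{Step 3: otherwise $\gamma$ halves every iteration.} In the remaining case $\gamma$ halves in each iteration via Line~\ref{line27}. After $\lceil\log n\rceil+3$ halvings we have $\gamma\leq\gamma_0/(8n)$. Every active non-forest edge with positive flow then has $f(e)\geq\gamma_0\geq 8n\gamma$; I intend to get strict inequality by placing the halving count carefully within the $k+1$ iterations. Consequently the loop condition of \maintainforestpar (Line~\ref{line28}) forces each such edge into $\F$, or the edge is deactivated in Line~\ref{line37}. After that call, all active edges outside $\F$ have zero flow.

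The main obstacle is the remaining step: showing that the forest edges themselves do not block the condition of Line~\ref{line24}, which quantifies over active edges and so includes forest edges. I would resolve this by checking the formal definition of \areall. I expect it to range over active edges only, with forest edges removed from \aactives in Line~\ref{line37}, because the two endpoints of a forest edge share a representative. Granting this, the next iteration falls into the first case of Step 2, which yields an important vertex within the $k+1$ iterations. This contradicts the assumption, and the count of iterations gives exactly the bound $k+1$.
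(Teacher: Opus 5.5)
Your ingredients are the right ones. With no important vertex, $\Phi=0$ and \sendflowpar is idle. If the test in Line~\ref{line24} succeeds, the new $\gamma$ is at most $\max_v|b'(v)|$, so a maximiser is important at once. This settles your second sub-case immediately, since $|b'(v)|\geq\gamma>(1-\epsilon)\gamma$. In that branch \maintainforestpar is also idle, because no active edge has flow above $8n\gamma>0$, so there is no cancellation to worry about. Otherwise $\gamma$ halves while some active edge keeps a flow $f(e)\geq\gamma_0$.

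The gap is the final count. Step~3 only guarantees $f(e)>8n\gamma$ after $\lceil\log n\rceil+4=k+1$ halvings. If $n$ is a power of two and $f(e)=\gamma_0$, which Invariant~\ref{invarintegral} permits, then $8n\gamma_j=2^{k-j}\gamma_0\geq f(e)$ for every $j\leq k$. So the absorbing call of \maintainforestpar may come only in iteration $k+1$, and the zero-flow branch of Line~\ref{line24} fires only in iteration $k+2$. No placement of the halvings repairs this. Take $n=4$, a zero-cost path $s\to t\to u\to w$, $b=(1,-1+\delta,-\delta,0)$, $\epsilon=1/4$ and a tiny $\delta>0$, with the formal loop order ($\gamma$ update, \maintainforestpar, \sendflowpar). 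After the initial \sendflowpar, $f(s,t)=\gamma_0=1$ and $b'=(0,\delta,-\delta,0)$. During the next $6=k+1$ iterations $\gamma$ halves and no vertex is important. The edge $(s,t)$ enters $\mathcal{F}$ in the sixth iteration, and $t$ becomes important only in the seventh. So the conclusion you set out to prove, with no outcome other than an important vertex or termination, is too strong by one iteration.

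The missing idea is a third outcome, which the formal counterpart \texttt{important-or-merge-or-termination} has and which the proof of Theorem~\ref{term3} uses. The correct statement is: within $k+1$ iterations a vertex becomes important, \emph{or two $\mathcal{F}$-components merge}, or the algorithm terminates. With it, your proof ends at Step~3 and needs neither the extra iteration nor your ``main obstacle''. Assume no merge. Then \maintainforestpar never executes its loop body, since each execution merges two components, so it changes nothing. Hence the state after the $\gamma$ update is exactly what \sendflowpar receives, which makes Step~1 airtight, and $f$, $\mathcal{F}$ and the active set are frozen. By iteration $k+1$ the edge $e$ meets the condition of Line~\ref{line28}, so the loop body runs: a merge, which is a contradiction. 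Incidentally, forest edges are indeed never active: Line~\ref{line37} removes $e$ itself once both its endpoints have representative $y'$.
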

\isathm{Mincost_Flow_Algorithms/Orlins.thy}{2105}{2117}
\begin{thm}[Termination of Orlin's Algorithm]
\label{term3}
After at most $(n - 1) \cdot (k + {\ell} + 2)$ iterations of the outer loop, Orlin's algorithm will terminate.
\end{thm}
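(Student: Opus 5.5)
The plan is to combine Lemma~\ref{lemma:wait_for_important} and Lemma~\ref{lemma:if_imporant_then} into a single ``merge-forcing'' bound: starting from any iteration of the outer loop, within $k + \ell + 2$ further iterations either the algorithm terminates or the number of $\F$-components strictly decreases. Since $\F$ is a forest on $n$ vertices, it has at most $n$ components initially and at least one at all times. Also, \maintainforestpar only ever adds edges, so the number of components is monotonically non-increasing. Hence at most $n - 1$ merges can happen, and the claimed bound $(n-1)\cdot(k+\ell+2)$ follows by summing over the at most $n-1$ ``rounds''.

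Concretely, I would proceed as follows. First, fix a state at the beginning of an iteration. By Lemma~\ref{lemma:wait_for_important}, after at most $k+1$ iterations either the algorithm has terminated or some vertex $v$ is important. In the latter case, Lemma~\ref{lemma:if_imporant_then} gives that after at most $\ell+1$ further iterations either the algorithm terminates or the $\F$-component of $v$ has grown. Growth of a component can only happen by adding an edge to $\F$ in \maintainforestpar, which merges two components. So within $(k+1)+(\ell+1) = k+\ell+2$ iterations the component count drops by at least one, or we terminate. Second, I would set up an induction on the number of $\F$-components $c$ of the current state. The claim to prove is that from any state with $c$ components, the algorithm terminates within $(c-1)(k+\ell+2)$ iterations. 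Instantiating at the initial state, where $\F=\emptyset$ and $c = n$, yields the theorem.

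The base case needs care. If $c = 1$, no merge is possible, so the first step shows that the algorithm must terminate within $k+\ell+2$ iterations. This seems to give $(c-1)(k+\ell+2)+ (k+\ell+2)$ rather than $0$, so I would instead state the invariant as ``at most $(c-1)(k+\ell+2)$ iterations after the \emph{last} possible merge point'' and argue that with one component every vertex is a representative only if $n=1$. The intended statement likely relies on the fact that once $\F$ spans a single component, Invariant~\ref{invar7} forces all balance onto one representative. Since the balances of $b'$ sum to zero, that balance is zero, so \sendflowpar returns \asuccess{} immediately. I would use this to close the base case with zero extra iterations. In the inductive step I apply the merge-forcing bound and then the induction hypothesis for $c' \le c-1$.

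The main obstacle I expect is the bookkeeping of ``iterations'' across the reordered loop in the formalisation (\torlinsfun starts with the $\gamma$ decrease). Lemmas~\ref{lemma:wait_for_important} and~\ref{lemma:if_imporant_then} must be applied to states that satisfy all invariants; that these invariants hold follows from the partial correctness argument (Theorem~\ref{lemma:orlinspartialcorrect}). They must also compose without off-by-one errors. Formally, I would express ``$j$ iterations'' via iterating a one-step function (like \orlinsonestep) $j$ times, and show that termination, i.e.\ a success or failure flag, is stable under further iteration. With this, the bounds from the two lemmas can be added and chained by induction on $c$.
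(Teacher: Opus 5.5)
Your route is the paper's: strong induction on the number of $\F$-components. Lemma~\ref{lemma:wait_for_important} and then Lemma~\ref{lemma:if_imporant_then} force termination or a merge within $k+\ell+2$ iterations, and the formalisation makes the same case split into termination, a merge, or an important vertex whose component must then grow.

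Your base case, however, is your own addition. The paper's sketch only counts at most $n-1$ merges and never treats the final component separately. Its running-time term~(\ref{final_orlins_time}), which allows $n(\ell+k+2)-1$ non-final iterations, also looks as if a full round is budgeted for that component. Your argument is what is needed to get $n-1$ rather than $n$ rounds, but it silently uses $\sum_{v\in\V} b(v)=0$. Only under that hypothesis does Invariant~\ref{invaropt} give $\sum_v b'(v)=0$, since any flow valid for $b-b'$ has zero total excess. The paper imposes no such hypothesis on Orlin's algorithm and lets it report infeasible inputs. If $\sum_v b(v)\neq 0$, the following happens:
\begin{itemize}
\item The lone representative $r$ keeps balance $\sum_v b(v)$, so \sendflowpar{} may return \pleasecontinue{}.
\item One further iteration is then needed. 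Every edge is in $\F$ or deactivated, so $\gamma$ drops to at most $|b'(r)|$.
\item Then $r$ becomes important without a partner, and \ainfeasible{} is returned.
\end{itemize}
So either state the hypothesis explicitly or charge that extra iteration. Also, ``zero extra iterations'' holds only if you count iterations of the reordered loop (decrease $\gamma$, \maintainforestpar, \sendflowpar), starting after the initial \sendflowpar. Only then do the last merge and the \sendflowpar{} that sees $b'=0$ fall into the same iteration; in the pseudocode order, success is observed one iteration later.

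Second, you never show that a single outer iteration terminates. The paper's proof does this first, and it is needed before a bound on outer iterations yields termination of the algorithm:
\begin{itemize}
\item \maintainforestpar{} performs at most $n-1$ iterations, because each one merges two $\F$-components.
\item \sendflowpar{} performs at most $\Phi$ iterations by Lemma~\ref{lem:phiB}. Its hypothesis, that forest edges carry flow above $6n\gamma$, comes from the iteration-by-iteration invariant argument in the proof of Theorem~\ref{lemma:orlinspartialcorrect}, which does not rely on termination.
\end{itemize}
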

\begin{proof}[Proof Sketch]
\maintainforestpar and \sendflowpar terminate due to the decrease in the number of forest components or $\Phi$, respectively.
Obviously, there can be at most $n-1$ component merges, which gives the statement together with Lemmas~\ref{lemma:if_imporant_then} and~\ref{lemma:wait_for_important}.
Note that due to Invariant~\ref{invar7}, there can be at most one important vertex per component.
\end{proof}

\paragraph{Formalisation.}
The usual way to prove termination in Isabelle/HOL is to find a termination measure which decreases in every loop iteration.
This is not possible for the function \torlinsfun since we only know that after at most $k + {\ell} + 2$ iterations, the number of forest components must have decreased.
We introduce \orlinsonestep and \orlinsonestepcheck to express a single iteration of the loop body in \orlinspar.
We define \orlinsonestepcheck \state to execute \orlinsonestep \state if \tflag \state \ttteq \notyetterm.
Otherwise, i.e.\ if \tflag \state is \success or \infeasible, the state remains unchanged.
We iterate \orlinsonestepcheck (written as \compow \ttti \orlinsonestepcheck or \orlinsonestepcheck \char`\^\char`\^ \ttti) to express the state after \ttti iterations.
\footnote{If the algorithm terminates after $i$ iterations, we say for all $j \geq i$ that the state after $j$ iterations is the final state.}

\torlinsfun will terminate on the respective input state if the flag is \success or \infeasible after a strictly positive number of iterations (\succfailtermsamedom).
The companion \succfailtermsame states that the state reached after those iterations is exactly the one \torlinsfun returns.
We prove \finallytermination, which shows that the flag is finally \success or \infeasible, by strong induction over the number of forest components.
In the induction step, we apply the lemma \importantormergeortermination, which corresponds to Lemma~\ref{lemma:wait_for_important}, and obtain \tttk, followed by a case analysis:
(1) If the flag indicates termination in the next \tttk steps, we are done.
(2) If the number of components decreases, we can immediately apply the induction hypothesis.
(3) Otherwise, there is an important vertex \tttz, to which we apply \ifimportantthencompincreaseortermination, which corresponds to Lemma~\ref{lemma:if_imporant_then}, to obtain $\ell$.
(3.1) If the flag indicates termination in the next $\ell$ steps, we are done.
(3.2) Otherwise, \tttz's component is increased and we can apply the induction hypothesis.
\succfailtermsamedom and \finallytermination yield termination of \torlinsfun on the input state \initial.

To prove invariants for the function \torlinsfun, we deviate from our standard approach by proving the invariants for \orlinsonestepcheck \char`\^\char`\^ \ttti \state instead.
For the formal termination proof, we need that the invariants are satisfied after an arbitrary number of steps.
In combination with \succfailtermsamedom, we obtain the invariants for the final state for free which is why we can deviate from the standard approach of an induction over the actual function. \initialstateorlins combines the formalised results presented up to now into total correctness.

\subsection{Running Time}
\label{sec:running_time}
\newcommand{\torlinstime}{\isainl{Mincost_Flow_Algorithms/Orlins_Time.thy}{15}{24}{orlins-time}\xspace}
\newcommand{\tmaintainforesttime}{\isainl{Mincost_Flow_Algorithms/Maintain_Forest_Time.thy}{11}{56}{maintain-forest-time}\xspace}
\newcommand{\tsendflowtime}{\isainl{Mincost_Flow_Algorithms/Send_Flow_Time.thy}{12}{34}{send-flow-time}\xspace}
\newcommand{\torlinsonesteptime}{\isainl{Mincost_Flow_Algorithms/Orlins_Time.thy}{27}{32}{orlins-one-step-time}\xspace}
\newcommand{\torlinsonesteptimecheck}{\isainl{Mincost_Flow_Algorithms/Orlins_Time.thy}{34}{39}{orlins-one-step-time-check}\xspace}
\newcommand{\runningtimeinitial}{\isainl{Mincost_Flow_Algorithms/Orlins_Time.thy}{1725}{1734}{running-time-initial}\xspace}
\newcommand{\tlaminar}{\isainl{Mincost_Flow_Algorithms/Laminar_Family.thy}{7}{10}{laminar}\xspace}
\newcommand{\laminarfamilynumberofsets}{\isainl{Mincost_Flow_Algorithms/Laminar_Family.thy}{27}{29}{laminar-family-number-of-sets}\xspace}
\newcommand{\tSB}{\ensuremath{\mathtt{t}_{SB}}\xspace}
\newcommand{\tFB}{\ensuremath{\mathtt{t}_{FB}}\xspace}
\newcommand{\tSuf}{\ensuremath{\mathtt{t}_{Suf}}\xspace}
\newcommand{\tFuf}{\ensuremath{\mathtt{t}_{Fuf}}\xspace}

The last thing we consider for the theory behind the correctness of Orlin's algorithm is its worst-case running time analysis.
The following argument is an informal reconstruction from our formalisation that follows and substantially refines Korte's and Vygen's argument~\cite{KorteVygenOptimisation}.

We perform our analysis assuming elementary bounds for elementary computations performed by the algorithm.
For instance, $t_{FB}$ is an upper bound for the time that the algorithm consumes when executing the loop body in \maintainforestpar and $t_{FC}$ is the time for checking the condition
(analogously $t_{SC}$ and $t_{SB}$ for \sendflowpar, and $t_{OC}$ and $t_{OB}$ for \orlinspar).
$\text{\#}comps$ means the number of $\F$-components.
An $i$ in the subscript means the $i$th iteration of the outer loop.

First, the following should be clear about the number of loop iterations in the subprocedures during iteration $i$:
\begin{itemize}
\item $\text{\#}itsF_i = \text{\#}comps_{i-1} - \text{\#}comps_{i}$.
$\text{\#}comps_j$ is the number of components \textit{after} iteration $j$.
\item From Lemma~\ref{lem:phiB} (proof of~\ref{P52} for \sendflowpar) we know $ \text{\#}itsS_i \leq \Phi(state\refline{line_cond})_i$ where $state\refline{line_cond}$ is the state at the end of Line~\ref{line_cond},
i.e.\ before calling \sendflowpar.
For $i \geq 2$, $\Phi(state\refline{line_cond})_i = \Phi(state\refline{call_maintain_forest})_{i-1}\leq \Phi(state\refline{line27})_{i-1} + \text{\#}comps_{i-2} - \text{\#}comps_{i-1}$ due to Lemma~\ref{phiA}.
For $i=1$, let $\Phi(state\refline{call_maintain_forest})_{i-1} $ depend on the initialisation.
\end{itemize}

We obtain an upper bound for the running time of $\nu$ iterations:
{\allowdisplaybreaks
\begin{align}
&\sum\limits_{i = 1}^{\nu} \Bigl [ t_{OC} + t_{OB}  + (t_{SC} + t_{SB}) \cdot \text{\#}itsS_i + t_{SC} + (t_{FC} + t_{FB}) \cdot \text{\#}itsF_i + t_{FC} \Bigr ] \label{boundineq}\\
& \leq \;\nu \cdot (t_{OC} + t_{OB} + t_{SC} + t_{FC})  + (t_{SC} + t_{SB}) \cdot \sum\limits_{i = 1}^{\nu} \Phi(state\refline{line27})_{i-1} \;+\nonumber\\
& \;  (t_{FC} + t_{FB}) \cdot \sum\limits_{i = 1}^{\nu} (\text{\#}comps_{i-1} - \text{\#}comps_{i})\, \nonumber\\
& \; \; \; +\,(t_{SC} + t_{SB}) \cdot  \sum\limits_{i = 2}^{\nu} (\text{\#}comps_{i-2} - \text{\#}comps_{i-1}) \nonumber
\end{align}
}
The terms involving component numbers are telescope sums amounting to at most $n-1$.
The sum over the potentials remains to be bounded.
Now, we call the pairs of iteration number $i$ and a vertex that is important after Line~\ref{line27} an \textit{occurrence of an important vertex}.
For $i = 0$, we define it as the number of important vertices after the initialisation.
The sum over the potentials is the number of all occurrences, since a vertex's contribution to $\Phi(state\refline{line27})$ is either $1$ (important) or $0$ (not important) after modifying $\gamma$.
We define $comps_i$ as the set of all forest components during the $i$th iteration which have an important vertex after Line~\ref{line27} or after the initialisation, respectively.
Note that $|comps_i| \not = \#comps_i$ in general.
Since there is at most one important vertex per $\F$-component (Invariant~\ref{invar7}), $\Phi(state\refline{line27})_i \leq |comps_i|$.
We will look at the components instead of occurrences.

For a sequence $S_1,S_2,\dots S_n \subseteq \mathcal{S}$ of sets and $x \in \mathcal{S}$ we call the difference between the first and last indices $1 \leq i \leq j \leq n$ where $x\in S_i$ and $x\in S_j$ the \textit{lifetime} of $x$.
We can bound the sum of the cardinalities $|S_i|$ by $|\mathcal{S}| \cdot \delta$ where $\delta$ is an upper bound for all lifetimes.
Now we set $\mathcal{S} = \bigcup comps_i$.
According to Lemma~\ref{lemma:if_imporant_then}, the lifetime of one of those components is at most $\ell + 1$.
It therefore remains to look at the overall number of $\F$-components that can occur.
In combinatorics, a family of sets $\mathcal{X}$ is called \textit{laminar} over a non-empty universe $U$
iff $\emptyset \subset X \subseteq U$ for any $X \in \mathcal{X}$, and for any $X \in \mathcal{X}$ and $Y \in \mathcal{X}$, we have that either $X \cap Y = \emptyset$, $X \subseteq Y$, or $Y \subseteq X$ hold.
One can show that $\mathcal{S}$ is laminar over $\V$:
\isathm{Mincost_Flow_Algorithms/Orlins_Time.thy}{1554}{1554}
\begin{lem}
\label{laminarS}
$\mathcal{S} = \bigcup \limits_{i=1}^{\nu} comps_i$ is a laminar family.
\end{lem}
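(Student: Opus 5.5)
The plan is to use the fact that every member of $\mathcal{S}$ is a connected component of some forest $\F_i$, the forest during the $i$th outer iteration. These forests grow monotonically: $\F_i \subseteq \F_j$ for $i \leq j$. No step of the algorithm ever removes a forest edge. \maintainforestpar only performs $\F \leftarrow \F \cup \lbrace e \rbrace$ (Line~\ref{line28a}), and \sendflowpar does not touch $\F$. So the first step is to establish this monotonicity as an invariant over the iterations of \orlinspar, i.e.\ over \orlinsonestepcheck iterated $i$ times.

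Next, for the two conditions of laminarity, non-emptiness and $X \subseteq \V$ are immediate. A connected component always contains the vertex it is built around, and the forest lives on $\V$ because $\F \subseteq \E$ is part of \underlyinginvars.

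The core claim is this. Take $X \in comps_i$ and $Y \in comps_j$ with, w.l.o.g., $i \leq j$. Then either $X \cap Y = \emptyset$ or $X \subseteq Y$. Suppose $v \in X \cap Y$. For any $x \in X$ there is a path between $v$ and $x$ using edges of $\F_i$, with directions ignored. By monotonicity this path also uses only edges of $\F_j$. Hence $x$ lies in the $\F_j$-component of $v$, which is $Y$, so $X \subseteq Y$. The argument only needs that the connectedness relation of a subgraph refines that of a supergraph. I would phrase it via the connected-components machinery of Abdulaziz and Mehlhorn already reused in the formalisation: components of a subgraph are contained in components of the supergraph.

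The main obstacle I expect is bookkeeping rather than mathematics. In the formalisation the forest is an adjacency-map ADT, symmetric and containing both directions. Its abstraction must be related to the mathematical undirected graph whose components are considered. Also, the ``$i$th iteration'' must be pinned to a concrete state, namely the state after Line~\ref{line27}, where $\F$ has not yet been changed by \maintainforestpar. I would handle this by first proving monotonicity of the abstracted forest across \orlinsonestepcheck (trivially reflexive once the flag signals termination). After that I would lift to arbitrary $i \leq j$ by induction on $j - i$, and only then apply the component-refinement lemma.
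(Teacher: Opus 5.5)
Your proof is correct and takes essentially the same route as the paper's. The paper also assumes $i \le j$ and notes that, because the forest only grows, $X$ lies inside the (possibly merged) component $X'$ of iteration $j$. It then concludes $X' = Y$, since two intersecting components from the same iteration coincide. Your elementwise argument, running a path through $v \in X \cap Y$, is the same reasoning; you additionally check the conditions $\emptyset \subset X \subseteq \mathcal{V}$, which the paper leaves implicit.
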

\begin{proof}
 Assume $X, Y \in \mathcal{S}$ with $X \neq Y$.
 Moreover, suppose $X \cap Y \neq \emptyset$.
 We know these are forest components during some iterations $i$ and $j$ respectively.
 Wlog.\ assume $i \leq j$.
 Until iteration $j$, $X$ might have been merged with other components.
 By $X'$ we denote the (possibly new) component $X$ is part of in iteration $j$.
 Of course, we have $X \subseteq X'$.
 This makes $X' \cap Y \neq \emptyset$ which means $X' = Y$ because they are from the same iteration.
 Hence, $X \subseteq Y$.
\end{proof}
The following lemma is a well-known result central to the theory of laminar families:
\isathm{Mincost_Flow_Algorithms/Laminar_Family.thy}{27}{29}
\begin{lem} \label{laminarcard}
For a family $\mathcal{X}$ laminar over $U$, we have $|\mathcal{X}| \leq 2 \cdot |U| - 1$.
\end{lem}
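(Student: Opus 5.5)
I will prove the bound $|\mathcal{X}| \leq 2|U| - 1$ by strong induction on $|U|$, where $U$ is finite and non-empty. The idea is to split off the maximal sets of the family.

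\emph{Base case.} If $|U| = 1$, the only admissible set is $U$ itself, so $|\mathcal{X}| \leq 1 = 2|U|-1$.

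\emph{Inductive step.} Let $|U| \geq 2$ and let $\mathcal{X}$ be laminar over $U$. Since $U$ is finite, $\mathcal{X}$ is finite. I distinguish two cases.

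\textbf{Case A: $U \in \mathcal{X}$.} Put $\mathcal{X}' = \mathcal{X} \setminus \lbrace U \rbrace$ and argue as in Case B. That case yields $|\mathcal{X}'| \leq 2|U| - 2$ whenever no member equals $U$, which gives the claim after adding back $U$.

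\textbf{Case B: $U \notin \mathcal{X}$.} Let $M_1, \dots, M_k$ be the inclusion-maximal members of $\mathcal{X}$.
\begin{itemize}
\item By laminarity, two distinct maximal members cannot be nested, so they are pairwise disjoint.
\item Every member of $\mathcal{X}$ lies below some maximal member.
\item Set $\mathcal{X}_j = \lbrace X \in \mathcal{X} \mid X \subseteq M_j \rbrace$. Each $\mathcal{X}_j$ is laminar over $M_j$.
\item The $\mathcal{X}_j$ partition $\mathcal{X}$, because every $X$ is non-empty and the $M_j$ are disjoint.
\end{itemize}
Each $M_j$ is a proper subset of $U$, since $U \notin \mathcal{X}$. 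So the induction hypothesis applies to each $\mathcal{X}_j$ and gives
\[
|\mathcal{X}| = \sum_{j} |\mathcal{X}_j| \leq \sum_j (2|M_j| - 1) \leq 2|U| - k.
\]
If $k \geq 1$, this is at most $2|U| - 1$. If $k = 0$, then $\mathcal{X}$ is empty and the bound is trivial.

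\emph{Closing Case A.} Here the maximal members of $\mathcal{X}'$ are proper subsets of $U$. The estimate above gives $|\mathcal{X}'| \leq 2|U| - k$. If $k \geq 2$, we obtain $|\mathcal{X}| \leq 2|U| - 1$.

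The remaining subcase is $k = 1$, which is where I expect the main obstacle. The single maximal member $M_1$ of $\mathcal{X}'$ is a proper subset of $U$. The induction hypothesis gives only $|\mathcal{X}'| \leq 2|M_1| - 1 \leq 2(|U|-1) - 1$. Adding $U$ back gives $|\mathcal{X}| \leq 2|U| - 2$, so the bound holds.

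I would formalise this inequality chain with $|M_1| \leq |U| - 1$ made explicit. If the case analysis on $k$ becomes awkward, an alternative route is to count the $k = 0$ case (empty $\mathcal{X}'$) separately.

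Strong induction on $|U|$ is cleaner in Isabelle than induction on $|\mathcal{X}|$, because the subfamilies live over strictly smaller universes. The bookkeeping facts needed are:
\begin{itemize}
\item disjointness of the maximal members;
\item $\sum_j |M_j| \leq |U|$.
\end{itemize}
These follow from finite-set lemmas about disjoint unions.
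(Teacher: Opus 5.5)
Your proof is correct, but it takes a different route from the paper's. You split $\mathcal{X}$ along its inclusion-maximal members, i.e.\ the top level of the forest structure of a laminar family. You apply the strong induction hypothesis separately over each $M_j$ and sum, using disjointness to get $\sum_j |M_j| \le |U|$. The case $U \in \mathcal{X}$ is then settled by the case distinction on $k$. The subcase $k = 0$ of Case A, i.e.\ $\mathcal{X} = \lbrace U \rbrace$, still needs its one-line treatment, but it is trivial.

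The paper instead deletes a single element $x \in U$. It projects to $\mathcal{Y} = \lbrace X \setminus \lbrace x \rbrace . \; X \in \mathcal{X} \rbrace \setminus \lbrace \emptyset \rbrace$, which is laminar over $U \setminus \lbrace x \rbrace$, and applies the induction hypothesis once to get $|\mathcal{Y}| \le 2(|U|-1)-1$. It then shows $|\mathcal{X}| \le |\mathcal{Y}| + 2$. The only sets lost are possibly $\lbrace x \rbrace$ and at most one collision of two parent sets $Y$ and $Y \cup \lbrace x \rbrace$, since two such collisions would contradict laminarity.

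Your decomposition exposes the tree structure and even gives the sharper bound $|\mathcal{X}| \le 2|U| - k$ when $U \notin \mathcal{X}$. Formally, however, it requires maximal elements, a partition of $\mathcal{X}$ into subfamilies, sums over an indexed collection with cardinalities of disjoint unions, and several applications of the hypothesis over different universes. The paper's one-element deletion needs only ordinary induction and a purely local counting argument. That is why the paper chose it as the proof that ``sticks to a minimal number of concepts'' for the Isabelle formalisation.
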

\begin{proof}\let\qed\relax
We conduct an induction on $|U|$.
The case $|U|=1$ is trivial as $\mathcal{X}$ only contains a singleton set.
Now, assume $|U| \geq 2$ and fix some $x \in U$.
We define $\mathcal{Y} = \lbrace X \setminus \lbrace x \rbrace. \; X \in \mathcal{X} \rbrace \setminus \lbrace \emptyset \rbrace$.
For any $(X \setminus \lbrace x\rbrace) \in \mathcal{Y}$, we call $X$ the (not necessarily unique) \textit{parent set}.
$\mathcal{Y}$ is still a laminar family:
assume neither $X \cap Y = \emptyset$, nor $X \subseteq Y$, nor $Y \subseteq X$ and the same hold for their respective parent sets which contradicts laminarity of $\mathcal{X}$.
When the induction hypothesis is applied to $\mathcal{Y}$, we obtain $|\mathcal{Y}| \leq 2 \cdot (|U| - 1) - 1$.
Now, we show $|\mathcal{X}| \leq |\mathcal{Y}| + 2$ which would already give the claim.
Let us look at how the sets in $\mathcal{X}$ and $\mathcal{Y}$ correspond to one another:
\begin{itemize}
\item $\lbrace x \rbrace$ may be in $\mathcal{X}$ but (by definition) not in $\mathcal{Y}$
\item For any set $X \in \mathcal{Y}$ there is a parent set in $\mathcal{X}$.
However, there could be two parent sets, namely, $X$ and $X \cup \lbrace x \rbrace$.
We claim that this cannot happen more than once.
Aiming for a contradiction, assume $X, \, Y \in \mathcal{Y}$ where $X \neq Y$ and $X, \,Y, \, X \uplus \lbrace x \rbrace, \, Y \uplus \lbrace x \rbrace \in \mathcal{X}$.
Then we have $(X \cup \lbrace x \rbrace) \cap (Y \cup \lbrace x \rbrace) \supseteq \lbrace x \rbrace$; thus $X \cup \lbrace x \rbrace \subseteq Y \cup \lbrace x \rbrace$ or $Y \cup \lbrace x \rbrace \subseteq X \cup \lbrace x \rbrace$ as well, due to laminarity of $\mathcal{X}$.
Since $x$ is neither in $X$ nor $Y$ by definition, we conclude $X \subseteq Y$ or $Y \subseteq X$.
Wlog. $X \subseteq Y$. By $X \neq Y$ we have $X \subset Y$.
We infer $Y \cap (X \cup \lbrace x \rbrace) = X \neq \emptyset$, and $Y \not \subseteq X \cup \lbrace x \rbrace$
(as $x \notin Y$ and $X \subset Y$), and $X \cup \lbrace x \rbrace \not \subseteq Y$ (as $x \notin Y$) which finally contradicts the laminarity of $\mathcal{X}$.
\hfill \qedsymbol
\end{itemize}
\end{proof}

By Inequality~\ref{boundineq}, Lemmas~\ref{laminarS} and~\ref{laminarcard}, and the remarks about important vertices, we can bound the overall running time by the term
\isaeq{Mincost_Flow_Algorithms/Orlins_Time.thy}{1725}{1734}
\begin{equation}
\begin{split}
&(n \cdot (\ell + k + 2) - 1) \cdot (t_{OC} + t_{OB} + t_{SC} + t_{FC})  \,+ \\
&(n-1) \cdot (t_{FC} + t_{FB} + t_{SC} + t_{SB}) \; \; + \\
&(2n -1) \cdot (\ell + 1) \cdot (t_{SC} + t_{SB}) + t_{SC} + t_{OC}
\end{split}\label{final_orlins_time}
\end{equation}
provided that Orlin's algorithm terminates in the $(\nu + 1)$th iteration.
To the last iteration we have to charge a time of $(t_{SC}+t_{SB})\cdot (\text{\#}comps_{\nu} - \text{\#}comps_{\nu - 1}) + t_{SC} + t_{OC}$.

Elementary times depend on reachability and path selection for which well-known strongly polynomial algorithms exist.
In this work, however, we only formalise the running time argument up to Term~\ref{final_orlins_time} without details on the selection procedures.

\begin{rem}
$\epsilon = \frac{1}{n}$ yields a strongly polynomial bound.
Dijkstra's algorithm results in the best running time of $\mathcal{O}(n\log n (m + n \log n))$, but it requires preprocessing to keep edge weights positive.
As long as the selection functions are strongly polynomial, the whole algorithm is strongly polynomial, as well.
\end{rem}

\paragraph*{Formalisation}
We model algorithm running times as functions returning natural numbers, using an extension of Nipkow et al.'s approach~\cite{nipkowRunningTime}.
In this approach, for every function $f:\alpha\rightarrow\beta$,
we devise a functional program $f_\text{Time}:\alpha\rightarrow\mathbb{N}$ with the same recursion and control-flow structure as the algorithm whose running time we measure.
In its most basic form, for a given input $x:\alpha$, $f_\text{Time}(x)$ returns the number of the recursive calls that $f$ would perform while processing $x$.
If $f$ involves calls to other functions, we add the running times of the called functions to the number of recursive calls of $f$.
To modularise the design of these running time models, we use locales to assume running times of the called functions.

\torlinstime takes a program state and returns a pair of a program state and a natural number, which models the running time of Orlin's algorithm.
\torlinstime uses \tmaintainforesttime and \tsendflowtime modelling the running times of the two lower-level loops.
Without explicitly specifying them, we assume times like \tSB and \tFB in the locale containing the definition of \torlinstime.
We incorporate the running times of the path selection and augmentation functions into \tSB and \tFB, respectively.
In the formal proofs, we also have \tFuf and \tSuf to model the time for unfolding the program state.
In the informal analysis, we include these into $t_{SB}$, $t_{FB}$, $t_{SC}$ and $t_{FC}$, respectively.

The proof of Lemma~\ref{laminarcard} sticks to a minimal number of concepts and  was devised for the formal statement \laminarfamilynumberofsets.
To talk about the running time consumed up to a specific number of iterations, we introduce \torlinsonesteptime and \torlinsonesteptimecheck and use the function iteration \compow, analogously to the previous subsection.
\runningtimeinitial is the final formalised statement on the running time of Orlin's algorithm, which corresponds to Term~\ref{final_orlins_time} (plus the time for the last iteration).

\section{Bellman-Ford Algorithm for Path Selection}
\label{sec:bellman_ford}
\newcommand{\bellmanfordspec}{\isainl{Set_Graphs/Graph_Algorithms/Bellman_Ford.thy}{164}{169}{bellman-ford-spec}\xspace}
\newcommand{\edgecosts}{\isainl{Set_Graphs/Graph_Algorithms/Bellman_Ford.thy}{169}{169}{edge-costs}\xspace}
\newcommand{\fold}{\texttt{fold}\xspace}
\newcommand{\connections}{\texttt{connections}\xspace}
\newcommand{\ereal}{\texttt{ereal}\xspace}
\newcommand{\trelax}{\isainl{Set_Graphs/Graph_Algorithms/Bellman_Ford.thy}{175}{181}{relax}\xspace}
\newcommand{\bellmanford}{\isainl{Set_Graphs/Graph_Algorithms/Bellman_Ford.thy}{183}{186}{bellman-ford}\xspace}
\newcommand{\searchrevpath}{\isainl{Set_Graphs/Graph_Algorithms/Bellman_Ford.thy}{197}{201}{search-rev-path}\xspace}
\newcommand{\topt}{\isainl{Set_Graphs/Graph_Algorithms/Bellman_Ford.thy}{316}{317}{OPT}\xspace}
\newcommand{\bellmanequation}{\isainl{Set_Graphs/Graph_Algorithms/Bellman_Ford.thy}{373}{375}{Bellman-Equation}\xspace}
\newcommand{\bellmanfordshortest}{\isainl{Set_Graphs/Graph_Algorithms/Bellman_Ford.thy}{560}{562}{bellman-ford-shortest}\xspace}
\newcommand{\predgraph}{\isainl{Set_Graphs/Graph_Algorithms/Bellman_Ford.thy}{1209}{1212}{pred-graph}\xspace}
\newcommand{\searchrevpathisinpredgraph}{\isainl{Set_Graphs/Graph_Algorithms/Bellman_Ford.thy}{1994}{2001}{search-rev-path-is-in-pred-graph}\xspace}
\newcommand{\searchrevpathdombellmanford}{\isainl{Set_Graphs/Graph_Algorithms/Bellman_Ford.thy}{2093}{2096}{search-rev-path-dom-bellman-ford}\xspace}
\newcommand{\computationofoptimumpath}{\isainl{Set_Graphs/Graph_Algorithms/Bellman_Ford.thy}{2261}{2265}{computation-of-optimum-path}\xspace}

To fully implement Orlin's algorithm, we have to provide suitable functions for path selection, which we currently only assume.
To obtain paths as needed for \maintainforestpar, we reuse an existing DFS formalisation~\cite{graphAlgosBook}.
For \sendflowpar, which needs minimum-cost paths, we use the Bellman-Ford algorithm~\cite{bellmanFordI,FordNetworkFLowTheory,MooreShortestPath}, as costs of residual edges can be negative.\footnote{
One could introduce an auxiliary residual graph with non-negative weights, allowing for the faster Dijkstra's algorithm instead~\cite{KorteVygenOptimisation}.
Then, Bellman-Ford is required only once in the beginning. We do not pursue this further for this formalisation.
}
There is an Isabelle/HOL formalisation of Bellman-Ford~\cite{wimmerMonadDP} by Wimmer et al., which, however, only computes the distances instead of actual paths.
Furthermore, they formalised the vertices' distances to the source as a function, instead of a data structure.
Nevertheless, it served as a useful inspiration.
For the Bellman-Ford algorithm, we assume a simple graph whose edges $\E$ are ordered pairs.
In that graph, we fix a single vertex $s$ as the \textit{source}.

\paragraph{Graph Format.}
To formalise the algorithm, we fix a list of the graph's edges and vertices in the locale \bellmanfordspec, together with a cost/weight function \edgecosts.
\edgecosts \tttu \tttv is the cost of directly going from \tttu to \tttv.
Effectively, \edgecosts takes the role of a weighted adjacency matrix.
We iterate over edges and vertices by a simple functional \fold.
We model edges $e$ not belonging to the graph by $e$ having infinite weight.
The algorithm works on a data structure \connections, specified as ADT, mapping vertices $x$ to a pair consisting of an \ereal and a vertex.
Recall that \ereal is the Isabelle/HOL type for reals extended with $\lbrace + \infty, - \infty\rbrace$.

\begin{rem}
Here we do not reuse the standard graph representation like the \multigraph locale that was introduced in Section~\ref{sec:isabelle_intro}.
A graph as a set of edges is suitable for describing theory and algorithms for flows, but it is not suitable for the Bellman-Ford algorithm whose locales are tailored to implementing this algorithm.
\end{rem}

\paragraph{Dynamic Programming.}
For every vertex $v$, the goal is to find an $s$-$v$-path whose accumulated edge weight is minimum for all $s$-$v$-paths.
Similar to the previous formalisation~\cite{wimmerMonadDP},
we formally define $OPT(l,s,v)$ as the least accumulated weight of an $s$-$v$-path of length $\leq l$ (number of vertices), formally \topt.
We also prove a \bellmanequation.
One can compute the weighted distances between $s$ and all other vertices, i.e.\ costs of the shortest paths, recursively via the Bellman Equation.
We can simulate the recursive process with an iterative computation (dynamic programming).
This is much more efficient since intermediate results can be shared among the computation paths that would appear during the recursion.

For all vertices $v$, the Bellman-Ford algorithm computes continuously improving approximations $\delta(v)$ for the weight of the shortest path from $s$ to $v$. $\delta(v)$ is the weight of some $s$-$v$-path $p$. It also maintains the predecessor $\pi(v)$ of $v$, which is the second-to-last vertex in $p$.
On well-defined inputs, i.e.\ weight-conservative graphs, the \textit{predecessor graph} (or formally \predgraph) $\Pi = \lbrace (\pi (v), v) | \, v.\, v \in \V \wedge \pi(v) \text{ is defined}\rbrace$ is always a \textit{directed acyclic graph (DAG)}.

In every iteration of the Bellman-Ford algorithm, we perform so-called \textit{relaxations} for all edges $e$~\cite{CLRS}:
if for the edge $e = (u, v)$, the distance $\delta(s,v)$ is greater than $\delta(s,u) + c(e)$ for the cost function $c$, then store $\delta(s,u) + c(u, v)$ as $\delta(s,v)$ and record $u$ as $v$'s new predecessor $\pi(v)$.
This is what \trelax does, where the pairs of the $\pi$ and $\delta$ values are stored in \connections.
By iterating over all edges while relaxing each of them, we consider paths of increasing length.

An induction over $l$ and application of the Bellman Equation leads to the statement \bellmanfordshortest,
which says that after relaxing all edges $l$ times, $\delta(v) \leq OPT(l,s,v)$ for every vertex $v$.

\paragraph{Invariants.}
We now have to prove that we will have $\delta(v) = OPT(l,s,v)$ for $l=n-1$.
While developing the formalisation, we identified the following invariants:
\begin{enumerate}[label=(IBF-\theenumi)]
\renewcommand{\theenumi}{IBF-\arabic{enumi}}
\item For every vertex $v$, $\delta (s,v) \leq OPT(l,s,v)$ where $l < n$ is the iteration number (essentially \bellmanfordshortest).\label{bfinv1}
\item For every vertex $v$, $\delta(s,v) < \infty$ iff $\pi (v)$ is defined or $v=s$.\label{bfinv2}
\item If $\delta (s,s) \neq 0$ anymore, $s$ has a defined predecessor.\label{bfinv3}
\item If a vertex $v$ has a predecessor $\pi (v)$, there is a path $p$ from $s$ to $\pi (v)$ and $weight \, (pv) = \delta(s,v)$.\label{bfinv4}
\item If a vertex has a predecessor, the predecessor is $s$ or it has a predecessor itself.\label{bfinv5}
\item If a vertex has a predecessor $\pi (v)$, then $\delta(s,v) \geq \delta(s,\pi (v)) + c(\pi(v), v)$.\label{bfinv6}
\item The predecessor graph $\Pi$ is acyclic.\label{bfinv7}
\end{enumerate}
The only invariant Wimmer et al.\ use in their Bellman-Ford formalisation~\cite{wimmerMonadDP} is Invariant~\ref{bfinv1}, however, as an equality instead.
But, since, unlike them, we need to extract paths, we have to prove invariants about the predecessors map, which are Invariants~\ref{bfinv2} - \ref{bfinv6}.
Invariants~\ref{bfinv1} - \ref{bfinv6} do not need assumptions on the structure of the graph and these follow by simple computational inductions over \bellmanford.
Invariant~\ref{bfinv7} holds, provided that the graph is weight-conservative (absence of negative-weight cycles).
Its proof is as follows:
\begin{proof}[Proof Sketch for Invariant~\ref{bfinv7}]
Assume the graph over edges $\E$ has no negative cycle.
We can lift the monotonicity property of Invariant~\ref{bfinv6} to paths in $\Pi$,
i.e.\ $\delta(s,v) \geq \delta(s,u) +  c(p)$ for vertices $u$, $v$ and an $u$-$v$-path $p$ in $\Pi$.
If the relaxation of $e = (x, y)$ adds a cycle to $\Pi$, $\delta(s,x)+ c(x,y) < \delta(s,y)$ and there is a path $p$ in $\Pi$ leading from $y$ to $x$.
Then $\delta(s,y) + c(p) + c(x,y)< \delta(s,y) $ which would make the weight of $px$ negative, resulting in a contradiction.
\end{proof}

From~\ref{bfinv1} and~\ref{bfinv4}, one could infer that the distances computed by the algorithm will eventually equal the lengths of shortest paths.

\paragraph{How to Obtain a Path.}
Backward search from a vertex $x$ with \searchrevpath along edges in $\Pi$ yields paths of exactly weight $\delta(x)$.
It follows from the invariants that $s$ is reachable by iterating the predecessor, which is equivalent to the termination result \searchrevpathdombellmanford.
\searchrevpathisinpredgraph shows that searching from a vertex $v$ yields a path in the predecessor graph leading from $s$ to $v$.
We also show that paths in the predecessor graph are better than optimum paths, making them optimum themselves.
\computationofoptimumpath combines these results into the correctness of the whole path computation.

\section{Orlin's Algorithm Executable}
\label{sec:executable}
\newcommand{\getedgeandcostforward}{\isainl{Mincost_Flow_Algorithms/Use_Orlins_Algorithm/Instantiation.thy}{286}{297}{get-edge-and-costs-forward}\xspace}
\newcommand{\tttnb}{\texttt{nb}\xspace}
\newcommand{\ingoingedges}{\texttt{ingoing-edges}\xspace}
\newcommand{\outgoingedges}{\texttt{outgoing-edges}\xspace}
\newcommand{\ttts}{\texttt{s}\xspace}
\newcommand{\tttt}{\texttt{t}\xspace}
\newcommand{\bellmanfordforward}{\isainl{Mincost_Flow_Algorithms/Use_Orlins_Algorithm/Instantiation.thy}{313}{315}{bellman-ford-forward}\xspace}
\newcommand{\gettargetforsourceauxaux}{\isainl{Mincost_Flow_Algorithms/Use_Orlins_Algorithm/Instantiation.thy}{321}{325}{get-target-for-source-aux-aux}\xspace}
\newcommand{\tttbf}{\texttt{bf}\xspace}
\newcommand{\searchrevpathexec}{\isainl{Set_Graphs/Graph_Algorithms/Bellman_Ford.thy}{188}{192}{search-rev-path-exec}\xspace}
\newcommand{\pairtorealisingredgeforward}{\isainl{Mincost_Flow_Algorithms/Use_Orlins_Algorithm/Instantiation.thy}{344}{347}{pair-to-realising-redge-forward}\xspace}
\newcommand{\getsourcetargetpathb}{\isainl{Mincost_Flow_Algorithms/Use_Orlins_Algorithm/Instantiation.thy}{366}{376}{get-source-target-path-b}\xspace}
\newcommand{\globalinterpretation}{\texttt{global-interpretation}\xspace}
\newcommand{\dfsinterpretation}{\isainl{Set_Graphs/Graph_Algorithms/DFS_Example.thy}{5}{15}{dfs}\xspace}
\newcommand{\getpathexec}{\isainl{Mincost_Flow_Algorithms/Use_Orlins_Algorithm/Instantiation.thy}{243}{243}{get-path}\xspace}
\newcommand{\maintainforestaxiomsextended}{\isainl{Mincost_Flow_Algorithms/Use_Orlins_Algorithm/Instantiation.thy}{980}{985}{maintain-forest-axioms-extended}\xspace}
\newcommand{\getsourcetargetpathaexec}{\isainl{Mincost_Flow_Algorithms/Use_Orlins_Algorithm/Instantiation.thy}{349}{359}{get-source-target-path-a}\xspace}
\newcommand{\getsourcetargetpathaax}{\isainl{Mincost_Flow_Algorithms/Use_Orlins_Algorithm/Instantiation.thy}{2620}{2624}{get-source-target-path-a-ax}\xspace}
\newcommand{\getsourcetargetpathbax}{\isainl{Mincost_Flow_Algorithms/Use_Orlins_Algorithm/Instantiation.thy}{3544}{3548}{get-source-target-path-b-ax}\xspace}

After verifying Orlin's algorithm, which involved assuming path selection functions and abstract data types for the program variables, and after implementing and verifying the Bellman-Ford algorithm to find minimum-cost paths, we are finally ready to combine these to get an executable implementation of Orlin's algorithm.
We instantiate the locale constants and prove the corresponding axioms.

\paragraph{Basic Data Structures.}
We use appropriate data structures to instantiate the variables like the flow or balances in the program state.
In the locale headers, we assume structures that map keys/indices, e.g.\ vertices or edges, to values, e.g.\ reals.
We use a red-black tree implementation of maps by Nipkow~\cite{RBTIsabelle}.
However, we add a number of special operations needed in our context, e.g.\ to update all values simultaneously or to select the \textit{argmax}.
We implement the set of active edges as a list, which would achieve the best possible asymptotic running time as we have to iterate over all edges in every iteration of \maintainforestpar.

\paragraph{Forest Paths.}
Forest paths are necessary for moving balance from one component's representative to the other when we join two components (Section~\ref{section:loopa}).
As we saw earlier, forest paths are always minimum-cost augmenting paths (Lemma~\ref{lemma:optforestpath}),
which is why we can select paths in the forest irrespectively of the costs, e.g.\ by BFS or DFS.

We use the DFS formalisation by Abdulaziz~\cite{graphAlgosBook} that computes a path in an adjacency map leading from a source to a target if there exists one.
From the auxiliary and implementation invariants we prove that the forest, represented as an adjacency map, satisfies all necessary invariants, and that the represented graph is finite.
These are the preconditions for DFS working correctly.
Recall that we encapsulated such preconditions in definitions like \maintainforestgetpathcond, which were provable inside the locales for Orlin's algorithm.
The DFS is instantiated with red-black tree operations from which we get the actual selection function \getpathexec. Its correctness lemma \maintainforestaxiomsextended coincides with the assumptions on \getpath in the locale \maintainforesttt.

\paragraph{Application of Bellman-Ford.}
The function \sendflowpar sends flow from sources to targets along minimum-cost paths that use active and forest edges only.
The residual graph in which we need paths may have multiple edges between two vertices (multigraph).
In contrast the Bellman-Ford algorithm assumes a simple graph given by a weighted adjacency matrix.
As weights for the matrix we take the costs of the cheapest edge that is available between the respective pair of vertices.

In greater detail, let us consider the case of path selection where we first select the source (Line~\ref{line10} of \sendflowpar).
\getedgeandcostforward performs the conversion to a problem on which we can use Bellman-Ford:
for a flow \tttf, 'not blocked' map \tttnb, and two vertices \tttu and \tttv, we first look up all original graph edges from \tttu to \tttv (\ingoingedges) and from \tttv to \tttu (\outgoingedges), respectively.
We then find the cheapest forward or backward residual edge pointing from \tttu to \tttv, along with the corresponding costs.
We say that this residual edge realises the simple edge $(\tttu,\tttv)$.
We only look at those edges that have positive residual capacity w.r.t.\ \tttf, and that are not blocked, i.e.\ either forest or active edges.
The central property of this function is that for every $e \in actives \cup \F$ with $\mathfrak{u}_f(e) > 0$, it finds $e' \in actives \cup \F$ with $\mathfrak{u}_f(e') > 0$ connecting the same pair of vertices where $\mathfrak{c}(e') \leq \mathfrak{c}(e)$.
We can use the second component of the pair returned by \getedgeandcostforward, namely, the residual costs, as an input to the Bellman-Ford algorithm, along with a source \ttts, as done by definition \bellmanfordforward.

For a program state, we obtain a path by the function \getsourcetargetpathaexec: \gettargetforsourceauxaux searches for a vertex \tttt with $\tttb \tlpar \tttt \trpar < - \epsilon \cdot \gamma$ such that \tttbf found a path, i.e.\ \tttt's distance to \ttts is less than $\infty$.
We then use \searchrevpathexec to go from predecessor to predecessor until we reach \ttts.
Edge by edge, we have to transform this path back into a path over residual edges, for which we use \pairtorealisingredgeforward, which gives for every simple edge a residual edge of minimum costs connecting the same vertices.

We give a proof sketch explaining why \getsourcetargetpatha satisfies the locale assumptions \getsourcetargetpathaxioms whose most important part is proven by \getsourcetargetpathaax.

\begin{proof}[Proof Sketch of Correctness of Minimum-Cost Path Selection]
Assume that the state satisfies \getsourcetargetpathacond (henceforth \textit{C}).
Let $P$ be the path returned by \getsourcetargetpatha.
Let $P_1$ be an arbitrary minimum-cost augmenting path from \ttts to \tttt.
$P_1$ might use deactivated edges, i.e.\ edges that are blocked.
Because of Invariant~\ref{invarsamecomp}, which is entailed by \textit{C}, the endpoints $x$ and $y$ of those edges $e$ belong to the same forest component.
The forest paths connecting $x$ and $y$ are optimum (Lemma~\ref{lemma:optforestpath}), hence cheaper or have the same cost as $e$.
Therefore, there is a $P_2$ that is cheaper than $P_1$ and uses active and forest edges only.
Also because of $C$, all forest edges have positive flow, which means that $P_2$ is still augmenting.
By the construction of \getedgeandcostforward, there is a corresponding path $P_3$ in the graph we use for BF and that might be even cheaper.
Since the optimality invariant is part of $C$, the graph we use for BF does not contain a negative cycle.
From properties of BF, we know that $P_3$ is distinct.
BF computes a distinct path $P_4$ from \ttts to \tttt that might even be cheaper.
\getedgeandcostforward also gives realising edges
that we use to transform $P_4$ into the path $P$ in the residual graph, which has positive capacity and whose costs are equal to the costs of $P_4$.
The costs of $P$ are cheaper than those of $P_1$, making $P$ a minimum-cost augmenting path.
Also, $P$ uses only forest and active edges.
\end{proof}

Line~\ref{line17} in \sendflowpar is the other use case for minimum-cost path selection: given a target $t$, we need to find a source $s$ reaching $t$ and a minimum-cost path from $s$ to $t$.
We reverse all directions and use $t$ as source for BF instead.
Except for this change and the different balance-related properties required for $t$, the definitions for \getsourcetargetpathb and the corresponding proofs for the assumptions of the locale \sendflowproof are analogous.

Although the proofs are conceptually simple, the formalisation is comparably verbose due to the translations between different graph formats.
We use a translation between multigraphs and simple graphs including weights, and the reduction of the single-target-shortest-path to the single-source-shortest-path problem.

\section{Finite and Infinite Capacity Networks}
\label{sec:finite_infinite}
\newcommand{\outedge}{\isainl{Flow_Theory/Hitchcock_Reduction.thy}{18}{19}{outedge}\xspace}
\newcommand{\inedge}{\isainl{Flow_Theory/Hitchcock_Reduction.thy}{18}{19}{inedge}\xspace}
\newcommand{\vtovedge}{\isainl{Flow_Theory/Hitchcock_Reduction.thy}{18}{19}{vtovedge}\xspace}
\newcommand{\dummy}{\isainl{Flow_Theory/Hitchcock_Reduction.thy}{18}{19}{dummy}\xspace}
\newcommand{\hitchcockedge}{\isainl{Flow_Theory/Hitchcock_Reduction.thy}{18}{19}{hitchcock-edge}\xspace}
\newcommand{\hitchcockwrapper}{\isainl{Flow_Theory/Hitchcock_Reduction.thy}{10}{10}{hitchcock-wrapper}\xspace}
\newcommand{\getedge}{\isainl{Flow_Theory/Hitchcock_Reduction.thy}{21}{24}{get-edge}\xspace}
\newcommand{\newfstvgen}{\isainl{Flow_Theory/Hitchcock_Reduction.thy}{31}{34}{new-fstv-gen}\xspace}
\newcommand{\newsndvgen}{\isainl{Flow_Theory/Hitchcock_Reduction.thy}{35}{38}{new-sndv-gen}\xspace}
\newcommand{\reductiongeneral}{\isainl{Flow_Theory/Hitchcock_Reduction.thy}{66}{97}{reduction-of-min-cost-flow-to-hitchcock-general}\xspace}
\newcommand{\existenceofoptimumflow}{\isainl{Mincost_Flow_Algorithms/Use_Orlins_Algorithm/Existence_Optflows.thy}{104}{105}{existence-of-optimum-flow}\xspace}
\newcommand{\G}{\ensuremath{\mathcal{G}}\xspace}

Orlin's algorithm is currently the fastest known algorithm to solve the minimum-cost flow problem, when there are
\begin{enumerate*}\item no negative cycles (since the zero flow has to be a minimum-cost flow for $b-b$ in the beginning, see Invariant~\ref{invaropt}) and
\item no edges with finite capacity, which apparently pose significant limitations to the algorithm's usability.
\end{enumerate*}
Fortunately, every finite and mixed capacity instance $I$ can be reduced to a corresponding instance $I'$ without capacities with a blow-up in graph size that is $\mathcal{O}(n + m)$~\cite{OrdenTranshipment,FordFulkersonFlowsNetworks}.
The result $I'$ is acyclic, hence without negative cycles, if there are no negative infinite-capacity cycles in $I$.
If there are such cycles in $I$, however, flows of arbitrarily low costs exist.
This reduction makes Orlin's algorithm applicable to networks with finite capacities, as well, provided that the optimisation problem is not unbounded.

In this section we describe our formalisation and verification of the functional correctness of this reduction.
This is a generalisation of a method first given by Ford and Fulkerson~\cite{FordFulkersonFlowsNetworks}, which is also presented by Korte and Vygen~\cite{KorteVygenOptimisation}.
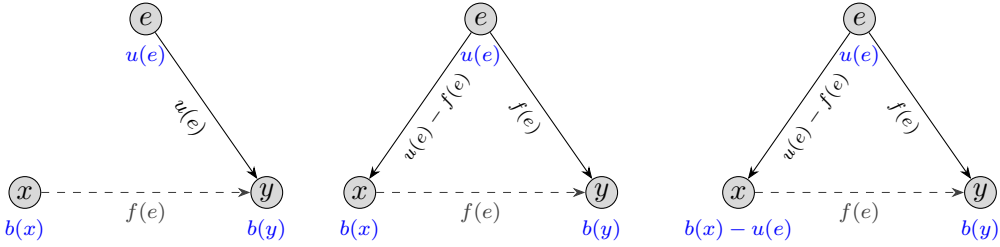
\begin{figure}
\centering
\trimbox{0mm 0mm -2mm 0mm}{
\begin{tikzpicture}[
      mycircle/.style={
         circle,
         draw=black,
         fill=gray,
         fill opacity = 0.3,
         text opacity=1,
         inner sep=0pt,
         minimum size=12pt,
         font=\normalsize},
      myarrow/.style={-Stealth},
      node distance=2cm and 1.3cm
      ]
      \node[label=below:{\scriptsize \color{blue} $b(x)$}, mycircle] (x) {$x$} ;
      \node[label=below:{\scriptsize \color{blue} $u(e)$\color{black}}, mycircle,above right=of x] (e) {$e$};
      \node[label=below:{\scriptsize \color{blue} $b(y)$\color{black}}, mycircle,below right=of e] (y) {$y$};

\draw[draw=darkgray, dashed, style={-Stealth}] (x) -- node[sloped, color=darkgray,font = \scriptsize, below] {$f(e)$} (y);
\draw[myarrow] (e) -- node[sloped, font = \scriptsize, below] {$u(e)$} (y);
    \end{tikzpicture}
    }\trimbox{0mm 0mm -2mm 0mm}{
\begin{tikzpicture}[
      mycircle/.style={
         circle,
         draw=black,
         fill=gray,
         fill opacity = 0.3,
         text opacity=1,
         inner sep=0pt,
         minimum size=12pt,
         font=\normalsize},
      myarrow/.style={-Stealth},
      node distance=2cm and 1.3cm
      ]
      \node[label=below:{\scriptsize \color{blue} $b(x)$}, mycircle] (x) {$x$} ;
      \node[label=below:{\scriptsize \color{blue} $u(e)$\color{black}}, mycircle,above right=of x] (e) {$e$};
      \node[label=below:{\scriptsize \color{blue} $b(y)$\color{black}}, mycircle,below right=of e] (y) {$y$};

\draw[draw=darkgray, dashed, style={-Stealth}] (x) -- node[sloped, color=darkgray,font = \scriptsize, below] {$f(e)$} (y);
\draw[myarrow] (e) -- node[sloped, font = \tiny, below] {$f(e)$} (y);
\draw[myarrow] (e) -- node[sloped, font = \tiny, below] {$u(e) - f(e)$} (x);
    \end{tikzpicture}
    }
    \trimbox{0mm 0mm -2mm 0mm}{
\begin{tikzpicture}[
      mycircle/.style={
         circle,
         draw=black,
         fill=gray,
         fill opacity = 0.3,
         text opacity=1,
         inner sep=0pt,
         minimum size=12pt,
         font=\normalsize},
      myarrow/.style={-Stealth},
      node distance=2cm and 1.3cm
      ]
      \node[label=below:{\scriptsize \color{blue} $b(x) - u(e)$}, mycircle] (x) {$x$} ;
      \node[label=below:{\scriptsize \color{blue} $u(e)$\color{black}}, mycircle,above right=of x] (e) {$e$};
      \node[label=below:{\scriptsize \color{blue} $b(y)$\color{black}}, mycircle,below right=of e] (y) {$y$};

\draw[draw=darkgray, dashed, style={-Stealth}] (x) -- node[sloped, color=darkgray,font = \scriptsize, below] {$f(e)$} (y);
\draw[myarrow] (e) -- node[sloped, font = \tiny, below] {$f(e)$} (y);
\draw[myarrow] (e) -- node[sloped, font = \tiny, below] {$u(e) - f(e)$} (x);
    \end{tikzpicture}
    }
\caption{We replace an edge $e$ from $x$ to $y$ with finite capacity $u(e) < \infty$ with a new vertex $e$ with $b(e) = u(e)$ and two edges $e_1=(e,x)$ and $e_2=(e,y)$, each with infinite capacity.
After first adding $e_2$, the flow along this new edge is exactly $u(e)$ although it should be equal to $f(e) \leq u(e)$.
Adding the other edge $e_1$ makes this possible because we can then send $f(e)$ units of flow from $e$ to $y$ and $u(e) - f(e)$ units to $x$.
The vertex $x$ has now an abundant flow of $u(e)$ ($f(e)$ units do not leave $x$ anymore and there is an additional ingoing flow of $u(e) - f(e)$).
We need to adjust the balance $b(x)$ to again obtain a balance-complying flow.}
\label{fig:replace_finite_edges}
\end{figure}

\paragraph{The Idea.}
For an original network with capacities, we construct a new network without capacities and modified balances such that a valid flow in one network allows us to easily find a valid flow in the other one.
The idea is to encode edge capacities into vertex balances, which means we introduce vertices that replace edges in the network, as shown in Figure~\ref{fig:replace_finite_edges}:
for an edge $e$ with endpoints $x$ and $y$, and capacity $u(e) < \infty$, we need a corresponding edge $e_2$ that carries a flow $f'(e_2) = f(e) \leq u(e)$ despite $e_2$'s infinite capacity. We can achieve this by deleting $e$ from the network and replacing it with a vertex $e$ with $b(e) = u (e)$ and an edge $e_2 = (e, y)$.
Since this would enforce $f'(e_2) = u(e)$, the flow needs another way to leave the vertex $e$.
We introduce $e_1 = (e, x)$ to remove the remaining flow from $e$ (Figure~\ref{fig:replace_finite_edges}).
This results in $f'(e_2) = f(e) \leq u(e)$, $f'(e_1) = u(e) - f(e)$ and $f'(d) = f(d)$ for all other edges $d$ as a possible flow in the modified network.

Flow validity is not only about capacities but about balances as well.
Balance constraints that $f$ satisfied before might not be satisfied by $f'$.
While the excess of $y$ is still unaltered, the modified flow violates the balance constraints at $x$:
An amount of $f(e)$ is not leaving the vertex $x$ anymore and there is an additional ingoing flow of $u(e) - f(e)$,
which means that we must decrease $b(x)$ by $u(e)$ to maintain feasibility.

\paragraph{The Reduction.}
For the network $(\V, \E)$ with costs $c:\E\rightarrow\mathbb{R}$, balances $b:\V \rightarrow \mathbb{R}$ and capacities $u:\E \rightarrow \mathbb{R}^+_0 \cup \lbrace +\infty \rbrace$, we define $\V'$, $\E'$, $c'$, $b'$:
let $\delta^-_{\infty}(x)$ and $\delta^+_{\infty}(x)$ be the ingoing and outgoing edges of $x$ with infinite capacity, respectively.
Let $\E_{\infty}$ be the set of edges in $\E$ with infinite capacity.
We set $\mathcal{V'} = \V \cup (\E - \E_{\infty})$.
Let $\E_1 = \lbrace (e, x). \; e \in \delta^+(x) - \delta^+_{\infty}(x) \wedge x \in \V\rbrace$ and $\E_2 = \lbrace \langle e, x \rangle. \; e \in \delta^-(x) - \delta^-_{\infty}(x) \wedge x \in \V\rbrace$.
By definition, let $(a, b)$ and $\langle a, b\rangle$ denote two distinct pairs with components $a$ and $b$.
Now set $\E_3 = \E_{\infty}$ and $\E' = \E_1 \cup \E_2 \cup \E_3$.
Let $c'(e, x) = 0$ for $(e,x) \in \E_1$, $c'\langle e,x \rangle = c(e)$ for $\langle e,x \rangle \in \E_2$ and $c'(e) = c(e)$ for $e \in \E_3$.
For $e \in \E- \E_{\infty}$ we set $b'(e) = u (e)$.
For vertices $x \in \V$ we define
\[
b'(x) = b(x) - \sum \limits _{e \in \delta^+(x) - \delta^+_{\infty}(x)} u(e).
\]

\isathm{Flow_Theory/Hitchcock_Reduction.thy}{66}{97}
\begin{thm}[Reduction of mixed-capacity to infinite-capacity Problems\label{genred}]
We assume a flow network $\G$ based on a directed multigraph.
We construct $\G'$ with infinite capacities as sketched above.
The following hold:
\begin{enumerate}
\item For a flow $f$ feasible in $\G$, $f'$ is feasible in $\G'$ where $f'(e, x) =  u(e) -  f(e)$ for $(e,x) \in \E_1$, $f'\langle e, x \rangle =  f(e)$ for $\langle e, x \rangle \in \E_2$ and $f'(e) = f(e)$ for $e \in \E_3$.
The costs correspond as well, i.e.\ $c(f) = c'(f')$.
\item For a flow $f'$ feasible in $\G'$, $f$ is feasible in $\G$ where $f(e) = f'\langle e, x \rangle$ for $x$ being the second vertex of $e \in \E \setminus \E_{\infty}$.
$f(e) = f'(e)$ for $e \in \E_{\infty}$.
Again $c'(f') = c(f)$.
\item An optimum flow in one network defines an optimum flow in the other network.
\item $\G'$ contains a negative cycle iff $\G$ contains a negative cycle with infinite capacity.
\end{enumerate}
\end{thm}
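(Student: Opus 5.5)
The plan is to treat the construction as a cost-preserving bijection between feasible flows, proving the four claims in order. Parts~1 and~2 are local excess computations, Part~3 follows formally from them, and Part~4 is a structural remark about $\G'$. I write $u_e=u(e)$ and $x_e,y_e$ for the first and second vertex of a finite edge $e\in\E\setminus\E_\infty$. In $\G'$ the only edges incident to the new vertex $e$ are $(e,x_e)\in\E_1$ and $\langle e,y_e\rangle\in\E_2$, and both leave $e$.

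\emph{Part 1.} Nonnegativity of $f'$ follows from $0\le f(e)\le u(e)$. Capacities are trivial since $\G'$ is uncapacitated. For a new vertex $e$ the excess is
\[
-ex_{f'}(e)=f'(e,x_e)+f'\langle e,y_e\rangle=(u_e-f(e))+f(e)=u_e=b'(e).
\]
For an old vertex $x\in\V$, I compare $ex_{f'}(x)$ with $ex_f(x)$ by splitting $\deltaplus(x)$ and $\deltaminus(x)$ into finite and infinite edges.
\begin{itemize}
\item Infinite edges lie in $\E_3$ and keep their flow, so they contribute identically.
\item A finite edge entering $x$ contributed $+f(e)$ and is replaced by $\langle e,x\rangle$, which also contributes $+f(e)$.
\item A finite edge leaving $x$ contributed $-f(e)$ and is replaced by the edge $(e,x)$ entering $x$, contributing $+(u_e-f(e))$. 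The net change is $+u_e$.
\end{itemize}
Hence $ex_{f'}(x)=ex_f(x)+\sum_{e\in\deltaplus(x)-\deltaplus_\infty(x)}u(e)$, which is exactly $-ex_{f'}(x)=b'(x)$. For the costs, $\E_1$ contributes $0$, $\E_2$ contributes $\sum_{e\text{ finite}}c(e)f(e)$, and $\E_3$ contributes the infinite part, so $c'(f')=c(f)$. The formal work here is reindexing sums over $\deltaplus$, $\deltaminus$ in the multigraph $\G'$ via the maps $e\mapsto(e,x_e)$ and $e\mapsto\langle e,y_e\rangle$, using that $(\cdot,\cdot)$ and $\langle\cdot,\cdot\rangle$ are distinct.

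\emph{Part 2.} Let $f'$ be feasible in $\G'$. The balance condition at a new vertex $e$ gives $f'(e,x_e)+f'\langle e,y_e\rangle=u_e$ with both terms nonnegative. Hence $0\le f(e)=f'\langle e,y_e\rangle\le u_e$ and $f'(e,x_e)=u_e-f(e)$. So $f'$ is exactly the image of $f$ under the map of Part~1. The same excess identity $ex_{f'}(x)=ex_f(x)+\sum u(e)$, read backwards, gives $-ex_f(x)=b(x)$, and the same cost calculation gives $c(f)=c'(f')$. I expect this step to be the main obstacle formally, because the identity must be stated for arbitrary $f$ and $f'$ related by these equations rather than only for $f'$ built from $f$. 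I would therefore first prove a single lemma relating excesses under the correspondence and then use it in both directions.

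\emph{Part 3.} Suppose $f$ is optimal in $\G$ and $g'$ is feasible in $\G'$ with $c'(g')<c'(f')$. Part~2 yields a feasible $g$ in $\G$ with $c(g)=c'(g')<c'(f')=c(f)$, contradicting optimality of $f$. The other direction is symmetric, using Part~1.

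\emph{Part 4.} Every new vertex $e$ has no entering edges in $\G'$, so no cycle of $\G'$ passes through it. Every edge of $\E_1\cup\E_2$ has a new vertex as an endpoint, so cycles of $\G'$ use only $\E_3=\E_\infty$ edges, with unchanged costs. They are therefore precisely the infinite-capacity cycles of $\G$, with the same costs, and both directions follow.
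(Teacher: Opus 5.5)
Your proposal is correct and follows essentially the route the paper indicates: the paper only says the proof is ``plenty of sum manipulation'' guided by the definitions, and its informal idea is exactly your local excess computation, where replacing a finite edge leaving $x$ by $(e,x)$ shifts the excess at $x$ by $u(e)$ and the modified balance $b'(x)$ absorbs this. The cost-preserving correspondence then gives optimality transfer, and the fact that new vertices have no entering edges gives the cycle claim; isolating the excess identity as one purely algebraic lemma used in both directions is a sensible way to organise that manipulation.
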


\paragraph{Formalisation.}
Endpoints of edges in the new graph are uniquely determined by the original edge the new edge $e$ refers to and whether $e$ is in $\E_1$, $\E_2$ or $\E_3$.
Storing the vertices in the new graph's edges would be redundant.
We encode whether an edge is in $\E_1$, $\E_2$ or $\E_3$ into a data type \hitchcockedge with constructors \inedge, \outedge and \vtovedge, which are meant to realise edges in $\E_1$, $\E_2$ and $\E_3$, respectively.
The new graph has edges of type \hitchcockedge.
\newfstvgen and \newsndvgen (endpoints of edges in the new graph) take a \hitchcockedge and return an edge or vertex of the old graph.
In the formalisation, the latter two are of different types which is why we wrap them in a \hitchcockwrapper to obtain vertices of uniform type (i.e. type not depending on whether they wrap an edge or a vertex) as required by the \multigraph locale.
The \dummy is necessary to ensure the possibility of an edge between every vertex, as required by the axioms of the \multigraph locale.
\getedge returns the edge in the original graph a \hitchcockedge refers to if there is any.
Functions like \getedge are necessary to express the translation/correspondence between the two graphs.
The formal proof of Theorem~\ref{genred} involves plenty of sum manipulation which is hinted by the informal definitions.

\paragraph{Solving several Types of Flows.}
We made the above reduction executable, as well, allowing us to solve different types of flow problems in strongly polynomial time:
Orlin's algorithm itself finds minimum-cost flows in networks with infinite capacity.
Our formalisation contains an instantiation for both simple graphs and multigraphs.
We can compute mixed-capacity and finite-capacity minimum-cost flows by combining Orlin's algorithm with the reduction from Theorem~\ref{genred}.
For all of those problems, there is now verified code.

\paragraph{Existence of Optimum Solutions.}
We can characterise the existence of an optimum solution for feasible problems by the absence of cycles with infinite capacities and negative total costs:
after applying the reduction to such an instance, Orlin's algorithm will terminate and return an optimum flow, resulting in a constructive proof for one implication.
Conversely, a negative cycle with infinite capacities allows for arbitrarily small costs by arbitrarily high/many augmentations.
It is necessary to use Orlin's algorithm here as the simpler methods \ssppar or \ascalingpar might not terminate.
\existenceofoptimumflow, which is a characterisation of the existence of optimum solutions, is the final theorem on minimum-cost flows in our formalisation.

\begin{rem}
Combining this reduction with Orlin's algorithm is still the most efficient method (in terms of worst-case running time) for general minimum-cost flows, i.e.\ those with arbitrary finite or infinite capacities, and arbitrary balances and costs~\cite{KorteVygenOptimisation,schrijverBook}.
Another method, that directly works on the capacitated network, achieves the same running time~\cite{VygenDualMinCost}.
For restricted problems, weakly polynomial methods might be faster, e.g.\ almost linear if the balances, capacities and costs are integral and polynomially bounded~\cite{almostLinearTimeFlows,almostlinearmincostDeterministic}.
\end{rem}

\section{Discussion}
\newcommand{\maxflowmincut}{\isainl{Flow_Theory/STFlow.thy}{862}{865}{max-flow-min-cut}\xspace}
\newcommand{\stflowdecomposition}{\isainl{Flow_Theory/STFlow.thy}{195}{205}{s-t-flow-decomposition}\xspace}
\paragraph{Related Work}
The first formalisation of maximum flows, namely, the max-flow-min-cut theorem and the Ford-Fulkerson algorithm, was in 2005 in Mizar by Lee~\cite{FordFulkersonMizar}.
Lean's \textit{mathlib}~\cite{mathlib} does not contain network flows; the only substantial Lean development we are aware of is an unpublished AI-assisted formalisation of the max-flow-min-cut theorem~\cite{leanmaxflowmincut}.
We are also not aware of any developments in Rocq.

The closest work to ours is Lammich and Sefidgar's work on maximum flows~\cite{LammichFlows} in Isabelle.
The algorithms that we considered here share a number of features with other maximum flow algorithms that were formally analysed before,
most notably the fact that they iteratively compute augmenting paths to incrementally improve a solution until an optimal solution is reached.
Those algorithms also use residual graphs, which are intuitively graphs containing the remaining capacity w.r.t.\ the current flow maintained by the algorithm, and which Lammich and Sefidgar have already formalised~\cite{LammichFlows}.
Our development on min-cost flows is powerful enough to rederive most of the results about max-flows they have already formalised.
A notable result contained in both formalisations is the Max-Flow-Min-Cut Theorem~\cite{FordFulkerson,DantzigFulkersonMaxflowMincut} (formally \maxflowmincut) and the max-flow analogue to Theorem~\ref{optcrit}.
A result that we formalised for the first time is the \textit{decomposition} of so-called \textit{$s$-$t$-flows}~\cite{GallaiDecomposition,FordFulkerson} (formally \stflowdecomposition).

Results on augmenting paths (and their characterisation of optimality, like Theorem~\ref{optcrit}) have been formalised for the theory of optimisation of matching (Berge's Lemma)~\cite{BlossomIsabelleJAR}, and for matroid intersection~\cite{MatroidIsabelle}.

We are, however, the first to formalise algorithms that involve scaling, which is used to design fast optimisation algorithms, including algorithms with the fastest worst-case running times for different variants of matching and shortest path problems~\cite{DuanScalingMatching,GabowMatchingScaling,GabowTarjanMatchingScaling,GabowTarjanScalingMatching,OrlinScalingAssignment},
in addition to minimum-cost flows, which we consider here.
Our work here provides a blueprint to formalise the correctness of those other scaling algorithms.

The work by Lammich and Sefidgar~\cite{LammichFlows} also contains formalisations of the running time analyses.
They too count loop iterations as the main indicator of running time.
For Orlin's algorithm, however, the proofs involve properties of laminar families, which were never formalised before.
These properties are necessary (and reusable) for arguing about running times of many other state-of-the-art combinatorial optimisation algorithms, including Edmonds' weighted matching algorithm~\cite{EdmondsWeightedMatching}, the Micali-Vazirani algorithm~\cite{MVproofthree}, and Gabow's cost scaling weighted matching algorithm~\cite{GabowMatchingScaling}.

With the locales \multigraphspec and \multigraph, we model multigraphs as sets of edges together with functions giving the two endpoints.
Noschinski's Isabelle/HOL representation~\cite{noschinskiGraphLib} is very close to this, except that it treats the vertices as a distinct parameter.
To make the reasoning smoother, our representation is purely edge-based, and thus, avoids invariants about relationships between edges and vertices.
In Isabelle/HOL, Edmonds formalised undirected hypergraphs~\cite{edmondsHypergraphs} as a set of sets (the edges) and a set of allowed elements (possible vertices).
In Lean's \textit{mathlib}, digraphs are an adjacency predicate over vertices, and (with dependent types) multigraphs are modelled by Quivers.

\paragraph{Mathematical Insights} A main outcome of our work, from a mathematical perspective, is our proof of Theorem~\ref{thm911},
which is its first complete combinatorial proof.
We have devised a first attempt at a combinatorial proof for the gap (Lemma~\ref{FBPelim}), described in the conference paper~\cite{ScalingIsabelle}.
That proof, which involved a complex graphical construction, was later simplified by Vygen~\cite{vygenemailproof}, who proposed a proof that depends on Eulerian circuits, which we then further simplified to one main insight: by adding a fresh edge, we can make the structure regular, i.e.\ by making the set of edges form a set of disjoint cycles instead of cycles and one path.
We use a similar approach of regularisation by adding an edge to prove the decomposition theorem for max-flows.
Another less important and rather technical insight that came out of our formalisation is that we showed that~\ref{P1} holds, even for problems with finite capacities.
This is unlike Korte and Vygen's as well as Plotkin and Tardos's proof of this claim~\cite{PlotkinTardosDualNetworkSimplex}, which are the only two detailed proofs of this claim.
This highlights the potential role of formalising deep proofs in filling gaps as well as in the generation of new proofs and/or insights.
There were also other non-trivial claims in the main textbook we used as reference~\cite{KorteVygenOptimisation}, for which no proof is given.
E.g.\ it is claimed that the accumulated augmentations through a forest edge are below $2 (n-1) \gamma$.
We could not formalise the textbook's proof as it also had gaps, and we devised Invariants~\ref{invara1} and~\ref{invara2} to be able to prove it.
Other gaps like the proof of Lemma~\ref{phiA}, which is merely asserted in the literature in a non-formalisable way, had to be closed for the Properties~\ref{P51} and \ref{P52} which both involve the change in $\Phi$.
We also detailed a few technicalities for Theorem~\ref{SSPtotalcorrect}, and a few parts of the termination proof.%

\paragraph{Methodology} The formalisation of the algorithms presented here is around 36k lines of proof scripts (flow theory: 7k, min-cost flow algorithms and running time: 20k, instantiation of Orlin's algorithm and Bellman-Ford: 9k).
Our methodology is based on using Isabelle/HOL's \emph{locale}s to implement Wirth's notion of stepwise refinement~\cite{refinementWirth}, thereby compartmentalising different types of reasoning.
Many authors, e.g.\ Nipkow~\cite{AmortizedComplexity}, Maric~\cite{maricIsomorphism}, and Abdulaziz~\cite{graphAlgosBook},
used this locale-based implementation of refinement earlier.
In this approach, non-deterministic computation is handled by assuming the existence and properties of functions that compute non-deterministically, without assuming anything about the functions' implementation.
Our formalisation is one further example showing that this approach scales to proving the correctness of some of the most sophisticated algorithms.

A limitation is that the locales become verbose quite soon, especially when employing several instances of the same ADT locale.
This is a common issue when formalising an algorithm with many variables like Orlin's algorithm since each program variable comes with its own ADT.
The verbosity of the approach could be improved using records, as shown by Lammich and Lochbihler~\cite{collectionsFramework}.

We note that our focus here is more on formalising the mathematical argument behind the algorithm and less on obtaining the most efficient executable program.
A notable alternative implementation of refinement is Peter Lammich's~\cite{LammichRefinement} framework.
It has the advantage of generating fast imperative implementations, but comes at the cost of a steep learning curve and somewhat fragile automation.

We would, however, always start with the least understood (sub-)algorithms and then go on to the better understood ones.
This led us to using the locale-based approach in two ways: top-down and bottom-up.
For instance, our approach to specifying \tsendflow was top-down, where we assumed the existence of a procedure for finding shortest paths between sources and targets.
On the other hand, for specifying \torlinsfun we went bottom-up, where we first defined and proved the correctness of \tsendflow and \tmaintainforest,
and then started defining and proving the correctness of \torlinsfun, which calls both \tsendflow and \tmaintainforest.

We define procedures as recursive functions using Isabelle/HOL's function package \cite{functionPackageIsabelle}, program states as records, and invariants as predicates on program states.
We derive theorems characterising different properties of recursive functions, and supply them to Isabelle/HOL's classical reasoning and simplification.
In this approach, the automation handles proofs of invariants and monotone properties, e.g.\ the growth of the forest or the changes in $\Phi$.
Again, one can use other approaches to model algorithms and automate reasoning about them, like using monads~\cite{lammichMonads} or while combinators~\cite{nipkowWhileCombinator}.
Both of those approaches allow for greater automation, which is particularly useful for reasoning about low-level implementations.
However, those two approaches are not as useful where the majority of the correctness argument lies in background mathematical proofs,
as in the case of Orlin's algorithm, because they add at least one layer of definitions between the theorem prover's basic logic, i.e.\ a monad, and the algorithm's model.

A last noteworthy methodological point is the avoidance of code replication.
This happens with most approaches to stepwise refinement, e.g.\ Lammich's framework~\cite{LammichRefinement} or Nipkow's While-combinator~\cite{nipkowWhileCombinator}.
In those approaches, one defines an algorithm a number of times, where each time a higher level of detail is added, e.g.\ at one level one might use mathematical sets, at the next a list-based implementation, and at the third an optimised array-based implementation.
A somewhat extreme example is Nipkow's verification of a functional implementation of the Gale-Shapley algorithm, where the algorithm was defined 8 different times~\cite{galeshapleyIsabelle}, at an increasing level of detail.
Here, we specify the algorithm only once using ADTs~\cite{hoareRefinement} and then obtain executable code by instantiating those ADTs with purely functional implementations.
We adopted this approach, that was also used for matroid and greedoid algorithms~\cite{MatroidIsabelle}, from earlier work on verifying graph traversal algorithms~\cite{graphAlgosBook}.
The main advantage of this approach is that it avoids replication, but a risk is that starting directly with the ADTs could lead to too many low-level implementation-related details cluttering the verification process.
In our experience here, however, Isabelle's automation, with minimal setup, can discharge most implementation related proof goals, allowing the user to focus on more interesting goals at the algorithmic level.

\bibliographystyle{alphaurl}
\bibliography{long_paper,third_bib}

\newcommand{\etalchar}[1]{$^{#1}$}
\begin{thebibliography}{AAMR25}

\bibitem[AA24]{ScalingIsabelle}
Mohammad Abdulaziz and Thomas Ammer.
\newblock A {{Formal Analysis}} of {{Capacity Scaling Algorithms}} for
  {{Minimum Cost Flows}}.
\newblock In {\em The 15th {{International Conference}} on {{Interactive
  Theorem Proving}} ({{ITP}})}, 2024.
\newblock \href {https://doi.org/10.4230/LIPICS.ITP.2024.3}
  {\path{doi:10.4230/LIPICS.ITP.2024.3}}.

\bibitem[AA26]{abdulaziz2026formalanalysiscapacityscaling}
Mohammad Abdulaziz and Thomas Ammer.
\newblock {A Formal Analysis of Capacity Scaling Algorithms for Minimum-Cost
  Flows}, 2026.
\newblock \href {https://arxiv.org/abs/2602.03701v1}
  {\path{arXiv:2602.03701v1}}.

\bibitem[AAMR25]{MatroidIsabelle}
Mohammad Abdulaziz, Thomas Ammer, Shriya Meenakshisundaram, and Adem Rimpapa.
\newblock A {{Formal Analysis}} of {{Algorithms}} for {{Matroids}} and
  {{Greedoids}}.
\newblock In {\em The 16th {{International Conference}} on {{Interactive
  Theorem Proving}} ({{ITP}})}, 2025.
\newblock \href {https://doi.org/10.4230/LIPIcs.ITP.2025.2}
  {\path{doi:10.4230/LIPIcs.ITP.2025.2}}.

\bibitem[Abd25]{graphAlgosBook}
Mohammad Abdulaziz.
\newblock Graph {{Algorithms}}.
\newblock In {\em Functional {{Data Structures}} and {{Algorithms}}: {{A Proof
  Assistant Approach}}}, volume~67, pages 273--298. Association for Computing
  Machinery, New York, NY, USA, 1 edition, September 2025.
\newblock \href {https://doi.org/10.1145/3731369.3731394}
  {\path{doi:10.1145/3731369.3731394}}.

\bibitem[AM23]{RankingIsabelle}
Mohammad Abdulaziz and Christoph Madlener.
\newblock A {{Formal Analysis}} of {{RANKING}}.
\newblock In {\em The 14th {{Conference}} on {{Interactive Theorem Proving}}
  ({{ITP}})}, 2023.
\newblock \href {https://doi.org/10.4230/LIPIcs.ITP.2023.3}
  {\path{doi:10.4230/LIPIcs.ITP.2023.3}}.

\bibitem[AM26]{BlossomIsabelleJAR}
Mohammad Abdulaziz and Kurt Mehlhorn.
\newblock A {{Formal Correctness Proof}} of {{Edmonds}}' {{Blossom Shrinking
  Algorithm}}.
\newblock {\em The Journal of Automated Reasoning}, 70(1), 2026.
\newblock \href {https://doi.org/10.1007/s10817-026-09747-y}
  {\path{doi:10.1007/s10817-026-09747-y}}.

\bibitem[AMO93]{OrlinFlowBook}
Ravindra~K. Ahuja, Thomas~L. Magnanti, and James~B. Orlin.
\newblock {Network Flows: Theory, Algorithms, and Applications}.
\newblock 1993.
\newblock URL: \url{https://dl.acm.org/doi/10.5555/137406}.

\bibitem[AP19]{isabelleGreensThm}
Mohammad Abdulaziz and Lawrence~C. Paulson.
\newblock An {{Isabelle}}/{{HOL Formalisation}} of {{Green}}'s {{Theorem}}.
\newblock {\em Journal of Automated Reasoning}, 63(3):763--786, 2019.
\newblock \href {https://doi.org/10.1007/s10817-018-9495-z}
  {\path{doi:10.1007/s10817-018-9495-z}}.

\bibitem[Bal14]{LocalesBallarin}
Clemens Ballarin.
\newblock Locales: {{A Module System}} for {{Mathematical Theories}}.
\newblock {\em J. Autom. Reason.}, 52(2):123--153, 2014.
\newblock \href {https://doi.org/10.1007/s10817-013-9284-7}
  {\path{doi:10.1007/s10817-013-9284-7}}.

\bibitem[BCK{\etalchar{+}}23]{almostlinearmincostDeterministic}
Jan Van~Den Brand, Li~Chen, Rasmus Kyng, Yang~P. Liu, Richard Peng,
  Maximilian~Probst Gutenberg, Sushant Sachdeva, and Aaron Sidford.
\newblock A deterministic almost-linear time algorithm for minimum-cost flow.
\newblock In {\em 64th {{IEEE Annual Symposium}} on the {{Foundations}} of
  {{Computer Science}} ({{FOCS}})}, 2023.
\newblock \href {https://doi.org/10.1109/FOCS57990.2023.00037}
  {\path{doi:10.1109/FOCS57990.2023.00037}}.

\bibitem[Bel58]{bellmanFordI}
Richard Bellman.
\newblock On a routing problem.
\newblock {\em Quarterly of Applied Mathematics}, pages 87--90, 1958.
\newblock \href {https://doi.org/10.1090/qam/102435}
  {\path{doi:10.1090/qam/102435}}.

\bibitem[BN02]{nipkowWhileCombinator}
Stefan Berghofer and Tobias Nipkow.
\newblock Executing {{Higher Order Logic}}.
\newblock In {\em Types for {{Proofs}} and {{Programs}}}, Berlin, Heidelberg,
  2002.
\newblock \href {https://doi.org/10.1007/3-540-45842-5_2}
  {\path{doi:10.1007/3-540-45842-5_2}}.

\bibitem[CKL{\etalchar{+}}22]{almostLinearTimeFlows}
Li~Chen, Rasmus Kyng, Yang~P. Liu, Richard Peng, Maximilian~Probst Gutenberg,
  and Sushant Sachdeva.
\newblock Maximum {{Flow}} and {{Minimum-Cost Flow}} in {{Almost-Linear Time}}.
\newblock In {\em 63rd {{IEEE Annual Symposium}} on {{Foundations}} of
  {{Computer Science}} ({{FOCS}})}, 2022.
\newblock \href {https://doi.org/10.1109/FOCS54457.2022.00064}
  {\path{doi:10.1109/FOCS54457.2022.00064}}.

\bibitem[CLRS09]{CLRS}
Thomas~H. Cormen, Charles~E. Leiserson, Ronald~L. Rivest, and Clifford Stein.
\newblock {\em Introduction to {{Algorithms}}}.
\newblock MIT Press, Cambridge, MA, USA, 3rd edition, July 2009.

\bibitem[DF57]{DantzigFulkersonMaxflowMincut}
G.~B. Dantzig and D.~R. Fulkerson.
\newblock {\em {On the Max-Flow Min-Cut Theorem of Networks}}, pages 215--222.
\newblock Princeton University Press, Princeton, 1957.
\newblock \href {https://doi.org/10.1515/9781400881987-013}
  {\path{doi:10.1515/9781400881987-013}}.

\bibitem[DPS17]{DuanScalingMatching}
Ran Duan, Seth Pettie, and Hsin-Hao Su.
\newblock Scaling {{Algorithms}} for {{Weighted Matching}} in {{General
  Graphs}}.
\newblock In {\em Proceedings of the {{Twenty-Eighth Annual ACM-SIAM
  Symposium}} on {{Discrete Algorithms}}, {{SODA}}}, 2017.
\newblock \href {https://doi.org/10.1137/1.9781611974782.50}
  {\path{doi:10.1137/1.9781611974782.50}}.

\bibitem[Edm65]{EdmondsWeightedMatching}
Jack Edmonds.
\newblock Maximum matching and a polyhedron with 0,1-vertices.
\newblock {\em Journal of Research of the National Bureau of Standards Section
  B Mathematics and Mathematical Physics}, 69(1 and 2):55--56, January 1965.
\newblock \href {https://doi.org/10.6028/jres.069B.013}
  {\path{doi:10.6028/jres.069B.013}}.

\bibitem[EK72]{EdmondsKarpScalingFlows}
Jack Edmonds and Richard~M. Karp.
\newblock Theoretical {{Improvements}} in {{Algorithmic Efficiency}} for
  {{Network Flow Problems}}.
\newblock {\em J. ACM}, 19(2):248--264, 1972.
\newblock \href {https://doi.org/10.1145/321694.321699}
  {\path{doi:10.1145/321694.321699}}.

\bibitem[EP24]{edmondsHypergraphs}
Chelsea Edmonds and Lawrence~C. Paulson.
\newblock Formal {{Probabilistic Methods}} for {{Combinatorial Structures}}
  using the {{Lov\'asz Local Lemma}}.
\newblock In {\em The 13th {{ACM SIGPLAN International Conference}} on
  {{Certified Programs}} and {{Proofs}} ({{CPP}})}, London UK, January 2024.
\newblock \href {https://doi.org/10.1145/3636501.3636946}
  {\path{doi:10.1145/3636501.3636946}}.

\bibitem[FF56]{FordFulkerson}
Lester~Randolph Ford and Delbert~R. Fulkerson.
\newblock Maximal flow through a network.
\newblock {\em Canadian Journal of Mathematics}, 8:399--404, 1956.
\newblock \href {https://doi.org/10.4153/CJM-1956-045-5}
  {\path{doi:10.4153/CJM-1956-045-5}}.

\bibitem[FF62]{FordFulkersonFlowsNetworks}
L.~R. Ford and D.~R. Fulkerson.
\newblock {\em {Flows in Networks}}.
\newblock Princeton University Press, 1962.
\newblock \href {https://doi.org/10.2307/jj.19783677}
  {\path{doi:10.2307/jj.19783677}}.

\bibitem[For56]{FordNetworkFLowTheory}
L.~R. Ford.
\newblock {\em {Network Flow Theory}}.
\newblock RAND Corporation, Santa Monica, CA, 1956.
\newblock URL: \url{https://www.rand.org/pubs/papers/P923.html}.

\bibitem[Gab17]{GabowMatchingScaling}
Harold~N. Gabow.
\newblock The {{Weighted Matching Approach}} to {{Maximum Cardinality
  Matching}}.
\newblock {\em Fundam. Informaticae}, 154(1-4):109--130, 2017.
\newblock \href {https://doi.org/10.3233/FI-2017-1555}
  {\path{doi:10.3233/FI-2017-1555}}.

\bibitem[Gal58]{GallaiDecomposition}
T.~Gallai.
\newblock Maximum-{{Minimum S\"atze}} \"uber {{Graphen}}.
\newblock {\em Acta Mathematica Academiae Scientiarum Hungarica},
  9(3):395--434, September 1958.
\newblock \href {https://doi.org/10.1007/BF02020271}
  {\path{doi:10.1007/BF02020271}}.

\bibitem[GT89]{GabowTarjanMatchingScaling}
Harold~N. Gabow and Robert~Endre Tarjan.
\newblock Faster {{Scaling Algorithms}} for {{Network Problems}}.
\newblock {\em SIAM J. Comput.}, 18(5):1013--1036, 1989.
\newblock \href {https://doi.org/10.1137/0218069} {\path{doi:10.1137/0218069}}.

\bibitem[GT91]{GabowTarjanScalingMatching}
Harold~N. Gabow and Robert~Endre Tarjan.
\newblock Faster {{Scaling Algorithms}} for {{General Graph-Matching
  Problems}}.
\newblock {\em J. ACM}, 38(4):815--853, 1991.
\newblock \href {https://doi.org/10.1145/115234.115366}
  {\path{doi:10.1145/115234.115366}}.

\bibitem[Hoa72]{hoareRefinement}
C.~A.~R. Hoare.
\newblock Proof of correctness of data representations.
\newblock {\em Acta Informatica}, 1(4):271--281, December 1972.
\newblock \href {https://doi.org/10.1007/BF00289507}
  {\path{doi:10.1007/BF00289507}}.

\bibitem[IT20]{poincarebendixson}
Fabian Immler and Yong~Kiam Tan.
\newblock The {{Poincar\'e-Bendixson}} theorem in {{Isabelle}}/{{HOL}}.
\newblock In {\em The 9th {{ACM SIGPLAN International Conference}} on
  {{Certified Programs}} and {{Proofs}} ({{CPP}})}, 2020.
\newblock \href {https://doi.org/10.1145/3372885.3373833}
  {\path{doi:10.1145/3372885.3373833}}.

\bibitem[Jew58]{jewellMinCostSSP}
William~S Jewell.
\newblock Optimal flow through networks.
\newblock {\em Operations Research}, 6(4):633--644, 1958.

\bibitem[Jo26]{leanmaxflowmincut}
Sunghyeon Jo.
\newblock Max-flow min-cut: a comprehensive proof process, 04 2026.
\newblock URL:
  \url{https://github.com/ainta/lean-algorithms/tree/main/combinatoric_optimization/max_flow_min_cut/MaxFlowMinCut}.

\bibitem[KHSA25]{cookLevinFramework}
Kevin Kappelmann, Fabian Huch, Lukas Stevens, and Mohammad Abdulaziz.
\newblock Proof-{{Producing Translation}} of {{Functional Programs}} into a
  {{Time}} \& {{Space Reasonable Model}}, March 2025.
\newblock \href {https://arxiv.org/abs/2503.02975} {\path{arXiv:2503.02975}},
  \href {https://doi.org/10.48550/arXiv.2503.02975}
  {\path{doi:10.48550/arXiv.2503.02975}}.

\bibitem[Kle67]{optCritMinCostFlows}
Morton Klein.
\newblock A {{Primal Method}} for {{Minimal Cost Flows}} with {{Applications}}
  to the {{Assignment}} and {{Transportation Problems}}.
\newblock {\em Management Science}, November 1967.
\newblock \href {https://doi.org/10.1287/mnsc.14.3.205}
  {\path{doi:10.1287/mnsc.14.3.205}}.

\bibitem[Kra09]{functionPackageIsabelle}
Alexander Krauss.
\newblock {\em Automating Recursive Definitions and Termination Proofs in
  Higher-Order Logic}.
\newblock PhD thesis, Technical University Munich, 2009.
\newblock URL: \url{https://mediatum.ub.tum.de/681651}.

\bibitem[KV12]{KorteVygenOptimisation}
Bernhard Korte and Jens Vygen.
\newblock {\em Combinatorial {{Optimization}}}, volume~21 of {\em Algorithms
  and {{Combinatorics}}}.
\newblock Springer Berlin Heidelberg, Berlin, Heidelberg, 2012.
\newblock \href {https://doi.org/10.1007/978-3-642-24488-9}
  {\path{doi:10.1007/978-3-642-24488-9}}.

\bibitem[Lam19]{LammichRefinement}
Peter Lammich.
\newblock Refinement to {{Imperative HOL}}.
\newblock {\em Journal of Automated Reasoning}, 62(4):481--503, 2019.
\newblock \href {https://doi.org/10.1007/s10817-017-9437-1}
  {\path{doi:10.1007/s10817-017-9437-1}}.

\bibitem[Lee05]{FordFulkersonMizar}
Gilbert Lee.
\newblock {Correctness of Ford-Fulkerson's Maximum Flow Algorithm}.
\newblock {\em Formalized Mathematics}, 13(2):305--314, 2005.

\bibitem[LL10]{collectionsFramework}
Peter Lammich and Andreas Lochbihler.
\newblock The {{Isabelle Collections Framework}}.
\newblock In {\em Interactive {{Theorem Proving}}}, Berlin, Heidelberg, 2010.
\newblock \href {https://doi.org/10.1007/978-3-642-14052-5_24}
  {\path{doi:10.1007/978-3-642-14052-5_24}}.

\bibitem[LS19]{LammichFlows}
Peter Lammich and S.~Reza Sefidgar.
\newblock Formalizing {{Network Flow Algorithms}}: {{A Refinement Approach}} in
  {{Isabelle}}/{{HOL}}.
\newblock {\em J. Autom. Reason.}, 62(2):261--280, 2019.
\newblock \href {https://doi.org/10.1007/s10817-017-9442-4}
  {\path{doi:10.1007/s10817-017-9442-4}}.

\bibitem[LT12]{lammichMonads}
Peter Lammich and Thomas Tuerk.
\newblock Applying {{Data Refinement}} for {{Monadic Programs}} to
  {{Hopcroft}}'s {{Algorithm}}.
\newblock In {\em Interactive {{Theorem Proving}}}, Berlin, Heidelberg, 2012.
\newblock \href {https://doi.org/10.1007/978-3-642-32347-8_12}
  {\path{doi:10.1007/978-3-642-32347-8_12}}.

\bibitem[Mar20]{maricIsomorphism}
Filip Mari{\'c}.
\newblock Verifying {{Farad\v zev-Read Type Isomorph-Free Exhaustive
  Generation}}.
\newblock In {\em Automated {{Reasoning}}}, Cham, 2020.
\newblock \href {https://doi.org/10.1007/978-3-030-51054-1_16}
  {\path{doi:10.1007/978-3-030-51054-1_16}}.

\bibitem[mat24]{mathlib}
Leanprover-community/mathlib4.
\newblock leanprover-community, November 2024.
\newblock URL: \url{https://github.com/leanprover-community/mathlib4}.

\bibitem[Moo59]{MooreShortestPath}
Edward~F. Moore.
\newblock {The Shortest Path through a Maze}.
\newblock In {\em Proc. Int. Symp. Switching Theory 1957, Part II}, pages
  285--292, 1959.

\bibitem[Nip15]{AmortizedComplexity}
Tobias Nipkow.
\newblock Amortized {{Complexity Verified}}.
\newblock In {\em The 6th {{International Conference}} on {{Interactive Theorem
  Proving}} ({{ITP}})}, 2015.
\newblock \href {https://doi.org/10.1007/978-3-319-22102-1_21}
  {\path{doi:10.1007/978-3-319-22102-1_21}}.

\bibitem[Nip24]{galeshapleyIsabelle}
Tobias Nipkow.
\newblock Gale-{{Shapley Verified}}.
\newblock {\em Journal of Automated Reasoning}, 68(2):12, 2024.
\newblock \href {https://doi.org/10.1007/s10817-024-09700-x}
  {\path{doi:10.1007/s10817-024-09700-x}}.

\bibitem[Nip25a]{ADTIsabelle}
Tobias Nipkow.
\newblock {Abstract Data Types}.
\newblock In {\em {Functional Data Structures and Algorithms: A Proof Assistant
  Approach}}, page 79–86. Association for Computing Machinery, New York, NY,
  USA, 1 edition, 2025.
\newblock \href {https://doi.org/10.1145/3731369.3731376}
  {\path{doi:10.1145/3731369.3731376}}.

\bibitem[Nip25b]{nipkowRunningTime}
Tobias Nipkow.
\newblock Basics.
\newblock In {\em Functional {{Data Structures}} and {{Algorithms}}: {{A Proof
  Assistant Approach}}}, volume~67, pages 1--10. Association for Computing
  Machinery, New York, NY, USA, 1 edition, September 2025.
\newblock \href {https://doi.org/10.1145/3731369.3731371}
  {\path{doi:10.1145/3731369.3731371}}.

\bibitem[Nip25c]{RBTIsabelle}
Tobias Nipkow.
\newblock Red--{{Black Trees}}.
\newblock In {\em Functional {{Data Structures}} and {{Algorithms}}: {{A Proof
  Assistant Approach}}}, volume~67, pages 97--106. Association for Computing
  Machinery, New York, NY, USA, 1 edition, September 2025.
\newblock \href {https://doi.org/10.1145/3731369.3731378}
  {\path{doi:10.1145/3731369.3731378}}.

\bibitem[Nos15]{noschinskiGraphLib}
Lars Noschinski.
\newblock A {{Graph Library}} for {{Isabelle}}.
\newblock {\em Math. Comput. Sci.}, 9(1):23--39, 2015.
\newblock \href {https://doi.org/10.1007/S11786-014-0183-Z}
  {\path{doi:10.1007/S11786-014-0183-Z}}.

\bibitem[OA92]{OrlinScalingAssignment}
James~B. Orlin and Ravindra~K. Ahuja.
\newblock New scaling algorithms for the assignment and minimum mean cycle
  problems.
\newblock {\em Math. Program.}, 54:41--56, 1992.
\newblock \href {https://doi.org/10.1007/BF01586040}
  {\path{doi:10.1007/BF01586040}}.

\bibitem[Ord56]{OrdenTranshipment}
Alex Orden.
\newblock {The Transhipment Problem}.
\newblock {\em {Manag. Sci.}}, 2(3):276--285, 1956.
\newblock \href {https://doi.org/10.1287/mnsc.2.3.276}
  {\path{doi:10.1287/mnsc.2.3.276}}.

\bibitem[Orl93]{OrlinScalingFlow}
James~B. Orlin.
\newblock A {{Faster Strongly Polynomial Minimum Cost Flow Algorithm}}.
\newblock {\em Operations Research}, 41(2):338--350, April 1993.
\newblock \href {https://doi.org/10.1287/opre.41.2.338}
  {\path{doi:10.1287/opre.41.2.338}}.

\bibitem[PT90]{PlotkinTardosDualNetworkSimplex}
Serge~A. Plotkin and \'{E}va Tardos.
\newblock {Improved Dual Network Simplex}.
\newblock In {\em Proc. {{First Annu}}. {{ACM-SIAM Symp}}. {{Discrete
  Algorithms SODA}}}, page 367–376. {SIAM}, 1990.

\bibitem[Sch03]{schrijverBook}
A.~Schrijver.
\newblock {\em Combinatorial {{Optimization}}: {{Polyhedra}} and
  {{Efficiency}}}.
\newblock Number~24 in Algorithms and Combinatorics. Springer, Berlin ; New
  York, 2003.

\bibitem[Vaz24]{MVproofthree}
Vijay~V. Vazirani.
\newblock A {{Theory}} of {{Alternating Paths}} and {{Blossoms}} from the
  {{Perspective}} of {{Minimum Length}}.
\newblock {\em Math. Oper. Res.}, 49(3):2009--2047, 2024.
\newblock \href {https://doi.org/10.1287/MOOR.2020.0388}
  {\path{doi:10.1287/MOOR.2020.0388}}.

\bibitem[vMN23]{hpcpl}
Floris {van Doorn}, Patrick Massot, and Oliver Nash.
\newblock Formalising the h-{{Principle}} and {{Sphere Eversion}}.
\newblock In {\em The 12th {{ACM SIGPLAN International Conference}} on
  {{Certified Programs}} and {{Proofs}} ({{CPP}})}, New York, NY, USA, January
  2023.
\newblock \href {https://doi.org/10.1145/3573105.3575688}
  {\path{doi:10.1145/3573105.3575688}}.

\bibitem[Vyg02]{VygenDualMinCost}
Jens Vygen.
\newblock {On Dual Minimum Cost Flow Algorithms}.
\newblock {\em {Mathematical Methods of Operations Research}}, 56(1):101--126,
  2002.
\newblock \href {https://doi.org/10.1007/s001860200202}
  {\path{doi:10.1007/s001860200202}}.

\bibitem[Vyg24]{vygenemailproof}
Jens Vygen.
\newblock Communication citation, March 2024.

\bibitem[WHN18]{wimmerMonadDP}
Simon Wimmer, Shuwei Hu, and Tobias Nipkow.
\newblock Verified {{Memoization}} and {{Dynamic Programming}}.
\newblock In {\em Interactive {{Theorem Proving}}}, Cham, 2018.
\newblock \href {https://doi.org/10.1007/978-3-319-94821-8_34}
  {\path{doi:10.1007/978-3-319-94821-8_34}}.

\bibitem[Wir71]{refinementWirth}
Niklaus Wirth.
\newblock Program {{Development}} by {{Stepwise Refinement}}.
\newblock {\em Commun. ACM}, 14(4):221--227, 1971.
\newblock \href {https://doi.org/10.1145/362575.362577}
  {\path{doi:10.1145/362575.362577}}.

\end{thebibliography}

\end{document}